\theoremstyle{plain}
\newtheorem{theorem}{Theorem}[section]
\newtheorem{lemma}[theorem]{Lemma}
\newtheorem{observation}[theorem]{Observation}
\theoremstyle{definition}
\newcommand{\poly}{\textnormal{poly}}
\newcommand{\pos}{\mathtt{pos}}
\newcommand{\triple}{\mathtt{TripleLog}}
\newcommand{\double}{\mathtt{DoubleLog}}
\newcommand{\hatpos}{\widehat{\pos}}
\newcommand{\iterated}{\mathtt{IteratedLog}}
\newcommand{\ygnote}[1]{{\color{red}[Yuval: #1]}}
\newcommand{\rounds}{\mathtt{rounds}}
\newcommand{\soundness}{\mathtt{soundErr}}
\newcommand{\completeness}{\mathtt{compErr}}
\newcommand{\size}{\mathtt{size}}
\renewcommand{\paragraph}{\smallskip\noindent\textbf}
\begin{document}
	
	\title{Distributed Interactive Proofs for Planarity with Log-Star Communication\thanks{This project is partially funded by the European Research Council (ERC) under the European Union’s Horizon 2020 research and innovation programme, grant agreement No.\ 949083}}
	\author{Yuval Gil\footnote{Reykjavik University. This research was done while the author was a postdoc at Weizmann Institute of Science. yuvalg@ru.is} \and Merav Parter \footnote{Weizmann Institute of Science. merav.parter@weizmann.ac.il}
	}
	\date{}
	
	\maketitle
	\begin{abstract}
	We provide new communication-efficient distributed interactive proofs for planarity. The notion of a \emph{distributed interactive proof (DIP)} was introduced by Kol, Oshman, and Saxena (PODC 2018). In a DIP, the \emph{prover} is a single centralized entity whose goal is to prove a certain claim regarding an input graph $G$. To do so, the prover communicates with a distributed \emph{verifier} that operates concurrently on all $n$ nodes of $G$. A DIP is measured by the amount of prover-verifier communication it requires. Namely, the goal is to design a DIP with a small number of interaction rounds and a small \emph{proof size}, i.e., a small amount of communication per round.

	In prior work, Naor, Parter, and Yogev (SODA 2020) presented a $3$-round DIP protocol for planarity with a proof size of $O(\log n)$. Later on, Feuilloley et al.\ (PODC 2020) showed that the same proof size can be accomplished with a non-interactive protocol. In a very recent work by Gil and Parter (DISC 2025), a $5$-round protocol with a proof size of $O(\log \log n)$ is presented for \emph{embedded planarity}, which is defined such that an embedding of the graph is given (e.g., each node holds a clockwise ordering of its incident edges) and the goal is to decide if it is a valid planar embedding. In addition, Gil and Parter presented a protocol with a proof size of $O(\log\log n+\log \Delta)$ for (non-embedded) planarity, where $\Delta$ is the maximum degree of the graph. 
	
	In this work, we design DIP protocols that significantly improve the communication bounds of Gil and Parter. Our main result is an $O(\log ^{*}n)$-round DIP protocol for embedded planarity and planarity with a proof size of $O(1)$ and $O(\lceil\log \Delta/\log ^{*}n\rceil)$, respectively. In fact, this result can be generalized as follows. For any $1\leq r\leq \log^{*}n$, there exists an $O(r)$-round protocol for embedded planarity and planarity with a proof size of $O(\log ^{(r)}n)$ and $O(\log ^{(r)}n+\log \Delta /r)$, respectively.\footnote{The function $\log ^{(r)}n$ is defined recursively such that $\log ^{(1)}n=\log n$ and $\log ^{(r)}n=\log (\log^{(r-1)}n)$.} As an important step towards our main result, we also provide a $3$-round DIP protocol for embedded planarity and planarity with a proof size of $O(\log \log n)$ and $O(\log \log n+\log \Delta)$, respectively. This improves the round complexity of the protocol of Gil and Parter from $5$ to $3$ while maintaining the same proof size.

	One of the tools that we develop in order to obtain the main result is a novel \emph{self-reduction} for a task in which two bitstrings are encoded in a distributed manner and we wish to test whether they are equal. Specifically, we show that solving this equality task can be reduced to solving a constant number of equality tasks on exponentially smaller instances. This self-reduction only requires a constant number of interaction rounds. We believe that this self-reduction could be of independent interest.

\end{abstract}

	\thispagestyle{empty}
	\clearpage
	\thispagestyle{empty}
	\tableofcontents
	\clearpage
	\setcounter{page}{1}
	\section{Introduction}\label{section:introduction}
The concept of interactive proofs, introduced by Goldwasser, Micali, and Rackoff \cite{GoldwasserMR89} and Babai \cite{Babai85}, generalizes nondeterminism by allowing a computationally weak verifier to interact with a powerful (yet unreliable) prover. This paradigm has since become fundamental in various domains such as cryptography and complexity theory. 

In recent years, the increasing prevalence of distributed systems naturally leads to the question of how such proof systems can be adapted to scenarios where the computation is spread across a large-scale communication network. This motivated the introduction of \emph{distributed interactive proofs (DIPs)} by Kol, Oshman, and Saxena \cite{kol2018interactive}. In the DIP setting, the verifier consists of $n$ nodes connected by a communication graph $G$. The prover is a single centralized entity that can see the entire graph and generates its proof by means of a back-and-forth interaction with all nodes. The main complexity measures of a DIP protocol are the number of interaction rounds and the proof size, i.e., the size of the largest message sent during the protocol.

This paper focuses on the task of \emph{planarity}, i.e., deciding if a given graph is planar.
Due to the inherent global nature of planarity testing \cite{GhaffariH16}, it is only natural to explore the power that can be gained from interacting with a global prover in the DIP setting. The distributed verification of planarity has recently garnered significant attention. 
The first non-trivial DIP result for planarity was given by Naor, Parter, and Yogev \cite{NaorPY20} in the form of a $3$-round protocol with $O(\log n )$ proof size.\footnote{The protocol of \cite{NaorPY20} is part of a more general compiler that transforms an $O(n)$-time computation in the RAM model into a $3$-round DIP with a proof size of $O(\log n )$. The protocol for planarity then follows from plugging in a linear-time planarity testing algorithm (e.g., \cite{hopcroft1974efficient}).} Soon after, Feuilloley et al. \cite{FeuilloleyFMRRT21,FeuilloleyF0RRT23} showed that the same proof size can be achieved for the planarity task without interaction or randomization, i.e., they provided a \emph{proof labeling scheme} for planarity with a proof size of $O(\log n)$. This result is complemented by a matching $\Omega(\log n)$ lower bound for planarity even in graphs of maximum degree $3$. The same lower bound was later shown to apply also when the verifier is randomized \cite{GilP2025}.

Following the work of Feuilloley et al., the primary remaining question was whether a sub-logarithmic proof size is possible if interaction is allowed. In a very recent work \cite{GilP2025}, Gil and Parter provided a positive answer to this question. In particular, it is shown that there is a $5$-round DIP for planarity with a proof size of $O(\log \log n+\log \Delta)$ in graphs of maximum degree $\Delta$. In addition, a $5$-round DIP with a proof size of $O(\log \log n)$ is provided for embedded planarity, outerplanarity, series-parallel graphs, and graphs of treewidth $2$, where the embedded planarity task is defined such that a graph drawing is given in a distributed manner and the goal is to verify if it is a valid planar embedding. 

The protocols of \cite{GilP2025} rely on an efficient solution to the \emph{LR-sorting} task introduced in that paper. In LR-sorting, all nodes are embedded on an oriented path $H\subseteq G$ and the goal is for each node to distinguish between its left and right $G$-neighbors (see Section \ref{section:preliminaries} for the concrete definition). In the current paper, we provide protocols with improved round/communication complexity for LR-sorting in planar graphs which directly leads to improved protocols for all graph families considered in \cite{GilP2025}. In particular, a surprising conclusion that arises from these protocols is that the task of deciding planarity and related graph families can be solved by a DIP that uses only $O(\log ^{*}n)$ bits of communication per node (with the asterisk that for the planarity task, a $\log \Delta$ term is added to the total communication). This is in stark contrast to the $\Omega(\log n)$ bits necessary for the non-interactive case.

\subsection{Model}\label{section:model} 
We consider the notion of a \emph{distributed interactive proof (DIP)} as introduced in \cite{kol2018interactive}. In the DIP setting, instances are pairs $(G,\mathcal{I})$ consisting of a graph $G=(V,E)$ and a \emph{local input assignment} $\mathcal{I}:V\to\{0,1\}^{*}$. The instances are taken from some universe $\mathcal{U}$ and the goal is to distinguish between \emph{yes-instances} that come from a yes-family $\mathcal{F}_{Y}\subset \mathcal{U}$ and \emph{no-instances} that come from a no-family $\mathcal{F}_{N}=\mathcal{U}-\mathcal{F}_{Y}$. A DIP is defined as an interactive protocol between a distributed \emph{verifier} operating concurrently at all nodes of $G$ and a centralized \emph{prover} which has access to the entire instance.  

In a DIP protocol, the prover and verifier communicate back and forth in \emph{rounds}. In each prover-interaction round $i$, the prover communicates with the verifier at each node $v\in V$ by sending a message $\mu_{v}^{i}\in \{0,1\}^{*}$. We sometimes refer to the messages sent by the prover as \emph{labels}. In a verifier-interaction round $i$, the verifier at each node $v\in V$ communicates with the prover by drawing and sending a random bitstring $\rho_{v}^{i}\in \{0,1\}^{*}$. Notice that in particular, the verifier cannot hide any random bits from the prover, i.e., the protocols are \emph{public-coin}. 

The interaction ends with a prover-interaction round, after which the verifier at each node $v\in V$ computes a local yes/no output based on: (1) the random bitstrings $\rho_{v}^{i}$ drawn by $v$ throughout the protocol; (2) the labels $\mu_{v}^{i}$ assigned to $v$ by the prover throughout the protocol; and (3) the labels $\mu_{u}^{i}$ assigned to $v$'s neighbors $u\in N(v)$ by the prover throughout the protocol. We say that the verifier \emph{accepts} the instance if all nodes output `yes', and that the verifier \emph{rejects} the instance if at least one node outputs `no'.

As standard, the correctness of a proof system is defined by \emph{completeness} and \emph{soundness} requirements. The completeness requirement asks that if $(G,\mathcal{I})\in \mathcal{F}_{Y}$, then there exists an \emph{honest} prover causing the verifier to accept the instance; whereas the soundness requirement asks that if $(G,\mathcal{I})\in\mathcal{F}_{N}$, then the verifier rejects the instance for any prover. In the DIP setting, the correctness requirements are relaxed to allow a probabilistic error. The complexity of a DIP protocol is measured by means of its communication bounds. Namely, the objective is to design protocols with a small number of interaction rounds and a small \emph{proof size} which is defined as the maximum size of a message exchanged between the honest prover and the verifier during the protocol. For a DIP protocol $\Pi$, let $\rounds(\Pi)$ denote the number of interaction rounds in $\Pi$; let $\size(\Pi)$ denote the proof size of $\Pi$; let $\completeness(\Pi)$ denote the completeness error of $\Pi$; and let $\soundness(\Pi)$ denote the soundness error of $\Pi$. We say that $\Pi$ has \emph{perfect completeness} if $\completeness(\Pi)=0$. 

\subsection{Our Results}\label{section:results}
We present improved DIP protocols for deciding if a given graph is planar. In that context, we consider two tasks, namely \emph{planarity} and \emph{embedded planarity}. Embedded planarity is defined such that each node receives as local input a clockwise orientation of its edges, and the goal is to decide whether these orientations induce a valid planar embedding. In planarity, the nodes do not receive any local input and the goal is to decide if the given graph is planar. 
We start by reducing the number of rounds obtained in \cite{GilP2025} from 5 to 3, while keeping the same proof size. 
\begin{theorem}\label{theorem:results-three-planarity}
	There exist $3$-round DIP protocols for embedded planarity and planarity with proof sizes of $O(\log \log n)$ and $O(\log \log n+\log \Delta)$, respectively. Both protocols have perfect completeness and a soundness error of $O(1/\log n)$.\footnote{In all our protocols, a soundness error of $\epsilon$ can be reduced to an arbitrarily small $\epsilon^{c}$ by means of standard parallel $c$-repetition.}
\end{theorem}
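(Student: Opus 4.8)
The plan is to start from the $5$-round protocol of Gil and Parter~\cite{GilP2025} and to shave two interaction rounds without inflating the proof size. Recall that~\cite{GilP2025} reduces, in $O(1)$ rounds, both tasks to an LR-sorting task on a prover-chosen oriented path $H\subseteq G$: the prover assigns each node $v$ a position $\pos(v)$, and the verifier must check that $\pos$ is a bijection onto $\{1,\dots,|V|\}$ that is consistent with $H$, after which every node can decide, for each incident $G$-edge, whether its other endpoint lies to its left or to its right. In the non-embedded case this is preceded by a prover-supplied certificate of a planar rotation system of $G$ -- one label per edge, recording that edge's rank in the cyclic edge-order at each of its two endpoints -- which is the only source of the additional $O(\log\Delta)$ term; everything else is independent of $\Delta$. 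Hence it suffices to implement the consistency test that forms the quantitative bottleneck of the LR-sorting protocol of~\cite{GilP2025} -- which can be cast as a randomized equality test between two $\Theta(n\log n)$-bit strings laid out over $V$ under two different orderings (by identifier and by path position) -- by a $3$-round DIP with perfect completeness, proof size $O(\log\log n)$, and soundness error $O(1/\log n)$; substituting this into the reduction of~\cite{GilP2025} then yields both parts of the theorem.

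For the equality test itself the obstruction is that a single-shot fingerprint of an $N$-bit distributed string requires $\Omega(\log N)$ communication (in the same spirit as the $\Omega(\log n)$ lower bound for non-interactive planarity~\cite{FeuilloleyF0RRT23}), so one cannot drop below $\log n$ without interaction. The remedy I would use is one application of an equality \emph{self-reduction}: with $O(1)$ rounds of interaction, testing equality of $\Theta(n)$-scale strings is reduced to testing equality on a constant number of instances of size $\poly(\log n)$ -- exponentially smaller -- and such reduced instances are then handled directly by prime (or polynomial) fingerprinting, which for $\poly(\log n)$-bit strings costs only $O(\log\log n)$ bits and $O(1)$ rounds. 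The core of the theorem is that these two $O(1)$-round components pipeline into exactly three rounds: a first prover round committing all auxiliary data of the self-reduction together with the $O(\log\log n)$-bit positional/structural certificates; a single verifier round that simultaneously supplies the randomness driving the self-reduction and the random evaluation point(s) of the fingerprints; and a final prover round revealing the fingerprint residues and the few remaining certificates. Throughout, one must charge every message $O(\log\log n)$ bits rather than $O(\log n)$, which forces the self-reduction's output, and all aggregated quantities it produces along a spanning structure, to be of size $\poly(\log n)$.

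The remaining verification is routine. Completeness is perfect: an honest prover supplies a correct rotation system (in the non-embedded case), a correct position function, correct aggregation certificates, and correct fingerprint residues, so all local checks pass. For soundness one takes a union bound over the $O(1)$ reduced equality instances and the $O(1)$ steps of the self-reduction; each fingerprint test fails to detect an inequality with probability at most (number of primes dividing a $\poly(\log n)$-bit difference) $/$ (number of candidate primes), which is $O(\poly(\log n)\cdot\log\log n/\log^{c} n)=O(1/\log n)$ once the prime is drawn from a range of size $\log^{c}n$ for a suitably large constant $c$ (and a $\log^c n$-sized prime is still only $O(\log\log n)$ bits). The hard part will be designing the self-reduction so that it remains sound when its commitments are \emph{interleaved} with, rather than strictly sequenced before, the base-case fingerprinting: since the protocol is public-coin the prover sees the verifier's randomness before sending its second message, and one must ensure that the data it can then fabricate breaks neither the reduction nor the fingerprint checks. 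A secondary difficulty is keeping the reduction's auxiliary data, and every aggregated value it produces, within $\poly(\log n)$ bits, so that the $O(\log\log n)$ proof-size bound holds for all messages and not merely for the final residues.
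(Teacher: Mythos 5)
Your high-level plan (reduce to LR-sorting, then implement the bottleneck check in three rounds with $O(\log\log n)$-bit messages) matches the paper's, but the technical core of your proposal diverges from what the paper does, and in a way that leaves a genuine gap. The paper's $3$-round protocol ($\double$, Section~\ref{section:double-log}) does \emph{not} use any self-reduction for equality; the self-reduction is introduced only later, for the $O(\log^*n)$-round protocol. Instead, the paper first partitions the Hamiltonian path into blocks of size $\log n$, and then uses the $5$-degeneracy of the contracted planar graph to charge every outer-block edge to one of its two blocks so that each block is accountable for at most a constant number of neighboring blocks. This turns the outer-block verification into $O(1)$ equality instances \emph{per block}, each between two $\log n$-bit strings held by adjacent blocks, and each such instance is then fingerprinted directly over a field of size $\Theta(\log^2 n)$, costing $O(\log\log n)$ bits with soundness error $O(1/\log n)$. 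This block-partition-plus-degeneracy step is the key combinatorial idea of the theorem, and it is absent from your proposal.

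The step in your proposal that would fail is the formulation of the bottleneck as ``a randomized equality test between two $\Theta(n\log n)$-bit strings laid out over $V$ under two different orderings (by identifier and by path position),'' to be handled by one application of an equality self-reduction. First, the self-reduction (as in Lemma~\ref{lemma:self-reduction}) applies to two strings written on two node-disjoint paths joined by a single marked edge, with large gaps between consecutive bits to make room for the auxiliary arithmetic certificates; an equality between two \emph{permuted} layouts of the same string over the same $n$ nodes is not an instance of this task, and checking it distributedly is exactly the permutation-consistency problem that forces the $\Theta(\log n)$ overhead in the Naor--Parter--Yogev compiler. Second, even granting a self-reduction to $\poly(\log n)$-size instances, you never explain why the reduced instances end up between \emph{adjacent} nodes so that the final fingerprints can be compared locally; in the paper this locality is exactly what the block partition and the accountability assignment buy. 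Your own closing caveats (soundness under interleaving, keeping all aggregated values at $\poly(\log n)$ bits) flag real difficulties, but the resolution the paper uses is to sidestep them entirely by never forming a global equality instance in the first place.
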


\noindent Our key result is an $O(\log ^{*}n)$-round protocol with constant proof size (in each round). This bound is achieved by a new technique of self-reduction for distributed equality tasks, which should be of independent interest (see Section \ref{section:self-reduction} for the details of the self-reduction).

\begin{theorem}\label{theorem:results-logstar-planarity}
	There exist $O(\log ^{*}n)$-round DIP protocols for embedded planarity and planarity with proof sizes of $O(1)$ and $O(\lceil\log \Delta/\log ^{*}n\rceil)$, respectively. Both protocols have perfect completeness and a soundness error of $1/4$.
\end{theorem}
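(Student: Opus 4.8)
The plan is to prove the more general tradeoff stated right after the theorem — for every $1\le r\le \log^{*}n$ an $O(r)$-round DIP for embedded planarity (resp.\ planarity) with proof size $O(\log^{(r)}n)$ (resp.\ $O(\log^{(r)}n+\log\Delta/r)$) — and then instantiate $r=\log^{*}n$, for which $\log^{(r)}n=O(1)$. The construction has three parts: an $O(1)$-round reduction from planarity to a single distributed equality task on strings of length $\Theta(\log n)$; an iterated application of the self-reduction of Section~\ref{section:self-reduction} that shrinks this task down to an equality task on strings of length $O(\log^{(r)}n)$ in $O(r)$ rounds; and a one-round disposal of the resulting small instance.

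\textbf{Reduction to equality.} Reusing the machinery behind Theorem~\ref{theorem:results-three-planarity} (the LR-sorting framework of \cite{GilP2025}), I would first show that in $O(1)$ rounds and with $O(1)$ proof size, deciding embedded planarity reduces to the following task: two bitstrings $A,B$ of length $\Theta(\log n)$ are encoded in a distributed fashion over $G$, and the verifier must decide whether $A=B$. All the remaining verifier checks — local validity of the rotation at each node, the structure of the auxiliary paths used by LR-sorting, etc.\ — should be arranged to cost only $O(1)$ bits per round, so that all the communication is concentrated in the equality instance. For (non-embedded) planarity the prover must additionally commit to a rotation system, i.e.\ $O(\log\Delta)$ further bits per node; I would have it reveal these bits in chunks of size $O(\log\Delta/r)$ over the $O(r)$ rounds of the protocol and fold the induced validity conditions into the same equality task, which accounts for the extra additive $O(\lceil\log\Delta/r\rceil)$ term.

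\textbf{Iterating the self-reduction.} By the self-reduction, an equality task on length-$N$ distributed strings reduces, in $O(1)$ rounds with $O(1)$ proof size, to a constant number of equality tasks on length-$O(\log N)$ strings; concatenating these constantly many instances yields a single equality task that is still of length $O(\log N)$, so the self-reduction applies again to its own output. Starting from the length-$\Theta(\log n)$ instance of the previous step and applying the self-reduction $r-1$ times, we reach — after $O(r)$ rounds and $O(1)$ proof size per round — a single equality task on length-$O(\log^{(r)}n)$ distributed strings. This last instance is disposed of directly: the honest prover broadcasts the full claimed $O(\log^{(r)}n)$-bit string to the relevant nodes, each of which checks that its locally encoded bit matches and that neighbouring copies coincide; this is one round, proof size $O(\log^{(r)}n)$, and perfect soundness. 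Summing the three parts gives round complexity $O(r)$ and proof size $O(\log^{(r)}n)$ for embedded planarity and $O(\log^{(r)}n+\log\Delta/r)$ for planarity; the case $r=\log^{*}n$ is the theorem, and the low end of the tradeoff is consistent with Theorem~\ref{theorem:results-three-planarity}.

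\textbf{Correctness and the main obstacle.} Perfect completeness is immediate, since the honest prover can always supply truthful fingerprints, intermediate residues, and the final string. The only source of error is the fingerprinting inside the self-reduction: at a level whose instance has length $N$, a random prime of $\Theta(\log N)$ bits fails to separate two distinct strings with probability at most $\poly(N)/2^{\Theta(\log N)}\le 1/N^{2}$ for an appropriate choice of constants. The $O(\log^{*}n)$ levels split into the ``large'' ones, with $N$ exceeding a sufficiently large absolute constant $C$, whose error probabilities sum to $O(1/C^{2})$ because the lengths along the recursion form a tower of exponentials, and the $O(1)$ ``small'' ones, with $N\le C$, on which I apply $O(1)$-fold parallel repetition — harmless for the proof size, since each such level already costs only $O(1)$ bits — to drive their combined error below $1/8$; choosing $C$ large enough makes the total at most $1/4$. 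The step I expect to be the crux is making the self-reduction genuinely composable in the second part: verifying that the prover's claimed residue of the distributed string $A$ modulo $p$ is correct is itself a distributed computation along a path, and one must design the prover's annotations (running residues and carries) so that checking them reduces, in $O(1)$ rounds and $O(1)$ proof size, to equality tasks of exactly the same shape on exponentially shorter strings. This is precisely the content of the self-reduction of Section~\ref{section:self-reduction}, which I would invoke here as a black box.
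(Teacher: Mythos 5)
Your high-level architecture (reduce to distributed equality, iterate the self-reduction $\Theta(\log^* n)$ times, dispose of the constant-size remnant directly, amortize the $O(\log\Delta)$ rotation bits over the rounds, and union-bound the per-level fingerprinting errors, which decay geometrically along the tower) matches the paper. However, there are two genuine gaps, both at points the paper identifies as the actual difficulties.

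First, the composability step does not work as you describe it. One application of the self-reduction to an equality instance on a pair of paths does not produce ``a constant number of equality tasks on length-$O(\log N)$ strings'' that can be concatenated into a single instance of the same shape: it produces a partition of \emph{each} path into $\Theta(N/\log N)$ segments, with a constant number of equality instances per segment, and these instances are defined between \emph{different pairs} of segments (adjacent segments within a cluster, the boundary segments of consecutive clusters, and the two segments straddling the marked edge $e^{*}$). They cannot be merged into one equality instance on one pair of node-disjoint paths, so the self-reduction does not ``apply again to its own output'' in the way you claim. Iterating na\"ively, each segment accumulates a constant-factor more bitstrings per level, i.e.\ $2^{\Theta(\log^* n)}$ instances per segment after all levels — exactly the blowup the paper flags. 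The fix in the paper is an interleaved ``batch'' encoding that keeps the number of bitstrings per segment constant at every level, at the cost that equalities are no longer between adjacent segments; this in turn forces a propagation mechanism (the $D(\cdot)$ values, verified bottom-up across layers) so that the deferred equalities can still be tested with $O(1)$ bits per layer. Your proposal has no substitute for this machinery.

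Second, the initial reduction does not concentrate everything into equality instances of length $\Theta(\log n)$ in $O(1)$ rounds and $O(1)$ bits. The block-partition reduction (Lemma~\ref{lemma:outer-block-to-equality}) only handles \emph{outer-block} edges; the inner-block edges require either $\Theta(\log\log n)$-bit random block identifiers (as in $\double$) or a recursive invocation of the whole protocol inside each block. The na\"ive recursion costs $\Theta((\log^* n)^2)$ rounds, and the paper needs the piggybacking argument — running the $j$-th level of recursion in parallel with the $j$-th application of the self-reduction — to get back to $O(\log^* n)$ rounds. Your proposal treats the non-equality checks as a constant-cost afterthought, which silently assumes away this issue.
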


\noindent In fact, Theorem \ref{theorem:results-logstar-planarity} is a special case of a more general trade-off between rounds and proof size. The details of this trade-off are stated in the following theorem.
\begin{theorem}\label{theorem:results-trade-off}
	For any $1\leq d\leq \log^{*}n$, there exist $O(d)$-round DIP protocols for embedded planarity and planarity with proof sizes of $O(\log^{(d+1)}n)$ and $O(\log^{(d+1)}n+\log \Delta/d)$, respectively. Both protocols have perfect completeness and a soundness error of $O(1/\log^{(d)}n)$.
\end{theorem}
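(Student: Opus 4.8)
The plan is to prove Theorem~\ref{theorem:results-trade-off} by interpolating between the $3$-round protocol of Theorem~\ref{theorem:results-three-planarity} and the $O(\log^{*}n)$-round protocol of Theorem~\ref{theorem:results-logstar-planarity}, using the self-reduction advertised in the abstract as the single knob that controls the round/communication trade-off. Both of those extreme results go through the LR-sorting task: once LR-sorting is solved on the relevant planar subgraph, embedded planarity (and the other families) reduce to it, and the only place in that reduction where the proof size exceeds $O(1)$ is a distributed \emph{equality test} between two bitstrings of length $\Theta(\log n)$ that encode (pieces of) the path positions the prover assigns. So the core of the proof is a parameterized way of solving this equality task, and everything else is inherited from the earlier protocols.

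Concretely, fix $1\leq d\leq \log^{*}n$. Start from the equality instance on $\ell_{1}=\Theta(\log n)$-bit strings produced by the LR-sorting reduction. Apply the self-reduction $d-1$ times; the $i$-th application uses $O(1)$ interaction rounds and replaces the current family of equality instances on $\Theta(\ell_{i})$-bit strings by a family that is larger by only a constant factor, on strings of length $\ell_{i+1}=\Theta(\log \ell_{i})=\Theta(\log^{(i+1)}n)$. Two points must be handled: (i) the self-reduction is arranged so that each node takes part in only $O(1)$ of the spawned instances, so that running all of them in parallel costs only $O(1)$ extra bits per round; and (ii) the self-reduction is essentially error-free, so that iterating it $d-1$ times does not erode soundness. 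After the last application we are left with equality instances on $\Theta(\log^{(d)}n)$-bit strings, which I would settle by a direct fingerprint with a random prime of $O(\log^{(d+1)}n)$ bits, yielding perfect completeness and soundness error $O(1/\log^{(d)}n)$. Tallying up: the number of rounds is $(d-1)\cdot O(1)+O(1)=O(d)$; the proof size is dominated by the final fingerprint and is $O(\log^{(d+1)}n)$; and the soundness error is that of the final fingerprint. Taking $d=\log^{*}n$ recovers Theorem~\ref{theorem:results-logstar-planarity} (the residual $O(1)$-bit instances being handled trivially, with the error driven down to $1/4$ via the repetition remark), while $d=1$ recovers the $O(\log\log n)$ bound of Theorem~\ref{theorem:results-three-planarity}.

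For (non-embedded) planarity I would reuse the standard reduction to embedded planarity: the prover supplies a combinatorial embedding (a DFS-traced rotation system), and the verification boils down, on top of the embedded-planarity machinery, to a bounded number of comparisons between $\Theta(\log \Delta)$-bit edge ranks. Each such \emph{greater-than} comparison admits a classical round-efficient protocol that trades rounds for communication — revealing the two numbers digit by digit in base $2^{\Theta(\log\Delta/d)}$ and letting the prover point to the first discrepant digit — so, spread over the $O(d)$ rounds we already use, this contributes only $O(\log\Delta/d)$ bits per round, giving the claimed $O(\log^{(d+1)}n+\log\Delta/d)$ proof size without changing the asymptotic round count or the soundness.

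I expect the main obstacle to be establishing the self-reduction with exactly the guarantees above: an equality test on $L$-bit strings that reduces, in a constant number of rounds and with essentially no loss in soundness, to a constant number of equality tests on $\Theta(\log L)$-bit strings, while keeping the combinatorial footprint of the spawned instances light enough that no node is asked to certify more than $O(1)$ of them at once. The natural route is a prover-assisted, hashing-based reduction: have the prover commit, in a distributed fashion, to the residues of the two strings modulo a $\Theta(\log L)$-bit modulus of its choosing, together with a locally checkable certificate that these residues were computed correctly, and then re-express the correctness of that certificate as $O(1)$ further equality instances on $\Theta(\log L)$-bit strings so the recursion can proceed. Making this step genuinely lossless in soundness — rather than paying an extra $1/\poly(L)$ per level, which would survive the union bound over the $d$ levels only because the last level dominates the sum — is the delicate point, and it is where I would concentrate the effort. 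Everything downstream (the reductions from the various graph families to LR-sorting, the Euler-formula bookkeeping for the embedding, and the round-efficient greater-than protocol for the $\log\Delta$ term) is either standard or already available from Theorem~\ref{theorem:results-three-planarity}.
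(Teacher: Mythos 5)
Your overall route is the paper's route: reduce (embedded) planarity to LR-sorting, reduce the outer-block part of LR-sorting to distributed equality on $\Theta(\log n)$-bit strings, iterate the self-reduction $d$ times, finish with the direct fingerprinting protocol on $\Theta(\log^{(d)}n)$-bit strings, and absorb the $O(\log\Delta)$ embedding information by spreading it over the $O(d)$ rounds. Your soundness accounting is also right (and your worry about losslessness is a non-issue: the paper's self-reduction does lose $1/\ell$ per level, and the resulting sum $\sum_j 1/\log^{(j)}n$ is dominated by the final level's $O(1/\log^{(d)}n)$, exactly as you observe).

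There are, however, two concrete gaps. First, your premise that ``the only place where the proof size exceeds $O(1)$ is the equality test'' is false: the LR-sorting protocol must also certify the \emph{inner-block} edges, and in the base ($\double$) protocol this costs $\Theta(\log\log n)$ bits per node (random block identifiers plus within-block indices), independently of any equality instance. To push this below $O(\log\log n)$ the paper recursively invokes the entire protocol inside each block (on blocks of size roughly $\log n\cdot\mathrm{poly}(\log\log n)$), which done naively multiplies the round count by the recursion depth; the paper avoids this by piggybacking the $j$-th recursive invocation onto the construction of the $j$-th self-reduction layer. Without some version of this, your protocol either has proof size $\Omega(\log\log n)$ for all $d\ge 2$ or round complexity $\omega(d)$. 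Second, you assert as point (i) that the self-reduction can be ``arranged so that each node takes part in only $O(1)$ of the spawned instances,'' but this is precisely the technically delicate part: each application multiplies the number of instances per segment by a constant, so after $d$ levels a segment of size $O(\log^{(d)}n)$ would naively carry $2^{\Theta(d)}$ instances, which for $d$ close to $\log^{*}n$ exceeds the claimed proof size. The paper achieves the $O(1)$-instances-per-segment invariant by interleaving the segment sequences spawned by the different parent instances into batches, and this in turn destroys the property that equality instances live on adjacent segments; repairing that requires the additional machinery of propagated fingerprint values ($D(\cdot)$ assignments verified bottom-up across layers). Neither mechanism is present or implied in your proposal, so as written the argument does not yield the claimed bounds for super-constant $d$.
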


In addition to the above results, we are able to obtain protocols with similar communication and error bounds for outerplanar graphs, series-parallel graphs, and graphs of treewidth $2$.
\begin{theorem}\label{theorem:results-outerplanar}
	There exist DIP protocols for outerplanar graphs, series-parallel graphs, and graphs of treewidth $2$ with the same communication and correctness guarantees as the protocols for embedded planarity stated in Theorems \ref{theorem:results-three-planarity}, \ref{theorem:results-logstar-planarity}, and \ref{theorem:results-trade-off}. 
\end{theorem}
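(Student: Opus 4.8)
The plan is to obtain Theorem~\ref{theorem:results-outerplanar} as a black-box corollary of the reductions already present in \cite{GilP2025}: that work reduces membership in each of the three families to constantly many invocations of the LR-sorting and embedded-planarity primitives plus $O(1)$-bit local checks, so it suffices to plug in the improved primitives developed in this paper and verify that only a constant round/proof-size overhead is incurred. The first step is to recall the relevant structural facts — a graph has treewidth at most $2$ iff every biconnected component is series-parallel iff it is (blockwise) $K_4$-minor-free, and a graph is outerplanar iff it is $\{K_4,K_{2,3}\}$-minor-free iff it admits a planar embedding in which every vertex is incident to a single (outer) face — and then to isolate the non-local content of the three \cite{GilP2025} reductions, which consists entirely of (i) calls to the embedded-planarity protocol on an explicitly prover-supplied rotation system, (ii) a constant number of LR-sorting calls used to enforce consistency of a prover-supplied one-dimensional structure (a face-boundary walk, an ear, or a decomposition path), and (iii) purely local $O(1)$-bit checks. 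Substituting the protocols of Theorems~\ref{theorem:results-three-planarity}, \ref{theorem:results-logstar-planarity}, and \ref{theorem:results-trade-off} for (i) and the improved LR-sorting protocols for (ii) then yields the stated bounds.

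For \emph{outerplanarity} the reduction is the cleanest. Since no local input is given, the honest prover supplies a combinatorial embedding of $G$ as a rotation system, together with a marking — one or two bits at each node — of which corners of its rotation lie on the designated outer face. The verifier then (a) runs the embedded-planarity protocol of Theorems~\ref{theorem:results-three-planarity}, \ref{theorem:results-logstar-planarity}, or \ref{theorem:results-trade-off} to certify that the rotation system is a valid planar embedding; (b) certifies that the marked corners form a single face, which amounts to a pointer-chasing/face-tracing consistency check along the boundary walk of that face — exactly the oriented-path consistency that LR-sorting is designed to verify; and (c) checks locally that every node owns at least one marked corner. The one subtlety is that the outer boundary walk need not be a simple cycle in $G$ (cut vertices and bridges are traversed more than once), so the oriented path handed to LR-sorting lives on the corner/medial structure rather than on $G$ itself, with $O(1)$ prover bits per node pinning down the local correspondence between the virtual path and the node's incident edges; this bookkeeping is carried out in \cite{GilP2025}. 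The overhead over the embedded-planarity call is $O(1)$ rounds and $O(1)$ proof size, and the soundness degrades by only a constant factor, so the three embedded-planarity bounds carry over verbatim.

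For \emph{series-parallel graphs} and \emph{graphs of treewidth $2$} the two families agree on $2$-connected graphs and differ only in how trivial and non-$2$-connected instances are handled; \cite{GilP2025} reduces the general case to the $2$-connected one within the same communication budget, and for a $2$-connected block it certifies a standard series-parallel (nested-ear) decomposition. That certification again consists of a constant number of calls to the LR-sorting and embedded-planarity primitives of the theorems above — enforcing consistency of the decomposition's ear/path structure — together with $O(1)$-bit local checks, so substituting our improved primitives gives the claimed guarantees exactly as in the outerplanar case.

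The main obstacle, and the reason the result is not completely immediate, is the point already flagged: LR-sorting in our improved protocols assumes that the hosts are embedded on an oriented path that is a \emph{subgraph} of $G$, whereas the face-boundary walks and ear/decomposition paths that arise here are walks that may revisit vertices and edges. Turning these into legitimate LR-sorting instances requires running the subroutine on a virtual subdivided path and certifying, with $O(1)$ prover bits per node, a consistent mapping from the virtual path back into $G$; one must check that this neither inflates the proof size past $O(\log^{(d+1)}n)$ nor disturbs the soundness analysis of the equality self-reduction that underlies the $O(\log^{*}n)$-round protocol. This is precisely the construction already used for (non-embedded) planarity elsewhere in the paper, so once it is in place the remainder is a routine composition of the stated results.
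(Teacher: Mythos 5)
Your proposal is correct and matches the paper's approach: the paper proves this theorem in one line by composing the black-box reduction of Lemma~\ref{lemma:sorting-to-outerplanarity} (imported from \cite{GilP2025}, which turns any LR-sorting protocol for planar graphs into protocols for the three families with the guarantees of the embedded-planarity protocol) with the new LR-sorting protocols of Theorems~\ref{theorem:double-log}, \ref{theorem:iterated-log}, and \ref{theorem:general-iterated-log}. The internal structure of the \cite{GilP2025} reductions and the walk-versus-path bookkeeping you flag as the ``main obstacle'' are entirely absorbed into the statement of that imported lemma, so none of it needs to be re-verified here.
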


\paragraph{Open Problems.} 
The most intriguing open problem is in closing the gap between the tasks of embedded planarity and planarity. In the current solution for planarity, the prover provides each node with a clockwise orientation of its neighbors in a planar embedding, thus reducing the problem to embedded planarity. As shown in \cite{GilP2025}, the clockwise orientation can be encoded using $O(\log \Delta)$ bits for each node. By a standard counting argument, it is also easy to show that $\Omega(\log \Delta)$ bits are necessary for each node to learn its clockwise orientation. 

As the vast majority of planar graph algorithms (centralized and distributed) are based on precomputing a planar embedding, intuitively it is natural to think that providing an embedding would be necessary also in the DIP setting. This intuition fails, however, as exemplified by the notion of \emph{distributed zero-knowledge(DZK)} proofs introduced by Bick, Kol and Oshman \cite{BickKO22}. A DZK proof is a DIP that in addition to the standard correctness requirements, asks that the verifier ``learns nothing'' (including, e.g., a planar embedding) from the interaction. We note that in particular, \cite{BickKO22} provides a generic result that transforms any proof labeling scheme to a DZK proof. In the context of planarity, combining this general transformation with the proof labeling scheme of \cite{FeuilloleyFMRRT21} results in a DZK proof with a proof size of $\poly(\Delta,\log n)$. While the large proof size of the DZK proof makes it impractical for our use, its existence demonstrates that providing an embedding may not be essential for verifying planarity. 

Another interesting open problem is in optimizing the label size of a constant-round protocol for embedded planarity. While we show that it is possible to obtain a DIP that uses only a total of $O(\log^* n)$ bits of communication per node, this also requires $\Theta (\log ^{*}n)$ interaction rounds. It is open whether a similar communication bound can be achieved while restricting the DIP to a constant number of rounds.

\section{Preliminaries}\label{section:preliminaries}

\paragraph{LR-sorting.} 
The \emph{left-right sorting (LR-sorting)} task was introduced recently in \cite{GilP2025}. In LR-sorting, instances are pairs $(G,H)$ consisting of a directed graph $G=(V,E)$ and a directed Hamiltonian path $H$ of $G$. The directed graph $G$ is given such that each node can distinguish between its incoming and outgoing edges. The path $H$ is given such that each node $v\in V$ knows its incident edges in $H$. The goal in LR-sorting is to decide if $u$ appears before $v$ in $H$ for every directed edge $(u,v)\in E-H$. As noted in \cite{GilP2025}, an equivalent formulation of LR-sorting is the task of deciding if $G$ is a DAG (in which case the $H$-ordering is the unique topological sorting of $G$). See Figure \ref{figure:lr-sorting} for an example of an LR-sorting instance.

In \cite{GilP2025}, it is shown that the LR-sorting task can play an important role in obtaining protocols for planarity and planarity-related tasks. In particular, the next lemma follows from \cite[Lemmas 7.1,7.2]{GilP2025}.
\begin{lemma}[\cite{GilP2025}]\label{lemma:sorting-to-planarity}
	Suppose that $\Pi$ is a DIP protocol for LR-sorting in planar graphs. Then, there exists a DIP protocol $\Pi'$ for embedded planarity such that $\rounds(\Pi')=\max\{3,\rounds(\Pi)\}$; $\size(\Pi')=O(\size(\Pi))$; $\completeness(\Pi)'=\completeness(\Pi)$; and $\soundness(\Pi')\leq \soundness(\Pi)+2^{-\size(\Pi)}$. In addition, there is a protocol for planarity with the same rounds and correctness guarantees as $\Pi'$ and a proof size of $O(\size(\Pi)+\log \Delta/\rounds(\Pi'))$.
\end{lemma}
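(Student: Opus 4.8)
The plan is to reduce embedded planarity to LR-sorting by having the prover certify Euler's formula through a pair of interdigitating spanning trees. Recall that a graph $G$ equipped with a rotation system (a cyclic order of the incident edges at each vertex) induces a planar embedding if and only if the number $f$ of faces obtained by face-tracing satisfies $n-m+f=2$ together with connectivity of $G$; moreover face-tracing always yields $n-m+f\le 2$, so it suffices to certify the reverse inequality. By the interdigitating-tree theorem, if $T$ is a spanning tree of a connected $G$ then $\{e^{\star}:e\notin T\}$ is a spanning tree $T^{\star}$ of the dual graph $G^{\star}$ defined by the rotation system exactly when the embedding is planar, in which case $|E(T)|+|E(T^{\star})|=(n-1)+(f-1)=m$. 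Hence $\Pi'$ will have the prover designate a rooted spanning tree $T\subseteq E$ of $G$ together with the combinatorial data needed to view the duals of $E\setminus T$ as a rooted spanning tree of $G^{\star}$; the verifier then checks locally that $E(T)$ and $E\setminus T$ partition $E$ and that $|E(T)|=n-1$, and uses LR-sorting to certify that both claimed trees are genuine spanning trees. Connectivity of $G$ comes for free once $T$ is certified to be a connected spanning subgraph.

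To verify that $T$ is a spanning tree of $G$: the prover gives every non-root node a parent pointer along a $T$-edge (and one node is marked as the root). Then $|E(T)|=n-1$ is a local count, and $T$ is a spanning tree iff following parent pointers from every node terminates at the root, i.e.\ iff the parent relation is acyclic. To check acyclicity the prover additionally supplies a directed Hamiltonian path $H$ of an auxiliary planar graph built from $T$ (a DFS traversal of $T$, so that every $T$-edge runs from an ancestor to a descendant) and orients each edge of that graph from parent to child; the resulting LR-sorting instance is a yes-instance precisely when the parent relation is acyclic, so one invocation of $\Pi$ suffices. Two things cannot be checked locally: that $H$ is a single Hamiltonian path rather than a path plus disjoint cycles, and global consistency of the various endpoint views; both are certified by random fingerprints of $\size(\Pi)$ bits, contributing at most $2^{-\size(\Pi)}$ to the soundness error and not affecting the asymptotic proof size. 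The dual tree $T^{\star}$ is treated analogously: each node locally simulates the portion of face-tracing passing through its corners (it can, given the rotation system), the prover hangs a rooted-tree structure on the face objects and on the corresponding traversal, and acyclicity of the dual-parent structure is again captured by an LR-sorting instance on a planar auxiliary graph, which can be merged with the primal instance so that a single call to $\Pi$ is made.

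For the accounting: the wrapper — one prover round to send $T$, $T^{\star}$, $H$, the orientation and the fingerprints, interleaved with the verifier's random rounds, and local verification — fits into $3$ rounds, and the LR-sorting instance is prepared within the first prover round and solved concurrently, giving $\rounds(\Pi')=\max\{3,\rounds(\Pi)\}$. If $\Pi$ is complete then on a planar instance the honest prover produces a genuine yes-instance of LR-sorting and passes all local and fingerprint checks, so $\completeness(\Pi')=\completeness(\Pi)$. On a no-instance, either a local check fails, or a fingerprint check catches an inconsistent $H$ / inconsistent endpoint views (probability at most $2^{-\size(\Pi)}$), or the induced LR-sorting instance is a genuine no-instance and $\Pi$ rejects with probability at least $1-\soundness(\Pi)$; hence $\soundness(\Pi')\le\soundness(\Pi)+2^{-\size(\Pi)}$. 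All wrapper messages other than the embedded call to $\Pi$ and the $\size(\Pi)$-bit fingerprints are $O(1)$ bits, so $\size(\Pi')=O(\size(\Pi))$. The main obstacle is the dual-tree part: faces are not vertices of $G$, so the dual spanning tree, its root, and its traversal must be emulated through the darts/corners incident to each node, and one must verify that acyclicity of this emulated structure is \emph{exactly} an LR-sorting instance on a graph that remains planar — the correctness of this encoding (and of the merge with the primal instance) is the delicate step.

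Finally, for (non-embedded) planarity the prover first supplies a rotation system, reducing to the embedded case. Specifying, for each incident edge of a node, its position in the node's cyclic order costs $O(\log\Delta)$ bits per incident edge; to keep the per-round proof size down this data is streamed over the $\rounds(\Pi')$ rounds, $O(\lceil\log\Delta/\rounds(\Pi')\rceil)$ bits per round, and the agreement of the two endpoints of each edge on that edge's cyclic positions is enforced by fingerprints of $O(\size(\Pi))$ bits. This yields a protocol for planarity with the same round and correctness guarantees as $\Pi'$ and proof size $O(\size(\Pi)+\log\Delta/\rounds(\Pi'))$, which is exactly the content of \cite[Lemmas 7.1, 7.2]{GilP2025}.
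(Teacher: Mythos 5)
This lemma is not proved in the paper at all: it is imported verbatim from \cite[Lemmas 7.1, 7.2]{GilP2025}, and the only original content here is the remark that the $O(\log\Delta)$ of extra communication needed to hand each node its rotation system can be streamed over the $\rounds(\Pi')$ rounds, giving the $\log\Delta/\rounds(\Pi')$ term --- a point your last paragraph reproduces correctly. So the real question is whether your reconstruction of the reduction stands on its own, and there I see two genuine gaps, both of which you partially flag but do not close. First, the LR-sorting task as defined requires a directed \emph{Hamiltonian path} $H$ of the very graph on which $\Pi$ is run. Your plan is to verify acyclicity of the parent relation of $T$ by taking $H$ to be ``a DFS traversal of $T$''; but consecutive nodes in DFS (pre)order are in general not adjacent in $G$, so this $H$ is not a path in $G$, and the ``auxiliary planar graph'' that would make it one is never constructed. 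Adding the missing edges is exactly where one must argue planarity is preserved (the reason the reduction in \cite{GilP2025} is restricted to LR-sorting \emph{in planar graphs}), and without that argument the single invocation of $\Pi$ you rely on is not justified. Second, and more seriously, the dual side: a face of the embedding is a global object (a closed walk under face-tracing), and certifying that the prover's locally-declared corner groupings assemble into genuine faces --- so that $E\setminus T$ really is a spanning tree of the \emph{actual} dual and the count $f=m-n+2$ is meaningful --- cannot be dismissed as ``emulated through the darts''. With $O(\log n)$ bits one would give each face an identifier and check it around the face boundary; with $O(\size(\Pi))$ bits one cannot, and this is precisely the difficulty the whole LR-sorting machinery exists to circumvent. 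You explicitly label this ``the delicate step'' and leave it open, which means the proof is a plausible programme rather than a proof.

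On the positive side, your underlying characterization (connectivity plus $n-m+f=2$, certified via interdigitating primal/dual spanning trees) is mathematically sound and is a genuinely different route from the left-right-style ordering argument underlying \cite{GilP2025}; the soundness accounting (local checks, a $2^{-\size(\Pi)}$ fingerprint term, and the residual LR-sorting no-instance) and the round/size bookkeeping are of the right form. But as written, the two steps above are exactly the nontrivial content of the lemma, so the attempt should be regarded as incomplete rather than as an alternative proof.
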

\noindent Regarding the added $\log \Delta/\rounds(\Pi')$ to the proof size in the planarity task, we note that \cite[Lemma 7.2]{GilP2025} shows that planarity can be reduced to embedded planarity while incurring an additive $O(\log \Delta)$ factor to the \emph{total} communication of the prover with each node in the protocol. Therefore, dividing this added communication uniformly over the rounds lead to the stated proof size.

In addition to the above reductions from (embedded) planarity to LR-sorting, \cite{GilP2025} presents reductions with similar communication and correctness guarantees from outerplanar graphs, series-parallel graphs, and graphs of treewidth $2$.
\begin{lemma}\label{lemma:sorting-to-outerplanarity}
	Given a DIP protocol $\Pi$ for LR-sorting in planar graphs, there exist DIP protocols for outerplanar graphs, series-parallel graphs, and graphs of treewidth $2$ with the same communication and correctness guarantees as the protocol $\Pi'$ stated in Lemma \ref{lemma:sorting-to-planarity} for embedded planarity.
\end{lemma}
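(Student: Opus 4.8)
The plan is to follow the blueprint behind Lemma~\ref{lemma:sorting-to-planarity}: for each of the three target classes I would have the honest prover attach, on top of the local input, a \emph{structural certificate} of membership, and then argue that verifying such a certificate reduces --- with only $O(1)$ extra rounds and $O(\size(\Pi))$ extra bits per round --- to (i) a constant number of strictly local consistency checks on the certificate labels, together with (ii) a constant number of LR-sorting instances on \emph{planar} auxiliary graphs, each discharged by invoking $\Pi$. The unifying point is that all three classes are subclasses of planar graphs and each possesses a canonical planar structural witness, so the auxiliary graphs fed to LR-sorting are automatically planar --- exactly what is needed, since $\Pi$ is only assumed to work on planar inputs.

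For \emph{outerplanar graphs} I would first reduce to the $2$-connected case via the block-cut tree of $G$ (stitching the per-block certificates together at cut vertices is purely local, handled by the prover just as in the embedded-planarity reduction). A $2$-connected outerplanar graph has an embedding whose outer face is a Hamiltonian cycle $C$ and all of whose remaining edges are pairwise non-crossing chords of $C$; the honest prover supplies $C$ --- equivalently, after deleting one edge, a Hamiltonian path $H$ of $G$ --- together with a triangulation refining it. The verifier first checks that $H$ is a genuine Hamiltonian path, reusing the path-consistency component of the embedded-planarity machinery, and then must verify the non-crossing/nesting condition on the chords with respect to the linear order along $H$. Orienting each chord from its $H$-earlier endpoint to its $H$-later endpoint and building the natural auxiliary digraph, this condition is captured by $O(1)$ LR-sorting instances plus local degree/parity checks, and the digraph is planar because it lives inside the outerplanar triangulation $G^{+}\supseteq G$.

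For \emph{series-parallel graphs} and \emph{graphs of treewidth $2$} I would use that these classes essentially coincide (a connected graph has treewidth at most $2$ iff each of its blocks is series-parallel, and both are, up to conventions, exactly the subgraphs of $2$-trees). The honest prover therefore exhibits $G$ as a subgraph of a $2$-tree $G^{+}$ by supplying, for each non-initial vertex, its two ``support'' neighbors and an elimination order (equivalently, a width-$2$ tree decomposition). Since $G^{+}$ is planar, the canonical $2$-tree embedding is available. The verifier checks locally that every non-initial vertex has exactly two earlier neighbors and that these form an edge of $G^{+}$, and checks the consistency of the elimination order and of the tree-decomposition connectivity condition; as before, these global consistency requirements reduce to $O(1)$ LR-sorting instances on the planar graph $G^{+}$, with the usual local bookkeeping.

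The main obstacle --- and the reason LR-sorting is indispensable here rather than a convenience --- is the communication budget. A naive certificate (a cyclic order, an elimination order, a tree decomposition) would need $\Omega(\log n)$-bit vertex or bag names, which is incompatible with a target proof size of $O(\size(\Pi))$, possibly as small as $O(1)$. LR-sorting circumvents this: running it along the certified path lets every node recover, using only $O(\size(\Pi))$ communication, the relative order of its neighbors, and this relative-order information is precisely what the local consistency checks above need in lieu of absolute names. Engineering each class's certificate so that its verification decomposes cleanly into $O(1)$ local checks plus $O(1)$ \emph{planar} LR-sorting instances --- while preserving completeness and losing only an additive $2^{-\size(\Pi)}$ in soundness --- is the bulk of the work, but once this is done the resulting protocols inherit verbatim the round count $\max\{3,\rounds(\Pi)\}$, proof size $O(\size(\Pi))$, completeness, and soundness $\soundness(\Pi)+2^{-\size(\Pi)}$ of the protocol $\Pi'$ in Lemma~\ref{lemma:sorting-to-planarity}. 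No analogue of the $\log\Delta/\rounds$ overhead from the (non-embedded) planarity case arises, because these membership certificates already encode the embedding for free.
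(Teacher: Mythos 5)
The paper does not actually prove this lemma: it is imported verbatim from \cite{GilP2025} (``In addition to the above reductions \dots \cite{GilP2025} presents reductions with similar communication and correctness guarantees from outerplanar graphs, series-parallel graphs, and graphs of treewidth $2$''), exactly as Lemma \ref{lemma:sorting-to-planarity} is imported from Lemmas 7.1--7.2 there. So the honest comparison is between your reconstruction and the cited construction, and while your overall architecture --- a planar structural witness for class membership, verified by $O(1)$ local checks plus LR-sorting on planar auxiliary instances --- is the right one, the steps you defer are precisely the technical content of the cited lemmas, and as written two of them do not go through.

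First, for outerplanar graphs, orienting each chord from its $H$-earlier to its $H$-later endpoint and running LR-sorting verifies nothing: the prover chooses the orientation, so every chord trivially passes, and LR-sorting by itself never compares two chords against each other. The non-crossing condition is a statement about \emph{pairs} of chords (no interleaving $a<c<b<d$), and encoding it as $O(1)$ LR-sorting instances on a planar auxiliary graph is exactly the nontrivial construction you would need to supply; ``local degree/parity checks'' do not capture it. Second, for series-parallel and treewidth-$2$ graphs, your plan runs LR-sorting ``on the planar graph $G^{+}$'' where $G^{+}$ is a $2$-tree, but LR-sorting is only defined on a graph that \emph{has} a directed Hamiltonian path of itself, and a $2$-tree need not be traceable (take an edge $\{a,b\}$ and attach $n$ vertices to it: this is $K_{2,n}$ plus an edge and has no Hamiltonian path for $n\geq 4$). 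So the reduction does not typecheck without constructing a different, Hamiltonian, planar auxiliary graph whose LR-sorting validity encodes the elimination-order condition. A similar issue affects the block--cut-tree stitching for non-$2$-connected outerplanar graphs, since the union of per-block Hamiltonian cycles is not a Hamiltonian path of $G$. Within the conventions of this paper the correct ``proof'' is simply the citation; if you want a self-contained argument, these three constructions are the parts that must actually be built.
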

\begin{figure}
	\centering
	\includegraphics[width=\textwidth]{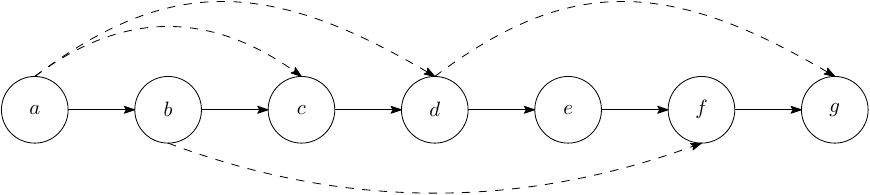}
	\caption{A valid instance of LR-sorting. The edges of the Hamiltonian path $H$ are solid and the edges of $E-H$ are dashed.}
	\label{figure:lr-sorting}
\end{figure} 
\paragraph{Enabling edge-labels in planar graphs.}
Recall that in DIP protocols, the prover interacts by assigning labels to the nodes. In Section \ref{section:sorting}, we present protocols under the stronger assumption that the prover can also assign labels to the edges (such that both endpoints of an edge can see its label). As shown in \cite[Lemma 2.4]{GilP2025}, edge-labels can be simulated by node-labels in planar graphs while incurring only a constant overhead to the proof size. That is, the next lemma follows directly from the construction described in \cite[Lemma 2.4]{GilP2025}.
\begin{lemma}\label{lemma:edge-labels}
	Let $\varphi$ be a decision problem defined over a family of planar graph instances and suppose that $\Pi$ is a protocol for $\varphi$ in which the prover assigns labels to both nodes and edges. Then, there exists a protocol $\Pi'$ for $\varphi$ such that: (1) the prover in $\Pi'$ only assigns node-labels; (2) $\size(\Pi')=O(\size(\Pi))$; and (3) $\Pi'$ has the same round-complexity and the same completeness and soundness errors as $\Pi$.
\end{lemma}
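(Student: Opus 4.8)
The plan is to charge every edge-label to one endpoint of the edge so that each node is charged only a constant number of labels, to let the prover of $\Pi'$ place these charged labels inside the node-labels it hands out, and to equip every node with a constant-size ``address book'' that lets it locate, within a neighbour's label, the label of the edge joining the two of them. Concretely, since planar graphs have arboricity at most three, the prover computes a partition $E=F_1\uplus F_2\uplus F_3$ of the edge set into three forests and roots every tree of every $F_i$; then each node is the child endpoint of at most one edge per forest, hence of at most three edges in total. In each round, the node-label that $\Pi'$'s prover gives to a node $v$ consists of the node-label that $\Pi$'s prover would have given to $v$, together with, for $i\in\{1,2,3\}$, the edge-label $\ell(e)$ of the edge $e\in F_i$ whose child endpoint is $v$ (a null symbol if $v$ is a root of its $F_i$-tree) placed in a fixed slot~$i$, plus a constant number of structural bits described below. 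At most three edge-labels are stored, so $\size(\Pi')=O(\size(\Pi))$, and $\Pi'$ keeps the same round structure as $\Pi$, so the round complexity is unchanged.

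It remains to give the decoding rule that a node $x$ runs at each incident edge $e=\{x,w\}$ to output $\ell(e)$ from its own label and $w$'s label, with the requirement that $x$ and $w$ output the \emph{same} value. Since $\ell(e)$ sits in slot $i(e)$ of the child endpoint of $e$, it is enough for the two endpoints to agree on (a) the index $i(e)$ of the forest containing $e$, and (b) which of $x,w$ is that child. Part (b) is cheap: have the prover record, for each node and each $i$, its depth in its $F_i$-tree modulo $3$ (six bits in total); the endpoints of an $F_i$-tree-edge have consecutive depths, and two residues that are consecutive modulo $3$ determine unambiguously which one is the smaller, i.e.\ which endpoint is the parent. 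Once (a) and (b) are settled, both endpoints read slot $i(e)$ of the agreed child and get the same string; feeding the resulting edge-labelling to $\Pi$'s verifier then reproduces $\Pi$ exactly. Because the decoding rule is deterministic and uses only information visible to both endpoints, the two endpoints agree for \emph{every} prover, so $\Pi'$ always simulates $\Pi$ on a well-defined edge-labelling; moreover every node/edge-labelling of $\Pi$ is realizable this way. Hence completeness, soundness, and round complexity are all inherited verbatim, with no added error and no added randomness.

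The crux is part (a): letting both endpoints of an arbitrary edge $e=\{x,w\}$ agree on which of the three forests contains $e$, using only $O(1)$ extra bits per node and only information both endpoints can see. Naive encodings are ruled out---a node cannot store a forest index for each of its (up to $\Delta$) incident edges, nor a $\Theta(\log\Delta)$-bit port-pointer to its parent in each forest---so this step is the technical heart of the argument. I would handle it with a short addressing gadget built on top of the per-forest depth residues: the point is to exploit more of the structure of the forest decomposition (for instance, relating the three residues stored at a node and adding local consistency checks of the parent--child relation that a cheating prover cannot pass without some node rejecting), so that the two endpoints of each edge can jointly single out the correct forest. Since this adds only $O(1)$ bits on top of the at most three stored edge-labels, it preserves $\size(\Pi')=O(\size(\Pi))$, and by the previous paragraph the remaining guarantees of $\Pi$ carry over unchanged.
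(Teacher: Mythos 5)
Your charging scheme (arboricity-$3$ forest decomposition, each node stores the labels of its at most three parent edges) is the right skeleton, and it is in the spirit of the construction of Gil--Parter that the paper imports wholesale -- note that the paper itself gives no proof here, it only cites \cite[Lemma 2.4]{GilP2025}. But your write-up stops exactly at the step that carries all the difficulty, and the placeholder you leave there does not work as sketched. The mod-$3$ depth residues resolve only part (b) (which endpoint is the child \emph{within a given forest}); they cannot resolve part (a). For an edge $e=\{x,w\}$ lying in $F_1$, nothing prevents $x$ and $w$ from also having consecutive $F_2$-depths (they need not be $F_2$-adjacent for that), so the test ``residues consecutive mod $3$'' can pass for a forest that does not contain $e$. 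The decoder would then read slot $2$ of the putative child, which holds the label of a \emph{different} edge (that node's actual $F_2$-parent edge). This is not merely a soundness worry: it already breaks perfect completeness, since the honest prover's labeling can be ambiguous under your rule, forcing either a wrong decoding or a rejection of a yes-instance. Your proposed remedy -- ``a short addressing gadget built on top of the per-forest depth residues \dots exploit more of the structure'' -- is precisely the missing content, and it is nontrivial: with $O(1)$ bits a node cannot name its parent by port ($\Theta(\log\Delta)$ bits) or by identifier ($\Theta(\log n)$ bits), and no $O(1)$-size proper coloring makes a node's neighborhood rainbow, so the obvious ways to certify ``$w$ is $x$'s $F_i$-parent'' locally all fail.

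Two secondary points you should also pin down if you complete the argument: (i) the residues and the forest partition are themselves supplied by an untrusted prover, so you must specify what the verifier does when the decoding rule identifies zero or several candidate slots, and argue that for \emph{every} prover either some node rejects outright or all edges decode to a single well-defined labeling (only then can you invoke the soundness of $\Pi$ as a black box); and (ii) the verifier cannot locally certify that the claimed partition is acyclic, so the argument must not rely on it actually being a forest decomposition on no-instances. As it stands, the proposal identifies the correct high-level strategy but leaves the decisive addressing step -- the actual content of \cite[Lemma 2.4]{GilP2025} -- unproved.
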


\subsection{Bitstrings and Arithmetic Operations on a Directed Path}\label{section:arithmetic}
In this section, we consider distributed tasks defined on oriented paths. Throughout the paper, when encountering an oriented path, it will be convenient to think of it as a straight line from left to right. For an oriented path $P$ consisting of $\ell$ nodes, we represent a bitstring $\alpha=\alpha[1]\alpha[2]\dots \alpha[\ell]$ in a distributed manner by assigning the bit $\alpha[i]$ to the $i$-th leftmost node in $P$. Going forward, when we say that a bitstring of length $\ell$ is written on $P$, we assume that the mechanism described above is used unless stated otherwise. Notice that in this representation, each node $v\in P$ only receives a single bit and in particular, $v$ is not aware of the index $i$ that reflects its position in $P$. We denote by $\alpha(v)$ the bit of $\alpha$ assigned to $v$. We often identify $\alpha$ as an integer from the range $\{0,\dots, 2^{\ell}-1\}$ according to its binary representation where $\alpha[i]$ represents the $i$-th most significant bit. 

We note that in the course of this paper, there may be some cases in which the bitstring $\alpha$ which is encoded on $P$ is of length $\ell'<\ell$. Unless stated otherwise, we assume that in these cases $\alpha$ is written on the $\ell'$ rightmost nodes of $P$ and that leading zeros are assigned to the $\ell-\ell'$ leftmost nodes of $P$. We now define some useful distributed tasks on bitstrings and establish protocols for them. All missing proofs of the current section are deferred to Appendix \ref{section:missing-proofs-prelims}.

\paragraph{Greater than.} In the \emph{greater than (GT)} task, an ordered pair $(\alpha,\beta)$ of integers $\alpha,\beta\in \{0,\dots ,2^{\ell}-1\}$ is written on the path $P$ and the goal is to decide if $\alpha>\beta$. The following lemma establishes the existence of a proof labeling scheme for GT.\footnote{Recall that proof labeling scheme refers to a non-interactive and non-randomized protocol \cite{KormanKP10}.} 
\begin{lemma}\label{lemma:greater-than}
	There is a proof labeling scheme for GT with a proof size of $1$. 
\end{lemma}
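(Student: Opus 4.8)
The plan is to exhibit a non-interactive, deterministic scheme (a proof labeling scheme) in which the prover's label on each node is a single bit. Write $\ell=|P|$ and index the nodes of $P$ as $v_1,\dots,v_\ell$ from left to right, so that $v_1$ holds the most significant bits $\alpha[1],\beta[1]$ and $v_\ell$ the least significant bits. The label the honest prover assigns to $v_i$ is the bit $b_i$ that equals $1$ if and only if the suffix integer $\alpha[i]\cdots\alpha[\ell]$ is strictly greater than $\beta[i]\cdots\beta[\ell]$; equivalently, $b_i=1$ exactly when the first position $j\ge i$ with $\alpha[j]\neq\beta[j]$ (if one exists) satisfies $\alpha[j]=1$. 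I rely only on the facts that node $v_i$ knows its own two input bits $\alpha[i],\beta[i]$, sees the label $b_{i+1}$ of its right neighbor, and — since $P$ is an oriented path — can recognize whether it is the leftmost or the rightmost node.

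Each node $v_i$ runs the following local test, which just checks the defining recurrence of $b_i$: if $\alpha[i]>\beta[i]$ it requires $b_i=1$; if $\alpha[i]<\beta[i]$ it requires $b_i=0$; and if $\alpha[i]=\beta[i]$ it requires $b_i=b_{i+1}$, where the rightmost node uses the convention $b_{\ell+1}=0$. In addition, the leftmost node requires $b_1=1$. All labels are one bit, so the proof size is $1$. For completeness, on a yes-instance the honest labeling satisfies every recurrence test by construction and gives $b_1=1$ since $\alpha>\beta$, so all nodes accept. For soundness, assume $\alpha\le\beta$ and that some labeling is accepted by all nodes; a right-to-left induction over $i$, with base case supplied by the rightmost node's test, shows that each $b_i$ is forced to equal the true value of the suffix comparison at position $i$, so $b_1$ is forced to $0$ — contradicting the leftmost node's test. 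Hence on any no-instance some node rejects. (In particular $\alpha=\beta$ is correctly treated as a no-instance, since then every suffix comparison, including $b_1$, evaluates to $0$.)

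I do not foresee a genuine difficulty: the scheme is essentially the observation that the single bit ``does the most significant disagreement in the remaining suffix favor $\alpha$?'' is Boolean and propagates under a rule checkable within radius $1$. The only points needing care are the two boundary conditions — the rightmost node seeding the recurrence with $b_{\ell+1}=0$ and the leftmost node enforcing $b_1=1$ — and checking that these make the soundness induction go through, including the degenerate case $\ell=1$ (where a single node is both leftmost and rightmost and its combined test accepts precisely when $\alpha[1]=1$ and $\beta[1]=0$).
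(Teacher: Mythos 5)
Your scheme is correct, but it is not the one the paper uses, so let me compare. The paper's prover marks the most significant position $i$ at which $\alpha$ and $\beta$ differ: that node gets label $1$, every node to its left gets label $0$, a $0$-labeled node checks $\alpha(u)=\beta(u)$ and that its left neighbor is also labeled $0$, and the $1$-labeled node checks $\alpha(u)=1$, $\beta(u)=0$ and that its left neighbor is labeled $0$. In other words, the paper certifies the \emph{witness position} of the first disagreement and verifies it by a left-to-right propagation of the ``all bits so far agree'' flag. You instead label each node with the truth value of the suffix comparison $\alpha[i\dots\ell]>\beta[i\dots\ell]$ and verify the defining recurrence right-to-left, anchoring at the rightmost node with the convention $b_{\ell+1}=0$ and requiring $b_1=1$ at the leftmost node. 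Both use exactly one bit and both are sound; your induction is valid. What your version buys is that every label is \emph{forced} to the true suffix-comparison value, so the scheme computes the predicate rather than merely exhibiting a one-sided witness — in particular the degenerate case $\alpha=\beta$ is rejected automatically via $b_1=0$, whereas the paper's marker-based scheme must implicitly rely on some node being labeled $1$ (an all-zero labeling on an $\alpha=\beta$ instance satisfies every stated local test, a boundary case your formulation handles explicitly). What the paper's version buys is a slightly shorter description and a verification that reads off the standard characterization $\alpha>\beta\iff\alpha[i]>\beta[i]$ with all more significant bits equal. Either argument suffices for the lemma.
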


\paragraph{Addition.} In the \emph{addition} task, an ordered triple $(\alpha,\beta,\gamma)$ of integers $\alpha,\beta,\gamma\in\{0,\dots ,2^{\ell}-1\}$ is written on the path $P$ and the goal is to decide if $\alpha+\beta=\gamma$. 
\begin{lemma}\label{lemma:addition}
	There is a proof labeling scheme for addition with a proof size of $1$.
\end{lemma}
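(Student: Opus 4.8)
The plan is to emulate the classical grade-school addition algorithm on the path $P$, where the only non-local information that needs to be certified at each node is the single carry bit entering that position. Recall that $(\alpha,\beta,\gamma)$ are written on $P$ with the least significant bits on the right, so each node $v$ at position $i$ (counting from the right) holds the triple of bits $(\alpha[i],\beta[i],\gamma[i])$ but does not know $i$ itself. The honest prover computes, for every position $i$, the carry $c_i \in \{0,1\}$ that the standard binary addition of $\alpha$ and $\beta$ produces \emph{into} position $i$ (so $c_1 = 0$ at the rightmost node), and assigns $c_i$ as the label of the node at position $i$. This is a single bit per node, giving proof size $1$.

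The verification at each node $v$ is purely local: $v$ sees its own triple $(\alpha[i],\beta[i],\gamma[i])$, its own label $c_i$, and the label $c_{i-1}$ of its right neighbor (the node at position $i-1$); the rightmost node uses the convention $c_{i-1} = 0$. Node $v$ computes the full-adder relation: it checks that $\gamma[i] = \alpha[i] \oplus \beta[i] \oplus c_{i-1}$ and that $c_i = \mathrm{MAJ}(\alpha[i],\beta[i],c_{i-1})$, i.e. the outgoing carry is consistent. The leftmost node additionally checks that its \emph{own} outgoing carry $c_\ell$ (the carry out of the most significant position) is $0$ — this is exactly the condition that no overflow occurred, which is what makes $\alpha+\beta=\gamma$ hold as integers rather than merely modulo $2^\ell$. (Here I use the convention from Section \ref{section:arithmetic} that shorter strings are padded with leading zeros, so $\ell$ is the common length.)

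For completeness, if $\alpha+\beta=\gamma$, the prover uses the true carries $c_i$ and every full-adder check passes by definition of binary addition, and the final carry-out is $0$ since $\alpha+\beta < 2^\ell$. For soundness, suppose all nodes accept with some labels $c_1,\dots,c_\ell$. The rightmost node forces $c_1$ to behave as if the carry-in is $0$, and an easy induction from right to left shows that each $c_i$ equals the true carry into position $i$ of the binary sum of $\alpha$ and $\beta$, and moreover $\gamma[i]$ equals the corresponding sum bit; the leftmost node's extra check forces the true carry-out to be $0$, so the binary sum of $\alpha$ and $\beta$ is exactly the $\ell$-bit string $\gamma$ with no overflow, i.e. $\alpha+\beta=\gamma$ as integers.

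The only mild subtlety — not really an obstacle — is the orientation and the "shorter string" padding convention: one must be careful that "right neighbor" consistently means the next-less-significant position and that the rightmost and leftmost endpoints of $P$ apply the carry-in$=0$ and carry-out$=0$ boundary conditions respectively. (Alternatively, one can reduce addition to GT: note $\alpha+\beta=\gamma$ iff $\alpha = \gamma - \beta$, but handling subtraction and the possibility $\beta > \gamma$ directly is no simpler than the carry-propagation argument above, so the direct full-adder certificate is the cleanest route.)
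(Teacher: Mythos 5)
Your proposal is correct and matches the paper's proof: both certify addition by having the prover label each node with the carry bit of grade-school binary addition and having each node locally verify the full-adder relation against its neighbor's carry, with the boundary conditions (carry-in $0$ at the least significant end, carry-out $0$ at the most significant end) handled at the path endpoints. Your write-up is just a more detailed version of the paper's one-paragraph argument.
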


We also define a related \emph{modular addition} task as follows. A number $N\in \{0,\dots, 2^{\ell}-1\}$ is written on the path $P$ along with an ordered triple $(\alpha,\beta,\gamma)$ of integers $\alpha,\beta,\gamma\in \{0,\dots ,N-1\}$, and the goal is to decide if $\alpha+\beta\equiv\gamma\bmod N$. Due to the range of $\alpha,\beta,\gamma$, a protocol for modular addition is straightforward from the addition protocol since the prover only needs to prove that $\alpha+\beta=\gamma$ in the case that $\alpha+\beta<N$, and $\alpha+\beta=N+\gamma$ otherwise. Thus, we get the following. 
\begin{lemma}\label{lemma:addition-mod-p}
	There is a proof labeling scheme for modular addition with a proof size of $O(1)$.
\end{lemma}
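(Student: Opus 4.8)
The plan is to reduce modular addition to ordinary addition, exactly as indicated in the paragraph preceding the statement. Recall that $N \in \{0,\dots,2^\ell-1\}$ is written on $P$ together with integers $\alpha,\beta,\gamma \in \{0,\dots,N-1\}$, and we want to verify $\alpha+\beta \equiv \gamma \pmod N$. Since $0 \le \alpha,\beta \le N-1$, the true sum $\alpha+\beta$ lies in $\{0,\dots,2N-2\}$, so modular reduction subtracts either $0$ or $N$. Hence $\alpha+\beta \equiv \gamma \pmod N$ holds if and only if exactly one of the following is true: (i) $\alpha+\beta < N$ and $\alpha+\beta=\gamma$; or (ii) $\alpha+\beta \ge N$ and $\alpha+\beta = N+\gamma$. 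The honest prover computes which case occurs and writes a single ``carry'' bit $c \in \{0,1\}$ consistently across the path (to be checked for consistency via an extra bit, as is standard, or simply broadcast by appending it to every node's label since it is a single bit), where $c=0$ signals case (i) and $c=1$ signals case (ii). It then supplies the certificates of Lemma~\ref{lemma:addition} for the claimed identity: either $\alpha+\beta=\gamma$ (when $c=0$) or $\alpha+\beta=N+\gamma$ (when $c=1$). Note $N+\gamma \le 2^\ell-1 + 2^\ell - 1$, which may overflow $\ell$ bits; to stay within range one instead verifies the equivalent identity $\alpha+\beta = \gamma + N$ by first checking $\alpha' := \alpha - N \ge 0$ is false and working with the symmetric form, or more simply one runs the addition protocol on a path extended by one node (a single leading bit), since the sum of two $\ell$-bit numbers fits in $\ell+1$ bits; this costs only $O(1)$ extra in proof size.

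The verifier's checks are: first, verify that the broadcast bit $c$ is globally consistent (trivial, $O(1)$ bits via Lemma~\ref{lemma:addition}-style consistency, or immediate if $c$ is appended to all labels); second, if $c=0$, run the verifier of Lemma~\ref{lemma:addition} on the triple $(\alpha,\beta,\gamma)$ and additionally run the verifier of Lemma~\ref{lemma:greater-than} to confirm $\alpha+\beta < N$ — equivalently, using that we have just certified $\alpha+\beta=\gamma$, confirm $\gamma < N$ via Lemma~\ref{lemma:greater-than}, which is already guaranteed by the input range but must be re-checked because a dishonest prover could lie about the input; third, if $c=1$, run the addition verifier on $(\alpha,\beta,N+\gamma)$ (in the $(\ell+1)$-bit representation) and run the GT verifier to confirm $\alpha+\beta \ge N$. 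Each of these subroutines has proof size $O(1)$ by Lemmas~\ref{lemma:greater-than} and~\ref{lemma:addition}, and there are only a constant number of them, so the total proof size is $O(1)$ as claimed.

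For correctness: completeness is immediate, since when $\alpha+\beta\equiv\gamma\pmod N$ genuinely holds the honest prover picks the correct $c$ and the underlying addition and GT certificates exist by the completeness of Lemmas~\ref{lemma:addition} and~\ref{lemma:greater-than}. For soundness, suppose the instance is a no-instance, i.e., $\alpha+\beta \not\equiv \gamma \pmod N$. Whatever consistent $c$ the prover broadcasts, at least one of the two certified identities ($\alpha+\beta=\gamma$ or $\alpha+\beta=N+\gamma$, paired with the corresponding GT inequality) must be false as an arithmetic statement over the actual bits on the path; by the soundness of Lemmas~\ref{lemma:addition} and~\ref{lemma:greater-than}, some node rejects. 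The only subtlety is handling the case where the prover tries to cheat by claiming the inputs are outside the promised range $\{0,\dots,N-1\}$; this is precisely why we include the extra GT checks against $N$, and since GT is a proof labeling scheme (perfectly sound, deterministic) these cannot be subverted.

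I do not anticipate a real obstacle here — this is a routine composition of the primitives from Lemmas~\ref{lemma:greater-than} and~\ref{lemma:addition}. The only point requiring a little care is the range/overflow bookkeeping when writing $N+\gamma$, which is resolved by allowing one extra bit position on the path (a constant overhead), and making sure the verifier re-validates the input-range promises rather than trusting them, since in the proof-labeling-scheme model the prover controls the labels but not the genuine bits $N,\alpha,\beta,\gamma$ already written on $P$.
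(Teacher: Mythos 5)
Your proposal is correct and matches the paper's (very brief) argument: case-split on whether $\alpha+\beta<N$, have the prover indicate the case, and certify $\alpha+\beta=\gamma$ or $\alpha+\beta=N+\gamma$ via Lemma~\ref{lemma:addition}, with one extra bit position absorbing the overflow. The only remark is that your additional GT checks are redundant: since $\alpha,\beta,\gamma,N$ are part of the input written on $P$ (not prover-supplied), certifying $\alpha+\beta=\gamma$ already forces $\alpha+\beta<N$, and certifying $\alpha+\beta=N+\gamma$ already forces $\alpha+\beta\geq N$, so soundness holds without them.
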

\paragraph{Equality.}
In the \emph{equality} task, an ordered pair $(P,P')$ of node-disjoint paths is given such that each node $v\in P\cup P'$ knows whether it is in $P$ or $P'$. Both paths consist of $\ell$ nodes and are connected by some edge $e^{*}=(u^{*},v^{*})\in P\times P'$ which is marked such that both $u^{*}$ and $v^{*}$ are able to distinguish $e^{*}$ from their other incident edges. Two bitstrings $\alpha,\alpha'\in \{0,1\}^{\ell}$ are written on $P$ and $P'$, respectively, and the goal of the equality task is to decide if $\alpha=\alpha'$. See Figure \ref{figure:equality} for an example of an equality instance.

Towards describing a protocol for equality, we identify any bitstring $s\in\{0,1\}^{\ell}$ with the univariate polynomial $\Phi(x;s)=\sum_{i=1}^{\ell}s[i]\cdot x^{i}$. For a prime number $q> \ell$ and value $x\in \{0,\dots ,q-1\}$, let us denote by $\Phi_{q}(x;s)$ the evaluation of $\Phi(x;s)$ computed over the field $\mathbb{F}_{q}$.\footnote{Recall that $\mathbb{F}_{q}$ is the field whose elements are in $\{0,\dots, q-1\}$ and operations are done modulo $q$} The following lemma is an application of the well-known polynomial identity lemma (a.k.a.\ the Schwartz-Zippel lemma \cite{Schwartz80,Zippel79}) and is frequently used in various domains (e.g., communication complexity \cite{Kushilevitz-Nisan97}).
\begin{lemma}\label{lemma:poly-identity}
	Let $\alpha,\alpha'\in \{0,1\}^{\ell}$ such that $\alpha\neq \alpha'$ and let $q$ be a prime number such that $q>\ell$. Then, it holds that $\Pr_{r\in \{0,\dots , q-1\}}[\Phi_{q}(r;\alpha)=\Phi_{q}(r;\alpha')]\leq \ell/q$. 
\end{lemma}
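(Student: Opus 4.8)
The plan is to derive the bound directly from the polynomial identity lemma applied to the \emph{difference} polynomial. First I would set $\Psi(x) := \Phi(x;\alpha) - \Phi(x;\alpha')$, which as a polynomial over $\mathbb{F}_q$ equals $\sum_{i=1}^{\ell}\bigl(\alpha[i]-\alpha'[i]\bigr)x^{i}$ with the coefficients reduced modulo $q$. The crucial point is that $\Psi$ is a \emph{nonzero} polynomial over $\mathbb{F}_q$: since $\alpha\neq\alpha'$, there is an index $i$ with $\alpha[i]\neq\alpha'[i]$, and because $\alpha[i],\alpha'[i]\in\{0,1\}$ this forces $\alpha[i]-\alpha'[i]\in\{-1,+1\}$, which is not congruent to $0$ modulo $q$ (as $q$ is prime, so $q\geq 2$). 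Hence $\Psi$ has at least one nonzero coefficient in $\mathbb{F}_q$.

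Next I would bound the degree of $\Psi$. All monomials appearing in $\Phi(x;\cdot)$ have exponents in $\{1,\dots,\ell\}$, so $\deg\Psi\leq\ell$. Invoking the standard fact that a nonzero univariate polynomial of degree $d$ over a field has at most $d$ roots (this is precisely the polynomial identity / Schwartz--Zippel lemma in one variable), we get $\lvert\{\,r\in\mathbb{F}_q : \Psi(r)=0\,\}\rvert\leq\ell$. Since $r$ is drawn uniformly from $\{0,\dots,q-1\}=\mathbb{F}_q$, it follows that $\Pr_{r}[\Psi(r)=0]\leq \ell/q$.

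Finally, I would observe that the events coincide: $\Phi_q(r;\alpha)=\Phi_q(r;\alpha')$ holds in $\mathbb{F}_q$ if and only if $\Psi(r)=0$, which completes the argument. I do not anticipate any real obstacle; the only step requiring a moment of care is verifying that a difference of two distinct bits survives reduction modulo $q$ so that $\Psi$ is genuinely nonzero over $\mathbb{F}_q$, and this is immediate since $\pm 1\not\equiv 0\pmod q$ for any prime $q$. I would also note in passing that the hypothesis $q>\ell$ is what makes the bound $\ell/q$ nontrivial (i.e.\ strictly below $1$), which is why it is imposed in the statement and used in the downstream applications, even though the inequality itself holds for every prime $q$.
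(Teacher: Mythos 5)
Your proof is correct and is exactly the standard argument the paper has in mind (the paper itself omits a proof and simply cites the polynomial identity / Schwartz--Zippel lemma): you form the difference polynomial, observe it is nonzero over $\mathbb{F}_q$ because a differing bit contributes a coefficient of $\pm 1\not\equiv 0\pmod q$, bound its degree by $\ell$, and count roots. Your side remarks---that the nonvanishing modulo $q$ is the one point needing care, and that $q>\ell$ only serves to make the bound nontrivial---are both accurate.
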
 

Based on Lemma \ref{lemma:poly-identity}, we can devise a protocol for equality with the following properties.
\begin{lemma}\label{lemma:equality}
	There is a $2$-round protocol for equality on instances $\alpha,\alpha'\in\{0,1\}^{\ell}$ with a proof size of $O(\log \ell)$, perfect completeness, and a soundness error of at most $1/\ell$.
\end{lemma}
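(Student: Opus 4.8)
The plan is to combine the polynomial fingerprinting idea from Lemma~\ref{lemma:poly-identity} with the arithmetic-on-a-path primitives (Lemmas~\ref{lemma:greater-than}--\ref{lemma:addition-mod-p}) to build a two-round interactive protocol. First I would have the prover send a prime $q$ with $\ell<q\le 2\ell$ (or some fixed polynomial bound, so that $q$ itself is an $O(\log\ell)$-bit string); the prover writes $q$ on both paths $P$ and $P'$ in the standard distributed encoding. This is the first prover-interaction round. Then the verifier nodes at the two endpoints $u^\ast$ and $v^\ast$ of the marked edge $e^\ast$ jointly generate a random evaluation point $r\in\{0,\dots,q-1\}$: one of them draws $r$ and passes it across $e^\ast$, or both draw and the protocol checks they agree — the cleanest option is to let $u^\ast$ draw $r$, propagate it trivially (it only needs to be known at $u^\ast$ and $v^\ast$ for the checks below). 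Actually, since the verifier's randomness is public-coin and the prover must commit labels based on it, the order is: prover sends $q$; verifier at $u^\ast$ draws $r$; prover then sends the claimed evaluations $y=\Phi_q(r;\alpha)$ and $y'=\Phi_q(r;\alpha')$ together with enough auxiliary data for the nodes to verify these evaluations were computed correctly.

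The heart of the protocol is the verification that $y=\Phi_q(r;\alpha)$, i.e.\ that $y$ is the honest Horner-style evaluation of the degree-$\ell$ polynomial whose coefficients are the bits written on $P$. The standard trick is for the prover to write, on $P$ (augmenting each node's label), the partial sums $s_i=\sum_{j\ge i}\alpha[j] r^{\,j-i}\bmod q$ computed over $\mathbb{F}_q$, so that $s_i = \alpha[i]\cdot 1 + r\cdot s_{i+1}$ locally at each node (reading $P$ from the low-index end), with the boundary condition $s_{\ell+1}=0$ and $s_1$ (suitably scaled by a power of $r$, or set up so the recursion directly yields $\Phi_q(r;\alpha)$). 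Each internal node of $P$ then needs only to check one modular-arithmetic identity of the form $s_i \equiv \alpha[i] + r\cdot s_{i+1} \pmod q$ relating its own label to its neighbor's label; this is a constant-size check per node once $r$, $q$, and the $O(\log\ell)$-bit partial sums are available locally. But wait — $r\cdot s_{i+1}$ is a product, not handled directly by the addition lemma; the fix is to have the prover also supply the product $p_i = r\cdot s_{i+1}\bmod q$ and verify $p_i$ via the schoolbook-multiplication-as-repeated-addition argument, or more simply observe that $r$ and $s_{i+1}$ are $O(\log\ell)$-bit numbers so a node can just recompute $r\cdot s_{i+1}\bmod q$ on its own and compare to $s_i-\alpha[i]$. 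The latter is fine because the whole point is that each node holds $O(\log\ell)$ bits and can do $O(\log\ell)$-bit arithmetic internally — the subtlety the arithmetic lemmas address is when operands are spread across many nodes, which is not the case here since $s_{i+1}$ lives entirely at one node. The same is done symmetrically on $P'$ with $\alpha'$, and finally the node $u^\ast$ (or $v^\ast$) checks that the two final evaluation values agree: $\Phi_q(r;\alpha)=\Phi_q(r;\alpha')$, using the marked edge $e^\ast$ to compare the boundary labels.

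For correctness: completeness is perfect because an honest prover writes the true $q$, the true partial sums on both paths, and the true equal evaluations, so every local check passes. For soundness, suppose $\alpha\neq\alpha'$. If the prover writes a valid prime $q>\ell$ and fills in consistent partial-sum labels on each path, the per-node checks force the endpoint values to equal the true $\Phi_q(r;\alpha)$ and $\Phi_q(r;\alpha')$ respectively; by Lemma~\ref{lemma:poly-identity} these differ except with probability $\le\ell/q< 1$, say $\le 1/\ell$ if we take $q>\ell^2$ (or we fix $q\in(\ell,2\ell]$ and accept $\le 1/2$, then boost) — I would pick $q$ of size $\Theta(\ell^2)$ so that $q$ is still $O(\log\ell)$ bits and $\ell/q\le 1/\ell$. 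If instead the prover lies about $q$ (not prime, or $\le\ell$) or writes inconsistent partial sums, then some node's check fails and it rejects — here I need the small caveat that the verifier must be able to locally certify $q$ is prime and $q>\ell$; primality of an $O(\log\ell)$-bit number is checkable in time $\poly(\ell)$ but the nodes don't know $\ell$, so instead I would have the prover supply a short primality certificate for $q$ (e.g.\ a Pratt certificate) written on the path, OR — much simpler — observe we only need $q$ large enough, so fix $q$ to be the smallest prime exceeding $\ell^2$, which is determined by $\ell$ alone; but the nodes don't know $\ell$ either. The cleanest resolution, and the one I expect to use, is to let $\ell$ itself be encoded on the path (or derivable: the length of the path), have the prover write $\ell$ in binary and a primality witness for $q$, and add checks that $q>\ell$ (Lemma~\ref{lemma:greater-than}) and that the witness is valid; all of this is $O(\log\ell)$ per node and $O(1)$ rounds. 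The main obstacle, then, is not the polynomial-identity argument — that is textbook — but rather the bookkeeping of certifying the prime $q$ and the Horner partial sums in a distributed, constant-round, $O(\log\ell)$-proof-size manner so that a cheating prover cannot gain by corrupting the "infrastructure" labels rather than the evaluation itself.
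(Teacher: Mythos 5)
Your core protocol is the same as the paper's: fingerprint both strings with $\Phi_q(r;\cdot)$ for a random $r$ modulo a prime $q=\Theta(\ell^2)$, have the prover write the running partial evaluations along each path so that every node performs one local $O(\log\ell)$-bit arithmetic check against its neighbor's label, and compare the two final values across $e^*$; your observation that each node can recompute the product $r\cdot s_{i+1}\bmod q$ internally (so no distributed multiplication lemma is needed) is exactly how the paper handles it, with the prover additionally giving each node its index $i_v$ so it can compute $r^{i_v}\bmod q$. The one substantive divergence is your treatment of $q$. The paper simply takes $q$ to be a prime in $[\ell^2,2\ell^2]$ \emph{known in advance to all nodes} (in all of the paper's applications $\ell$ is a fixed function of $n$, so this is a convention, not something the prover must certify), which is what lets the protocol fit in exactly two rounds: a verifier round in which the leftmost node of $P$ draws $r$, followed by a single prover round. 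Your version, in which the prover first announces $q$ together with a primality witness and a proof that $q>\ell$, and only then the verifier samples $r$, is a prover--verifier--prover interaction, i.e.\ three rounds, and so does not match the stated bound as written; the round parity of this lemma is load-bearing elsewhere (e.g.\ in Lemma \ref{lemma:multiplication-to-equality}). The fix is just to adopt the paper's convention and drop the $q$-certification machinery, after which your argument, including the soundness bound $\ell/q\le 1/\ell$ via Lemma \ref{lemma:poly-identity}, goes through.
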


\begin{figure}
	\centering
	\includegraphics[width=\textwidth]{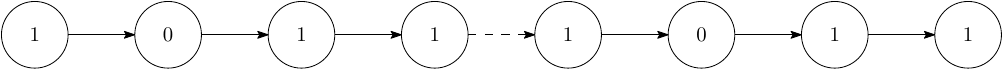}
	\caption{A valid instance of equality. Here, $e^{*}$ is depicted by the dashed edge which connects between $P$ on the left and $P'$ on the right. The encoded bitstrings are $\alpha=\alpha'=1011$.}
	\label{figure:equality}
\end{figure}

\paragraph{Multiplication.}
In the \emph{multiplication} task, an ordered triple $(\alpha,\beta,\gamma)$ of integers $\alpha,\beta,\gamma\in \{0,\dots ,2^{\ell'}-1\}$ is written on the $\ell'\leq \ell$ leftmost nodes of the path $P$ and the goal is to decide if $\alpha\cdot\beta=\gamma$. Notice that we allow for lengths $\ell'<\ell=|P|$. This becomes useful in the following lemma where we establish a reduction from multiplication to equality. The high-level idea is to have the prover encode the steps of a standard multiplication algorithm and use the equality protocol to verify that all the intermediate values are encoded correctly (refer to Appendix \ref{section:missing-proofs-prelims} for the full details).
\begin{lemma}\label{lemma:multiplication-to-equality}
	Let $\mathsf{EQ}_{k}$ be a protocol with perfect completeness for equality on instances of size $k$. Given a path $P$ on $\ell$ nodes and a multiplication instance $\alpha,\beta,\gamma\in \{0,\dots ,2^{\ell'}-1\}$ where $\ell'\leq \sqrt{\ell/2}$, there exists a protocol $\Pi$ that determines if $\alpha\cdot \beta=\gamma$ such that $\rounds (\Pi)=\rounds(\mathsf{EQ}_{2\ell'})+1-(\rounds(\mathsf{EQ}_{2\ell'})\bmod 2)$, $\size(\Pi)=O(\size(\mathsf{EQ}_{2\ell'}))$, $\completeness(\Pi)=0$, and $\soundness(\Pi)=\soundness(\mathsf{EQ}_{2\ell'})$.
\end{lemma}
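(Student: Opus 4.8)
The plan is to have the prover write down a full transcript of the grade-school (binary long) multiplication algorithm on the free nodes of $P$, and to certify its correctness by a mixture of purely local proof-labeling-scheme checks and equality checks, all executed in parallel. Writing $\alpha=\sum_{i=1}^{\ell'}\alpha[i]2^{\ell'-i}$, Horner's scheme gives $\alpha\beta=T_{\ell'}$, where $T_0=0$ and $T_i=2T_{i-1}+\alpha[i]\cdot\beta$, and each $T_i<2^{2\ell'}$. The prover lays out $T_1,\dots,T_{\ell'}$ on $\ell'$ consecutive \emph{spatial} blocks of $P$, each block occupying $2\ell'$ nodes; since $\ell'\le\sqrt{\ell/2}$ we have $2\ell'^{2}\le\ell$, so the blocks fit inside $P$ (the remaining nodes carry padding zeros). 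The blocks are placed spatially rather than interleaved on purpose: interleaving all $\ell'$ intermediate values would force $\Theta(\ell')$ bits per node, whereas a spatial layout keeps the transcript to $O(1)$ bits per node, which is the whole reason the equality protocol is needed to ``glue'' neighboring blocks together.

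Inside block $i$ the prover additionally writes, interleaved on the block's $2\ell'$ nodes, a ``prev-row'' copy of $T_{i-1}$, local copies $\alpha^{(i)},\beta^{(i)},\gamma^{(i)}$ of the input strings, a one-hot pointer string $\pi^{(i)}=0^{i-1}10^{\ell'-i}$ (together with its one-position shift) marking which bit $a_i:=\alpha^{(i)}[i]$ the block is responsible for, the carry bits required by the addition scheme (Lemma~\ref{lemma:addition}), and a few marker bits. Each node of block $i$ then performs $O(1)$-bit local checks certifying: that its column of the identity $T_i=2T_{i-1}+a_i\beta^{(i)}$ holds (multiplication by $2$ is a one-position shift \emph{within the block}, hence local, and the conditional addition of $\beta^{(i)}$ is handled columnwise via the addition PLS), that $\pi^{(i)}$ has a single $1$ and that $a_i$ is the bit of $\alpha^{(i)}$ flagged by it, that block $1$'s prev-row equals $0$, and that block $\ell'$'s curr-row equals $\gamma^{(\ell')}$ (zero-padded). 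What remains to be verified is only a list of \emph{propagation} statements: (i) the prev-row of block $i$ equals the curr-row $T_{i-1}$ of block $i-1$; (ii) the copies $\alpha^{(i)},\beta^{(i)},\gamma^{(i)}$ and the pointer $\pi^{(i)}$ are passed consistently from block $i$ to block $i+1$ (with $\pi$ advancing by one position); and (iii) the copies in block $1$ agree with the input $(\alpha,\beta,\gamma)$ on the leftmost $\ell'$ nodes. Each such statement asserts that two bitstrings of length at most $2\ell'$, residing on two disjoint clean sub-paths (two neighboring blocks, or block $1$ and the input segment) connected by exactly one edge, are equal; we mark that edge and invoke $\mathsf{EQ}_{2\ell'}$ on it, running the several invocations that share an edge in parallel on distinct ``layers'' of the interleaving. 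Since every node lies in a single block, it takes part in $O(1)$ equality invocations, so $\size(\Pi)=O(\size(\mathsf{EQ}_{2\ell'}))$; and since $\mathsf{EQ}$ is public-coin and ends with a prover round, merging its first round with the single prover round that posts the whole transcript costs one extra round exactly when $\rounds(\mathsf{EQ}_{2\ell'})$ is even, which gives the stated round count.

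\textbf{Correctness.} Completeness is immediate: the honest prover posts the true transcript, all local checks pass, and every equality invocation is on genuinely equal strings and hence accepts with probability $1$ by perfect completeness of $\mathsf{EQ}$, so $\completeness(\Pi)=0$. For soundness, suppose $\alpha\cdot\beta\neq\gamma$ and fix an arbitrary prover. If any local check fails the verifier rejects with certainty. Otherwise, an unrolling argument shows that if \emph{every} equality-checked pair were in fact equal, the transcript would be globally consistent and would force $T_{\ell'}=\alpha\beta$ and $T_{\ell'}=\gamma^{(\ell')}=\gamma$, a contradiction; hence at least one equality-checked pair consists of distinct strings, the corresponding $\mathsf{EQ}_{2\ell'}$ invocation accepts with probability at most $\soundness(\mathsf{EQ}_{2\ell'})$, and since the verifier accepts only when all invocations accept, the overall acceptance probability is at most $\soundness(\mathsf{EQ}_{2\ell'})$.

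\textbf{Main obstacle.} The technical heart of the argument is the layout: one must embed the blocks, their interleaved auxiliary strings, and the one-hot selector pointers into the single path $P$ so that (a) every consistency requirement becomes either a purely local check or an equality between two sub-paths sharing exactly one markable edge, (b) each block ``knows'' which bit of $\alpha$ it processes while staying within the $O(\log\ell')$ proof budget (this is what makes the pointer-string bookkeeping necessary, since a run of zeros in $\alpha$ would otherwise lose track of the position), and (c) the node count stays below $\ell$, which is exactly what the hypothesis $\ell'\le\sqrt{\ell/2}$ buys. Once a valid layout is fixed, the local checks follow from Lemmas~\ref{lemma:greater-than} and~\ref{lemma:addition}, and the soundness bound follows from the fact that running many equality checks in parallel, with the verifier accepting only if all accept, incurs error at most $\soundness(\mathsf{EQ}_{2\ell'})$ (the single ``bad'' pair in a no-instance is caught with the required probability regardless of the others).
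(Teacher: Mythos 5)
Your proposal is correct and follows essentially the same route as the paper: both lay out the transcript of schoolbook multiplication on $\ell'$ spatial clusters of $2\ell'$ nodes each (which is exactly what $\ell'\le\sqrt{\ell/2}$ permits), carry local copies of $\alpha,\beta,\gamma$ and a running partial product in each cluster, discharge the per-cluster arithmetic via local checks and the addition scheme of Lemma~\ref{lemma:addition}, and reduce all inter-cluster propagation to parallel invocations of $\mathsf{EQ}_{2\ell'}$ between adjacent clusters. The only differences are cosmetic bookkeeping choices (Horner iteration over the bits of $\alpha$ with a one-hot pointer, versus the paper's shift-and-add over the bits of $\beta$ with explicit prefix values $z_4,z_5,z_6$), and the round/size/soundness accounting matches the paper's.
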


Analogously to the case in the addition task, we define \emph{modular multiplication}. A number $N\in \{0,\dots, 2^{\ell'}-1\}$ is written on the path $P$ along with an ordered triple $(\alpha,\beta,\gamma)$ of integers $\alpha,\beta,\gamma\in \{0,\dots ,N-1\}$, and the goal is to decide if $\alpha\cdot\beta\equiv\gamma\bmod N$. By a similar reasoning to the addition case, we get the following lemma.
\begin{lemma}\label{lemma:multiplication-mod-p}
	Let $\ell\geq \ell'>0$ be two integers and suppose that $\Pi$ is a protocol with perfect completeness for multiplication on instances $\alpha,\beta,\gamma\in \{0,\dots ,2^{\ell'}-1\}$ encoded on a path $P$ of size $\ell$. Then, there exists a protocol $\Pi'$ for modular multiplication on instances $N\in \{0,\dots, 2^{\ell'}-1\}$, $\alpha,\beta,\gamma\in \{0,\dots ,N-1\}$ encoded on a path $P$ of size $\ell$. The protocol $\Pi'$ satisfies $\rounds(\Pi')=\rounds(\Pi)$, $\size(\Pi')=O(\size(\Pi))$, $\completeness(\Pi')=\completeness(\Pi)=0$, and $\soundness(\Pi')=\soundness(\Pi)$. 
\end{lemma}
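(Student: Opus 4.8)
The plan is to reduce modular multiplication to plain multiplication in exactly the same way Lemma~\ref{lemma:addition-mod-p} reduces modular addition to plain addition, observing that the range restriction $\alpha,\beta,\gamma \in \{0,\dots,N-1\}$ with $N \leq 2^{\ell'}$ makes the product $\alpha\cdot\beta$ small enough to be handled by the (non-modular) multiplication protocol $\Pi$ after an appropriate padding. Concretely, if $\alpha\cdot\beta \equiv \gamma \pmod N$, then there is a unique quotient $k$ with $0 \le k < N$ such that $\alpha\cdot\beta = k\cdot N + \gamma$. The honest prover will supply $k$ (and the intermediate product $p := \alpha\cdot\beta$, and the product $k\cdot N$) encoded on fresh stretches of the path, and the verifier checks the identity $p = k\cdot N + \gamma$ using $\Pi$ together with the addition protocol of Lemma~\ref{lemma:addition}.

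The steps, in order: (1) Since $\alpha,\beta < N \le 2^{\ell'}$, we have $p = \alpha\beta < 2^{2\ell'}$ and $k < N < 2^{\ell'}$, so $k N < 2^{2\ell'}$; all these numbers fit in $2\ell'$ bits. The path $P$ has $\ell$ nodes available; we need room to encode a constant number of auxiliary $2\ell'$-bit (or $\ell'$-bit) values in addition to the input, which is fine because the reduction is only claimed relative to a path ``of size $\ell$'' matching $\Pi$'s requirement — we simply invoke $\Pi$ as a black box on the relevant $2\ell'$-bit multiplication sub-instances, noting that $\Pi$ by hypothesis works for instances in $\{0,\dots,2^{\ell'}-1\}$ on a path of size $\ell$, and here every factor we multiply is $< 2^{\ell'}$. (2) The prover writes $k$, $p = \alpha\beta$, and $w = kN$ on designated sub-paths. (3) The verifier runs $\Pi$ once to certify $\alpha\cdot\beta = p$ and once (or reusing a parallel copy) to certify $k\cdot N = w$; both of these are legitimate multiplication instances with factors in $\{0,\dots,2^{\ell'}-1\}$. (4) The verifier runs the addition proof labeling scheme of Lemma~\ref{lemma:addition} to certify $w + \gamma = p$. (5) Completeness is immediate: the honest prover sets $k = \lfloor \alpha\beta/N\rfloor$ and all three sub-protocols accept, so $\completeness(\Pi') = \completeness(\Pi) = 0$. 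For soundness, if $\alpha\cdot\beta \not\equiv \gamma \pmod N$ then for every choice of $k, p, w$ the prover provides, at least one of the three identities $\alpha\beta = p$, $kN = w$, $w+\gamma = p$ fails — because their conjunction would force $\alpha\beta = kN+\gamma$, i.e.\ $\alpha\beta \equiv \gamma \pmod N$ — and the corresponding sub-protocol rejects with probability at least $1 - \soundness(\Pi)$ (the addition scheme has soundness $0$). Hence $\soundness(\Pi') = \soundness(\Pi)$. Running the copies of $\Pi$ in parallel with the addition scheme keeps $\rounds(\Pi') = \rounds(\Pi)$ and $\size(\Pi') = O(\size(\Pi))$, as claimed; note Lemma~\ref{lemma:multiplication-to-equality} guarantees $\rounds(\Pi)$ is odd (ends with a prover round), so the addition scheme (which is a single prover round) composes in parallel without adding rounds.

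The one point requiring care — and the only place I expect any friction — is bookkeeping the path length: we need to encode the original triple $(N,\alpha,\beta,\gamma)$ together with the auxiliary values $k,p,w$ on a single path of size $\ell$, yet $\Pi$ is specified to run ``on a path $P$ of size $\ell$'' with its inputs drawn from $\{0,\dots,2^{\ell'}-1\}$. The clean way to handle this is exactly the pattern already used for arithmetic tasks in this section: each arithmetic gadget (the GT, addition, and multiplication primitives) operates on a designated contiguous block of $P$, and since $\ell' \le \sqrt{\ell/2}$ in the regime where $\Pi$ itself is instantiated (via Lemma~\ref{lemma:multiplication-to-equality}), there is ample slack — $O(\ell')$ additional nodes — to host a constant number of $O(\ell')$-bit auxiliary strings. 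So the statement as phrased (asserting $\Pi'$ exists on a path of size $\ell$ whenever $\Pi$ does) goes through, and the proof is essentially a structural composition of $\Pi$ with Lemma~\ref{lemma:addition}, mirroring Lemma~\ref{lemma:addition-mod-p}; I would present it at that level of detail and relegate the padding arithmetic to a sentence.
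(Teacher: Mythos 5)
Your proposal is correct and is essentially the argument the paper intends: the paper gives no explicit proof, stating only that the lemma follows ``by a similar reasoning to the addition case,'' and the natural instantiation of that reasoning is exactly your reduction (prover supplies the quotient $k$ and the products $\alpha\beta$ and $kN$, certified by two parallel invocations of $\Pi$ plus the addition scheme of Lemma~\ref{lemma:addition}). Your explicit handling of the quotient --- which for multiplication can be any value in $\{0,\dots,N-1\}$ rather than just $0$ or $1$ as in the addition case --- and of the path-length bookkeeping fills in details the paper leaves implicit.
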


	
\section{Technical Overview}
As demonstrated in \cite{GilP2025}, a technical barrier to obtaining efficient protocols for planarity is the LR-sorting task (see Lemma \ref{lemma:sorting-to-planarity}). Therefore, in Section \ref{section:sorting} we provide new protocols for LR-sorting in planar graphs. We now provide a technical overview of these protocols and the tools developed in order to obtain them. 

\paragraph{The $\double$ protocol.} In Section \ref{section:double-log}, we present the $\double$ protocol which solves LR-sorting in planar graphs in $3$ interaction rounds and a proof size of $O(\log \log n)$. Let $(G,H)$ be an instance for LR-sorting consisting of a directed planar graph $G=(V,E)$ and a Hamiltonian path $H$ of $G$. The protocol begins similarly to the one presented in \cite{GilP2025}: the prover partitions $H$ into \emph{blocks} of size $\log n$, where each block is made up of consecutive nodes in $H$. This block-partition naturally divides the edges of $G$ into two sets, namely \emph{inner-block} edges whose endpoints lie in the same block, and \emph{outer-block} edges whose endpoints lie in distinct blocks. The prover then seeks to prove the validity of the inner-block and outer-block edges w.r.t.\ LR-sorting. That is, for every inner-block/outer-block edge $(u,v)\in E$, the prover seeks to prove that $u$ appears before $v$ in $H$. 

In this section, we will focus on the validity of the outer-block edges. Let us assume for simplicity that for each outer-block edge $e=(u,v)\in E$, both endpoints know that $e$ is an outer-block edge. Moreover, assume that each block $b$ is given a value $\pos(b)$ reflecting its \emph{position} in the ordering induced by $H$. The value $\pos(b)$ is given to $b$ in a distributed manner (as described in Section \ref{section:arithmetic}) such that each node $v\in b$ receives a single bit from $\pos(b)$. For an outer-block edge $e=(u,v)\in E$, let $b_{u}$ denote $u$'s block and let $b_{v}$ denote $v$'s block. To prove the validity of $e$, the prover needs to prove that $\pos(b_{u})<\pos(b_{v})$.

Let us first consider the case where $e$ is the \emph{only} outer-block edge incident on $b_{u}$. The idea for proving $e$'s validity is as follows. First, the prover assigns a value $z(b_{u})=\pos(b_{v})$ to the block $b_{u}$ and provides proof that $z(b_{u})>\pos(b_{u})$ based on the scheme for the GT task described in Lemma \ref{lemma:greater-than}. Then, it is left for the prover to provide proof that indeed $z(b_{u})$ is equal to $\pos(b_{v})$. Notice that this is an equality task. Therefore, it can be solved by executing the protocol of Lemma \ref{lemma:equality} on $b_{u}$ and $b_{v}$. Since the blocks are of size $\log n$, we get that the proof size is $O(\log \log n)$. 

Moving on to the general case, the challenge is that each block $b$ may have many incident outer-block edges. Assigning $b$ with a bitstring for each such edge would result in a large proof size. The main observation is that based on some standard properties of planar graphs, it is possible to assign each block with only a constant number of equality instances such that every outer-block edge is ``covered'' by an equality instance. In other words, the task of verifying the validity of outer-block edges reduces to solving a constant number of equality tasks in each block. Therefore, towards obtaining protocols with $o(\log \log n)$ total communication, we focus on reducing the proof size associated with equality.

\paragraph{The self-reduction for equality.}
In Section \ref{section:self-reduction}, we present a self-reduction for a variation of the equality task referred to as $(g,\ell)$-equality. The $(g,\ell)$-equality task differs from standard equality by the way in which the bitstrings $\alpha,\alpha'$ are encoded on their respective paths $P,P'$. Specifically, in $(g,\ell)$-equality the bitstrings are of length $\ell$ and they are encoded on paths of size $g\cdot \ell$ such that every pair of consecutive bits are assigned to nodes at path-distance $g$. We now sketch the idea of how $(g=\Theta(\log^{2} \ell),\ell )$-equality can be reduced to a constant number of (standard) equality tasks on instances of size $O(\log \ell)$. As we will see, the role of the \emph{gap} component $g$ is to provide space for the prover to encode bitstrings as well as proofs for arithmetic operations (particularly multiplication) without overloading the nodes' labels. This idea can be easily extended to obtain the full statement presented in Lemma \ref{lemma:self-reduction}. 


The intuitive idea behind the self-reduction is the following. Let $q$ be a prime number in the range $[\ell^{2},2\ell^{2}]$ and let $e^{*}=(u^{*},v^{*})\in P\times P'$ be the edge guaranteed from the equality definition. Based on Lemma \ref{lemma:poly-identity}, equality between $\alpha$ and $\alpha'$ can be tested by checking that $\Phi_{q}(r;\alpha)=\Phi_{q}(r;\alpha')$ for a uniformly drawn $r\in \{0,\dots, q-1\}$. So, if the prover encodes (in a distributed manner) the value $\Phi_{q}(r;\alpha)$ on a sequence of $O(\log q)=O(\log \ell)$ nodes that include $u^{*}$, and $\Phi_{q}(r;\alpha')$ on a sequence of $O(\log q)=O(\log \ell)$ nodes that include $v^{*}$, then verifying the equality between the bitstrings is precisely an equality task. Of course, this equality alone is not enough since a malicious prover can simply \emph{lie} about the values $\Phi_{q}(r;\alpha),\Phi_{q}(r;\alpha')$. It is therefore the main objective to have the prover provide proof that the polynomial values $\Phi_{q}(r;\alpha)$ and $\Phi_{q}(r;\alpha')$ were computed correctly. Interestingly, it turns out that this \emph{polynomial evaluation} verification task can also be reduced to equality between bitstrings of length $O(\log \ell)$. 

We consider the polynomial evaluation task w.r.t.\ the value $\Phi_{q}(r;\alpha)$; the case for $\Phi_{q}(r;\alpha')$ is symmetrical. For each $i\in [\ell]$, the prover assigns the value $\Phi_{q}(r;\alpha[1\dots i])$ to a \emph{segment} of $O(\log \ell)$ consecutive nodes starting from the node to which $\alpha[i]$ is encoded. The prover then seeks to prove the correctness of the assignments. Notice that if the prover manages to prove all $\Phi_{q}(r;\alpha[1\dots i])$ assignments, then in particular it proves the correctness of $\Phi_{q}(r;\alpha[1\dots \ell])=\Phi_{q}(r;\alpha)$ (which is the objective of the polynomial evaluation task).

Consider the segment $S_{i}$ that is associated with $\alpha[i]$. Towards proving the correctness of the $\Phi_{q}(r;\alpha[1\dots i])$ assignment, the prover provides an encoding of $q$, $r$, $\Phi_{q}(r;\alpha[1\dots i-1])$, $r^{i-1}\bmod q$, and $r^{i}\bmod q$. Notice that if the encodings are correct, then it is only left for the verifier to check that $\Phi_{q}(r;\alpha[1\dots i])=\Phi_{q}(r;\alpha[1\dots i-1])$ if $\alpha[i]=0$; and  $\Phi_{q}(r;\alpha[1\dots i])=[\Phi_{q}(r;\alpha[1\dots i-1])+(r^{i}\bmod q)]\bmod q$ otherwise. The latter check can be done by an application of the modular addition protocol described in Lemma \ref{lemma:addition-mod-p}. 

To check the correctness of $r^{i}\bmod q$ (while assuming correctness of the other assignments), the verifier needs to check that $r^{i}\bmod q=[r\cdot (r^{i-1}\bmod q)]\bmod q$. Notice that this is the modular multiplication task which can be reduced to equality based on Lemmas \ref{lemma:multiplication-mod-p} and \ref{lemma:multiplication-to-equality}. Here, due to the selection of $g=\Theta(\log^{2}\ell)$ and the parameters of Lemma \ref{lemma:multiplication-to-equality}, every node participates in the proof of only one modular multiplication (and thus, is associated with a constant number of equality instances that are created as a consequence). Finally, verifying the correctness of the assignments of $q$, $r$, $\Phi_{q}(r;\alpha[1\dots i-1])$, and $r^{i-1}\bmod q$ can be done by comparing them with the corresponding assignments to the segment $S_{i-1}$ associated with $\alpha[i-1]$. A small technical detail here is that $S_{i-1}$ and  $S_{i}$ are not necessarily adjacent, thus these are not equality instances per se. Without getting into the concrete details here, we note that this technicality is easy to mend. The final outcome of the self-reduction is a partition of $P$ and $P'$ into segments of length $O(\log \ell)$ such that each segment is associated with a constant number of equality instances of size $O(\log \ell)$.

\paragraph{The $\iterated$ protocol.} 
In Section \ref{section:recursion}, we present the $\iterated$ protocol which solves LR-sorting in planar graphs in $O(\log ^{*}n)$ interaction rounds and a proof size of $O(1)$. In fact, as we state in Theorem \ref{theorem:general-iterated-log}, $\iterated$ can be modified to obtain a more general trade-off between interaction rounds and proof size. We note that the description below admits some missing implementation details in favor of presenting a high-level sketch of the key ideas.

To give an intuition of how $\iterated$ works, let us first describe a constant-round protocol with a proof size of $O(\log\log \log n)$. The protocol start from a block-partition as in $\double$. An adjustment that is required here is that the blocks are of size $\Theta(\log n(\log \log n)^{2})$. Observe that verifying the validity of inner-block edges essentially reduces to applying the $\double$ protocol within each block. Due to the size of the blocks, the proof size associated with this step is $O(\log \log (\log n(\log \log n)^{2}))=O(\log \log \log n)$. 

We now focus on the outer-block edges. As described in $\double$, the task associated with outer-block edges can be reduced to a constant number of equality tasks between adjacent blocks. Furthermore, due to the block size, the prover can formulate these equality instances as instances of $(g=\Theta(\log^{2} \ell),\ell )$-equality. Given this formulation, for each block $b$, the prover and verifier can execute the self-reduction for each equality instance with which $b$ is associated, in parallel. This results in a partition into segments of size $O(\log \log n)$ such that each segment is associated with a constant number of equality instances. To test these equalities, the prover and verifier invoke the equality protocol of Lemma \ref{lemma:equality} in parallel on every equality instance of each segment. This results in a protocol with a proof size of $O(\log \log \log n)$.    

It is then straightforward to see that by slightly adjusting the block size and applying the self-reduction twice in a row, one can obtain a constant-round protocol with a proof size of $O(\log^{(4)} n)$.  More generally, by applying the self-reduction $d$ times for an arbitrary constant $d$, the result can be extended to a constant-round protocol with a proof size of $O(\log^{(d+2)} n)$. 

\smallskip
\noindent \textbf{First attempt: a na\"ive protocol.} Based on the construction described above, it might be tempting to think that we can get the guarantees of $\iterated$ (as specified in Theorem \ref{theorem:iterated-log}) simply by applying the self-reduction consecutively $\Theta(\log ^{*}n)$ times. However, this reasoning fails for two reasons. First, note that after $\Theta(\log ^{*}n)$ applications of the self-reduction, as many as $2^{\Theta(\log ^{*}n)}$ equality instances can be assigned to each segment. This is because in each invocation of the self-reduction, every equality instance is replaced with a constant number of (exponentially smaller) equality instances. Thus, after $i$ invocations, there can be as many as $2^{\Theta(i)}$ bitstrings assigned for a segment. The second reason for failure is that the reduction to equality only accounts for outer-block edges. We note that it is possible to verify the validity of the inner-block edges by recursively invoking the protocol for outer-block edges within each block. This results in $\Theta((\log ^{*}n)^{2})$ interaction rounds. Therefore, constructing the protocol in a na\"ive manner leads to $\Theta((\log ^{*}n)^{2})$ interaction rounds and a proof size of $2^{\Theta(\log ^{*}n)}$.

\smallskip
\noindent \textbf{Improving the efficiency of the na\"ive protocol.} As we show, it is possible to overcome the two obstacles described above. First, to reduce the proof size, we define a different encoding convention when the self-reduction is applied. The invariant that we maintain is that at every stage, only a constant number of bitstrings are assigned to each segment. Notice that this means in particular that after the final call to the self-reduction, we get segments of constant size such that each of them receives a constant number of bitstrings. The caveat of this encoding convention is that we don't preserve the property that equality instances are defined between pairs of adjacent segments. Indeed, following the $\Theta(\log ^{*}n)$ self-reduction applications, we may need to check equality between bitstrings assigned to non-adjacent segments. 

To face this obstacle, our goal is to implement the equality tests performed in the na\"ive protocol in an efficient manner. Towards that goal, the main idea is to exploit the structure of the $d=\Theta(\log ^{*}n)$ segment-partitions constructed during the self-reduction applications. Let us refer to the segment-partition constructed during the $i$-th call to the self-reduction as the $i$-th \emph{layer}. In the na\"ive protocol, each equality instance that originates in the $i$-th layer can be mapped to an equality instance of the $d$-th layer that corresponds to a propagation of $d-i$ evaluations of the polynomial $\Phi$ (in accordance with the parameters of the self-reduction and the segment size for each layer $i\leq j\leq d$). Therefore, the idea is to have the prover provide the nodes of an $i$-th layer segment $S$ with the propagated values of its equality instances. Given these propagated values, the nodes of $S$ can test the relevant equalities in a direct manner. Since each $i$-th layer segment has only a constant number of equality instances that originate from it, and since the bitstrings assigned in the $d$-th layer are of constant size, this requires an assignment of only constant number of bits per layer which amounts to a total communication of $O(\log ^{*}n)$ bits. Furthermore, as we show, the correctness of the assigned propagated values can be verified via a bottom-up approach. That is, for each layer $i\in [d-1]$, the nodes of an $i$-th segment can verify the validity of the assigned values based on the assignment associated with their $(i+1)$-th layer segment.

Finally, to avoid the quadratic number of interaction rounds, we show that it is possible to implement a ``piggybacking" approach. That is, we show that the recursive calls to the procedure for outer-block edges can be done \emph{in parallel} to the self-reduction invocations. To get a conceptually simple example of how this approach works, let us consider the protocol described above for proof size $O(\log \log \log n)$. In this protocol, we can actually implement $\double$ within each block in parallel to the call to the self-reduction. This only adds a constant number of equality instances for every segment and thus, does not (asymptotically) affect the proof size. Going back to the $\iterated$ protocol, applying this approach reduces the number of interaction rounds to the desired $O(\log^{*}n)$. 

One may wonder whether the consecutive invocations of the self-reduction could also be parallelized to obtain a constant-round protocol. As we explain in Appendix \ref{section:interaction}, such approach is likely to fail even when considering only two invocations of the self-reduction.     

\paragraph{On Breaking the $\log\log n$ Barrier of \cite{NaorPY20} and \cite{GilP2025}.} 
Constant-round DIPs with $O(\log \log n)$ proof size have been presented in \cite{NaorPY20} (for various tasks not related to planarity) and \cite{GilP2025} (for planarity-related tasks). Here, we note that in both works, obtaining a proof size of $o(\log \log n)$ requires a different approach and new ideas. 

We start by sketching the approach of \cite{NaorPY20}. The idea of \cite{NaorPY20} is to divide the graph into clusters of size $\Theta(\log n)$. Then, the proof involves: (1) solving the underlying task within each cluster; and (2) merging solutions between clusters to form a global solution. We note that the size of the clusters is important to obtain step (2) as it allows each cluster to act as a ``super-node'' for which the proof can be distributed among $\Theta(\log n)$ nodes. Both steps are achieved by means of a black-box invocation of a RAM compiler that is developed within that paper. On an $N$-node graph, the RAM compiler transforms any $O(N)$-time centralized algorithm into a constant-round DIP with a proof size of $O(\log N)$. Furthermore, as part of the construction, the RAM compiler incurs a $\Theta(\log N)$ additive overhead in the proof size. Since step (1) requires a call to the RAM compiler on graphs with $N=\Theta(\log n)$ nodes, it directly follows that the proof size associated with that step must be $\Theta(\log N)=\Theta(\log \log n)$. It is also worth mentioning that the approach of \cite{NaorPY20} cannot be directly applied to planarity. Indeed, as pointed out in \cite{GilP2025}, any DIP protocol for planarity that attempts to directly use the framework of \cite{NaorPY20} cannot be sound.

Regarding the approach of \cite{GilP2025}, as mentioned before, the authors reduce the planarity problem to LR-sorting. Then, the result is obtained by a protocol for LR-sorting in planar graphs with a proof size of $O(\log \log n)$. The LR-sorting  protocol of \cite{GilP2025} uses a similar block-partition to the one described above for our protocols. Then, verifying the validity of outer-block edges is done in two consecutive steps. First, the prover commits to a local proof for every outer-block edge. These local commitments are then tested in a verification scheme. As part of the commitment, the prover is required to send some value $i_{e}\in[\log n]$ for each outer-block edge $e\in E$. Therefore, the commitment part already requires a proof size of $\Theta(\log\log n)$ for every outer-block edge. Notice that in our work, we are able to solve the task of verifying the validity of outer-block edges without this commitment-verification paradigm, thus paving the way for proofs that go below the $\log\log n$ barrier.

	
\section{LR-Sorting in Planar Graphs}\label{section:sorting}
In this section, we present new protocols for LR-sorting in planar graphs. First, the $\double$ protocol is presented in Section \ref{section:double-log}. The $\double$ protocol runs in $3$ interaction rounds and has a proof size of $O(\log \log n)$ (refer to Theorem \ref{theorem:double-log} for the full statement). Apart from its improved round-complexity compared to \cite{GilP2025}, a useful property of $\double$ is that an important sub-task of the protocol reduces to the equality task (see Section \ref{section:equality-discussion} for more details). We then focus on a variation of the equality task referred to as $(g,\ell)$-equality in Section \ref{section:self-reduction} (see the concrete definition of $(g,\ell)$-equality within the section). Particularly, Section \ref{section:self-reduction} is devoted to proving Lemma \ref{lemma:self-reduction} which describes a self-reduction result for $(g,\ell)$-equality. Finally, in Section \ref{section:recursion} we build upon the reductions of Sections \ref{section:equality-discussion} and \ref{section:self-reduction} to construct the $\iterated$ protocol which runs in $O(\log ^{*}n)$ rounds and has a proof size of $O(1)$ (refer to Theorem \ref{theorem:iterated-log} for the full statement). As we note, $\iterated$ is a special case of a family of protocols that obtain a more general trade-off between rounds and proof size (refer to Theorem \ref{theorem:general-iterated-log} for the full statement). 

We note that by combining the results stated in Theorems \ref{theorem:double-log}, \ref{theorem:iterated-log}, and \ref{theorem:general-iterated-log} with the reduction stated in Lemma \ref{lemma:sorting-to-planarity}, we prove Theorems \ref{theorem:results-three-planarity}, \ref{theorem:results-logstar-planarity}, and \ref{theorem:results-trade-off}, respectively. Moreover, by combining these results with the reduction stated in Lemma \ref{lemma:sorting-to-outerplanarity}, we prove Theorem \ref{theorem:results-outerplanar}.


\subsection{The $\double$ Protocol} \label{section:double-log}
In this section, we present a new protocol for LR-sorting in planar graphs to which we refer as $\double$. The properties of $\double$ are stated in the following theorem.
\begin{theorem}\label{theorem:double-log}
	The $\double$ protocol solves LR-sorting in planar graphs. It uses $3$ interaction rounds, a proof size of $O(\log \log n)$, perfect completeness, and a soundness error of $1/\log n$. 
\end{theorem}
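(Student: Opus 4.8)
The plan is to realize the block-partition strategy from the technical overview, working under the assumption that the prover may label edges as well as nodes and invoking Lemma~\ref{lemma:edge-labels} at the end to return to node-labels at $O(1)$ multiplicative cost and with no change to rounds or errors. Let $(G,H)$ be an LR-sorting instance with $G$ planar. The honest prover cuts $H$ into consecutive \emph{blocks} $c_1,\dots,c_k$ of $\log n$ nodes (padding the last block if $n$ is not a multiple of $\log n$), assigns block $c_m$ the position $\pos(c_m)=m$ written one bit per node along $c_m$ as in Section~\ref{section:arithmetic}, gives each node $v$ an index $p(v)\in\{1,\dots,\log n\}$ inside its block, marks the block-boundary edges of $H$, and classifies every edge of $E\setminus H$ as \emph{inner-block} (both endpoints in one block) or \emph{outer-block}. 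The verifier checks deterministically that $p$ cycles through $1,2,\dots,\log n$ along $H$ between consecutive boundary marks --- so the maximal boundary-free runs are exactly the blocks --- and for every inner-block edge $(u,v)$ it checks $p(u)<p(v)$; this settles the inner-block edges with no interaction and no error. Making the inner/outer classification un-cheatable --- so the prover cannot relabel a cross-block edge as inner and smuggle in a bogus index comparison --- needs a small consistency gadget hung on the globally verified index sequence; this is routine but fiddly and is a point I would spell out with care.

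For the outer-block edges I would reduce to equality. Form the \emph{block graph} $B$, with one vertex per block and an edge between two blocks whenever $G$ has an edge joining them; $B$ is a minor of the planar graph $G$, hence planar, hence of arboricity at most $3$, so its edges carry an orientation of out-degree at most $3$, which the prover supplies. For each block $b$ and each of its at most $3$ out-neighbours $b'$, the prover writes a value $z_j(b):=\pos(b')$ one bit per node on $b$, together with a single-bit GT certificate (Lemma~\ref{lemma:greater-than}) for whichever of $\pos(b)<z_j(b)$ or $z_j(b)<\pos(b)$ is consistent with the block order (the prover announces which; the verifier cross-checks against the block indices). Since all $G$-edges between two fixed blocks point the same way in the $H$-order, one slot covers all of them, and every outer-block edge merely points its endpoints to the relevant slot. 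What remains is to certify $z_j(b)=\pos(b')$, which is exactly an equality instance on the $\log n$-node paths $b$ and $b'$ with $e^{*}$ taken to be one of the realising $G$-edges; I would discharge it with the protocol of Lemma~\ref{lemma:equality}. To keep the per-node load $O(\log\log n)$ even when a block is the target of many slots, all instances aimed at a fixed $b'$ reuse one evaluation point drawn by a designated node of $b'$, so the $\pos(b')$-side verification is shared; the source side of any block then carries at most $3$ instances. Finally the prover certifies $\pos(c_{m+1})=\pos(c_m)+1$ for consecutive blocks via the addition scheme (Lemma~\ref{lemma:addition}) inside $c_{m+1}$ on a planted copy of $\pos(c_m)$, the copy being tied to the real value by one more equality instance.

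The resulting protocol takes three rounds: round~$1$ delivers all the static labels above; rounds~$2$ and~$3$ run the two-round equality protocol of Lemma~\ref{lemma:equality} in parallel on every equality instance, and the interaction ends with a prover round as the model requires. Index fields cost $O(\log\log n)$ bits; positions, the $z$-values, and the GT/addition certificates cost $O(1)$ bits per node; each equality instance has length $\ell=\log n$ and hence contributes $O(\log\ell)=O(\log\log n)$ bits per node by Lemma~\ref{lemma:equality}, while every node lies in only $O(1)$ instances by the out-degree bound and the shared-randomness device --- so $\size=O(\log\log n)$. On a yes-instance the honest construction passes every deterministic check and makes every equality instance a genuine equality, which Lemma~\ref{lemma:equality} accepts with probability $1$, so $\completeness=0$.

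For soundness, fix a no-instance and an arbitrary prover; the local checks force a legitimate consecutive-block structure or the verifier rejects outright. Pick $(u,v)\in E\setminus H$ with $v$ before $u$ in $H$. If $u$ and $v$ lie in one block, the deterministic check $p(u)<p(v)$ fails. Otherwise $b_v$ strictly precedes $b_u$ in block order. If the prover's positions are not order-consistent, some consecutive pair has $\pos(c_m)\ge\pos(c_{m+1})$, and then either the deterministic addition check inside $c_{m+1}$ fails or the planted copy of $\pos(c_m)$ disagrees with the real one and its equality instance rejects with probability at least $1-1/\log n$. If the positions are order-consistent, then $\pos(b_v)<\pos(b_u)$, and the slot covering the pair $\{b_u,b_v\}$ forces the prover to present a GT certificate that --- through the slot value --- amounts to $\pos(b_u)<\pos(b_v)$, which is false if the slot value is honest; so either that GT certificate is rejected deterministically or the slot value is a lie and the slot's equality instance rejects with probability at least $1-1/\log n$. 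In every branch a single sub-test rejects with probability at least $1-1/\log n$, so no union bound over the many equality instances is needed and $\soundness\le 1/\log n$. The step I expect to be the main obstacle is precisely this outer-block reduction: squeezing the planarity-based covering down to $O(1)$ slots per block while also bounding the target-side load, and arranging the GT directions together with the inner/outer bookkeeping tightly enough that on a no-instance the prover is provably cornered into a single deterministically- or randomly-rejecting sub-test --- all without spending an extra round or an extra logarithmic factor. The arithmetic primitives themselves are off the shelf by Lemmas~\ref{lemma:greater-than},~\ref{lemma:addition}, and~\ref{lemma:equality}.
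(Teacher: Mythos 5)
Your outer-block treatment is essentially the paper's: partition $H$ into $\log n$-node blocks, cover the block-adjacencies by a bounded-out-degree orientation of the (planar) block graph so each block stores $O(1)$ neighbour positions, certify the order with the GT scheme of Lemma~\ref{lemma:greater-than}, and discharge the claim ``stored value $=$ neighbour's position'' with the polynomial-fingerprint equality protocol of Lemma~\ref{lemma:equality} in rounds 2--3. (The paper uses $5$-degeneracy rather than arboricity~$3$, a single global random point $r$ drawn by the leftmost node of $H$ rather than per-target points, and replaces your ``positions are genuinely consecutive'' certification by a cycle argument showing that \emph{any} $\hatpos$-assignment consistent with the GT certificates must contain a lying edge; these are interchangeable.)

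The genuine gap is in your handling of the inner-block edges and, more precisely, of the inner/outer classification. Your deterministic check --- indices cycling $1,\dots,\log n$ between boundary marks, plus $p(u)<p(v)$ on inner-labelled edges --- does not prevent the following attack: take a violating edge $(u,v)$ whose endpoints lie in \emph{different} blocks but whose within-block indices happen to satisfy $p(u)<p(v)$, and have the prover label it inner-block. Every check you list passes, and the edge never enters the outer-block machinery, so the no-instance is accepted with probability $1$. You flag this and defer it to a ``routine consistency gadget hung on the globally verified index sequence,'' but no gadget built from the index sequence alone can work: the indices are oblivious to block identity, and a node cannot afford to learn its neighbour's $\Theta(\log n)$-bit block position. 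The paper's fix is an inherently \emph{randomized} one --- each block's leftmost node samples a fresh $\log\log n$-bit identifier $r_b$, the prover must echo $r_b$ to all nodes of $b$, and every inner-labelled edge checks that its endpoints hold the same identifier --- so a mislabelled cross-block edge is caught unless the two independently sampled identifiers collide, which happens with probability $1/\log n$. This gadget is one of the two sources of the $O(\log\log n)$ proof size and of the $1/\log n$ soundness error, and it consumes the same verifier/prover round pair as the equality protocol; your claim that the inner-block edges are settled ``with no interaction and no error'' is therefore not salvageable as stated. With this component added, the rest of your argument goes through.
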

As we elaborate below, the protocol is obtained by partitioning the edges into two sets --- namely, \emph{inner-block} edges and \emph{outer-block} edges --- and verifying their validity (w.r.t.\ LR-sorting) in parallel. As part of the $\double$ protocol, we show that the task of verifying the validity of the outer-block edges can be reduced to the equality task (as defined in Section \ref{section:arithmetic}) on instances of size $\log n$.\footnote{Throughout the paper, we may slightly abuse the notation and write, e.g., size $\log n$ instead of $\lceil \log n \rceil$.} This reduction plays an important role later on in the protocol presented in Section \ref{section:recursion}. Thus, we state its explicit details in Lemma \ref{lemma:outer-block-to-equality}. We now describe the $\double$ protocol.

Consider an LR-sorting instance $(G,H)$ where $G=(V,E)$ is a directed planar graph on $n$ nodes and $H$ is a Hamiltonian path of $G$. The $\double$ protocol starts by having the prover divide the path $H$ into \emph{blocks} of size $\log n$. Each block consists of $\log n$ consecutive nodes such that the first block consists of the $\log n$ leftmost nodes in $H$, the second block consists of the next $\log n$ nodes and so on. More formally, the blocks are defined such that the $i$-th block is between the $((i-1)\log n+1)$-th node and the $(i\log n)$-th node in $H$.\footnote{Throughout the paper, whenever a path is divided into blocks we assume for simplicity that all blocks are of equal size. Lifting this assumption would result in additional case analyses but would not affect the overall validity of our results.} 

The edges of $G$ are partitioned into \emph{inner-block} edges which are the edges whose endpoints belong to the same block, and \emph{outer-block} edges which are the edges whose endpoints belong to different blocks. We note that in particular, the outer-block edges include the edges of $H$ that go from the rightmost node of a block $b$ to the leftmost node in the block $b'$ that follows $b$ in the $H$-ordering.
 
In the first round of interaction, for each edge $e\in E$, the prover indicates whether it is inner-block or outer-block by means of a single bit to $e$'s label. This is done by writing this bit to a field $\mathtt{class}(e)$ within $e$'s label. In addition, the prover informs every node whether it is the leftmost node of its block. This is done by assigning a single indicating bit as part of the label assignment produced in the first interaction. Notice that this assignment also informs each node whether it is the rightmost node of its block, i.e., whether its right path-neighbor is the leftmost node of the following block. The prover's goal is then split into two sub-tasks --- proving the validity of inner-block edges and proving the validity of outer-block edges. We show a sub-protocol for each of these tasks. The result of Theorem \ref{theorem:double-log} is then obtained by executing the sub-protocols in parallel.

\paragraph{Verifying the validity of inner-block edges.}
The sub-protocol for inner-block edges begins by having the verifier at the leftmost node of each block $b$ sample a bitstring $r_{b}$ of length $\log \log n$ bits and send it to the prover. Intuitively speaking, the role of $r_{b}$ is to act as a randomized ``identifier" for $b$. After receiving the values $r_{b}$ for each block $b$, the prover assigns each node $v\in b$ with the values $\mathtt{rand}(v)=r_{b}$ and $\mathtt{index}(v)=i_{b}(v)$ where $i_{b}(v)$ is $v$'s \emph{index} which is defined such that if $i_{b}(v)=i$, then $v$ is the $i$-th leftmost node in $b$.

For the verification, each node $u\in V$ checks that it received the same $\mathtt{rand}(\cdot)$ value as its block-neighbors and that $\mathtt{index}(u)=\mathtt{index}(u_{1})+1=\mathtt{index}(u_{2})-1$ where $u_{1},u_{2}$ are $u$'s left and right block-neighbors, respectively. In addition, the verifier at $u$ checks that $\mathtt{rand}(u)=\mathtt{rand}(v)$ and $\mathtt{index}(u)<\mathtt{index}(v)$ for every edge $e=(u,v)$ that was labeled by $\mathtt{class}(e)$ as an inner-block edge. Following the communication described above, the sub-protocol for inner-block edges ends with the verifier checking that the $\mathtt{rand}$ value received by the leftmost node of each block $b$ is identical to the sampled bitstring $r_{b}$.


\paragraph{Verifying the validity of outer-block edges.} Towards describing the sub-protocol for outer-block edges, let us establish some additional notations and definitions. Consider the graph $G_{\text{con}}$ obtained by contracting every block into a single node and eliminating edge multiplicities and self-loops. Observe that if $G$ is planar, then so is $G_{\text{con}}$. In particular, this means that $G_{\text{con}}$ is $5$-degenerate.\footnote{Recall that a graph $G$ is $k$-degenerate if every subgraph of $G$ admits a node of degree at most $k$. We remark that for the applications in which LR-sorting is used, it is sufficient to obtain a protocol that works in outerplanar graphs in which case $G$ and $G_{\text{con}}$ are $2$-degenerate.} By a standard combinatorial argument, for every block $b$ we can define a set $D(b)$ of neighboring blocks such that: (1) $|D(b)|\leq 5$; and (2) if $b$ and $b'$ are neighboring blocks, then either $b'\in D(b)$ or $b\in D(b')$. We say that a block $b$ is \emph{accountable} for an outer-block edge $e\in E$ if $e$ has one endpoint in $b$ and another endpoint in a block $b'\in D(b)$.

For each block $b$, its \emph{position} is defined to be $i-1$ if $b$ is the $i$-th leftmost block in $H$. Let $\pos(b)$ denote the position of a block $b$ and define a partition of $D(b)$ into the sets $D^{-}(b)=\{b'\in D(b)\mid \pos(b')<\pos(b)\}$ and $D^{+}(b)=\{b'\in D(b)\mid \pos(b')>\pos(b)\}$. We can now describe the interaction rounds of the sub-protocol.

\subparagraph{Interaction round I:} In the first interaction round, the prover assigns the following bitstrings to each block $b$: (1) $\pos(b)$; (2) the set of positions $S^{-}(b):=\{\pos(b')\mid b'\in D^{-}(b)\}$; and (3) the set of positions $S^{+}(b):=\{\pos(b')\mid b'\in D^{+}(b)\}$. All the assigned bitstrings are encoded in a distributed manner as described in Section \ref{section:arithmetic}. Moreover, the bitstrings in $S^{-}(b)$ (resp., $S^{+}(b)$) are encoded by the prover to the nodes of the block in some arbitrary order and we denote by $s^{-}_{i}(b)$ (resp., $s^{+}_{i}(b)$) the $i$-th bitstring of $S^{-}(b)$ (resp., $S^{+}(b)$) according to that order. Notice that since the blocks are of size $\log n$, the positions can be encoded such that each node receives a single bit for each encoded position (and a constant number of bits in total). In addition, the prover provides a proof that $\pos(b)>s^{-}_{i}(b)$ (resp., $\pos(b)<s^{+}_{i}(b))$) for every bitstring $s^{-}_{i}(b)\in S^{-}(b)$ (resp., $s^{+}_{i}(b)\in S^{+}(b)$) based on Lemma \ref{lemma:greater-than}. If any of the proofs fail, then the verifier rejects immediately.

Regarding the edge-labels, consider some outer-block edge $e$ whose endpoints belong to blocks $b$ and $b'\in D^{-}(b)$ (resp., $b'\in D^{+}(b)$). The prover assigns a label to $e$ consisting of the fields $\mathtt{acc}(e)$ and $\mathtt{ind}(e)$, where $\mathtt{acc}(e)$ points to the endpoint of $e$ that belongs to the accountable block $b$ and $\mathtt{ind}(e)$ is the index for which $s^{-}_{\mathtt{ind}(e)}(b)=\pos(b')$ (resp., $s^{+}_{\mathtt{ind}(e)}(b)=\pos(b')$). Notice that due to the edge directions, $\mathtt{acc}(e)$ can be encoded using only $1$ bit. This concludes the first interaction.

\subparagraph{Interaction rounds II and III:} For a block $b$, we denote by $\hatpos(b)$ the first bitstring assigned to $b$ in the first interaction, i.e.,  the bitstring claimed by the prover to be $\pos(b)$. We denote by $\sigma^{-}_{i}(b)$ and $\sigma^{+}_{i}(b)$ the bitstrings claimed to be $s^{-}_{i}(b)$ and $s^{+}_{i}(b)$, respectively. For every node $v\in V$, let $E^{\text{acc}}(v)$ denote the set of incident edges $e\in E(v)$ that were labeled by the prover as outer-block and the field $\mathtt{acc}(e)$ points to $v$. Let $N_{\text{in}}^{\text{acc}}(v)=\{u\mid (u,v)\in E^{\text{acc}}(v)\}$ and $N_{\text{out}}^{\text{acc}}(v)=\{u\mid (v,u)\in E^{\text{acc}}(v)\}$ be the sets of incoming and outgoing neighbors of $v$ incident on the edges of $E^{\text{acc}}(v)$. Observe that given the labeling produced by the prover in the first interaction, every node $v\in V$ can fully determine the sets $N_{\text{in}}^{\text{acc}}(v)$ and $N_{\text{out}}^{\text{acc}}(v)$.

Consider some node $v\in V$ and denote its block by $b_{v}$. The remainder of the protocol aims to check that $\sigma^{-}_{\mathtt{ind}(u,v)}(b_{v})=\hatpos(b_{u})$ for every $u\in N_{\text{in}}^{\text{acc}}(v)$, and $\sigma^{+}_{\mathtt{ind}(v,u)}(b_{v})=\hatpos(b_{u})$ for every $u\in N_{\text{out}}^{\text{acc}}(v)$, where $b_{u}$ denotes the block to which $u$ belongs. 

To test these equalities, the idea is to follow a similar approach to the one presented in the protocol of Lemma \ref{lemma:equality}. That is, to check that two bitstrings $\alpha$ and $\alpha'$ of length $\log n$ are equal, the nodes seek to check that $\Phi_{q}(r;\alpha)=\Phi_{q}(r;\alpha')$ where $q$ is a prime number in the range $[\log ^{2}n,2\log ^{2}n]$ and $r$ is drawn uniformly at random from $\{0,\dots, q-1\}$. To that end, the value $r$ is drawn and sent to the prover by the leftmost node in $H$. The prover then passes the value $r$ to all nodes in the graph. Moreover, for each block $b$ and bitstring $\alpha$ that was assigned to $b$, the prover assigns the value $\Phi_{q}(r;\alpha)$ to every node $v\in b$ along with a proof that this value is indeed the correct evaluation of $\Phi_{q}(r;\alpha)$ as described in the protocol of Lemma \ref{lemma:equality}. This assignment allows the nodes to test all equality conditions as described above. This completes the description of the protocol. We go on to analyze its complexity and correctness.

\paragraph{Complexity.} 
Notice that the sub-protocols for inner-block and outer-block edges can be executed in parallel. Since each of them operates within $3$ interaction rounds, we get that $\double$ can be executed in $3$ rounds. 

As for the proof size, in the first interaction round the prover assigns constant-sized labels. Indeed, each block receives only a constant number of bitstrings and each of them is distributed among the nodes of the block such that each node receives a single bit. Moreover, note that the edge-labels are also of constant size. Then, in the sub-protocol associated with the inner-block edges, the prover and verifier exchange $O(\log \log n)$ bits. For the sub-protocol associated with the outer-block edges, notice that each block executes the protocol of Lemma \ref{lemma:equality} once for each of its assigned bitstrings. Since each block is assigned a constant number of bitstrings each of which is of length $\log n$, we get that the proof size is $O(\log \log n)$.

\paragraph{Correctness.} The perfect completeness of $\double$ follows from the construction. We now analyze its soundness. For a no-instance of LR-sorting, we say that an edge $e=(u,v)$ is \emph{violating} if $v$ appears before $u$ in the path $H$. Recall that in the first interaction round, the prover assigns a bit $\mathtt{class}(e)$ to each edge $e\in E$ labeling it either inner-block or outer-block. We say that a label assignment is \emph{honorable} if every outer-block edge $e\in E$ is (correctly) labeled by $\mathtt{class}(e)$ as outer-block. Otherwise, it is \emph{dishonorable}. The following lemma establishes the properties of the sub-protocol for inner-block edges. The proof of the lemma is deferred to Appendix \ref{section:missing-proofs-double}.
\begin{lemma}\label{lemma:gp-soundness}
	If the label assignment in the first interaction is dishonorable or there exists a violating edge $e\in E$ labeled by $\mathtt{class}(e)$ as inner-block, then the verifier rejects the instance with probability $1-1/\log n$. 
\end{lemma}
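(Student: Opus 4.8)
The plan is to argue that in either of the two bad cases — a dishonorable label assignment, or a violating inner-block edge — the sub-protocol for inner-block edges catches the inconsistency with probability $1-1/\log n$, where the $1/\log n$ slack comes from the collision probability of the random identifiers $r_b\in\{0,1\}^{\log\log n}$. The key structural observation is that the inner-block sub-protocol enforces a rigid local consistency: each node $u$ checks that $\mathtt{index}(u)=\mathtt{index}(u_1)+1=\mathtt{index}(u_2)-1$ along its block-neighbors, that $\mathtt{rand}$ is constant along block-neighbors, and (crucially) that the leftmost node of each block $b$ received exactly the value $r_b$ it sampled. Since the block is a contiguous segment of $H$ bounded by the indicator bits announced in the first interaction, these checks force: (i) the $\mathtt{index}$ values along the nodes the prover grouped into one ``claimed block'' to be $1,2,3,\dots$ in $H$-order, and (ii) all those nodes to carry the same $\mathtt{rand}$ value, which must equal the $r_b$ sampled at the left end.

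First I would handle the violating inner-block edge case. Suppose $e=(u,v)$ is labeled inner-block and $v$ precedes $u$ in $H$. The verifier at $u$ checks $\mathtt{rand}(u)=\mathtt{rand}(v)$ and $\mathtt{index}(u)<\mathtt{index}(v)$. If the prover assigns $u$ and $v$ the same $\mathtt{rand}$ value, then by the index-monotonicity argument above, the claimed block containing them has strictly increasing $\mathtt{index}$ in $H$-order, so $\mathtt{index}(v)<\mathtt{index}(u)$ (since $v$ is earlier), and the check $\mathtt{index}(u)<\mathtt{index}(v)$ fails deterministically. Hence, to avoid rejection, the prover must give $u$ and $v$ different $\mathtt{rand}$ values — but then the honest leftmost nodes of their respective claimed segments would have to carry two different sampled strings, and the segment containing $u$ and the segment containing $v$ cannot both originate at the same sampled value; tracing the $\mathtt{rand}$-constancy from $u$ (resp.\ $v$) leftward along $H$ to the segment's left endpoint, the prover is forced to break $\mathtt{rand}$-constancy somewhere (caught deterministically) unless $u$'s segment and $v$'s segment start at nodes with different sampled $r$ values. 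The only way this is consistent with $u,v$ being in the ``same'' contiguous block is impossible, so the prover's only recourse is to make two distinct nodes along $H$ be forced to claim the same $\mathtt{rand}$ value as two different honest samples, which requires a collision among the $r_b$'s; this happens with probability at most $1/\log n$ by a union bound over the at most $n/\log n$ blocks against strings of length $\log\log n$ (with the constant absorbed as in the statement, or after the standard amplification footnote).

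Next I would handle the dishonorable assignment case: some genuine outer-block edge $e=(u,v)$ is mislabeled as inner-block. Then $u\in b$ and $v\in b'$ for two distinct true blocks. The verifier at $u$ checks $\mathtt{rand}(u)=\mathtt{rand}(v)$. The honest leftmost nodes of $b$ and $b'$ sampled $r_b$ and $r_{b'}$ independently and uniformly from $\{0,1\}^{\log\log n}$; by the $\mathtt{rand}$-constancy checks propagated along each block, $u$ is forced to carry (a value chainable to) $r_b$ and $v$ to carry $r_{b'}$, so unless the prover breaks a constancy check somewhere (deterministic rejection), we need $r_b=r_{b'}$, which occurs with probability at most $1/\log n$; a union bound over all mislabeled edges / all block pairs again gives the claimed $1-1/\log n$ rejection probability (the number of distinct $(b,b')$ pairs that are adjacent in the planar contracted graph is $O(n/\log n)$, so the bound survives, again up to the constant in the $O(1/\log n)$).

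The main obstacle I expect is making rigorous the ``forced chain'' argument — i.e., showing that the prover genuinely cannot locally cheat on $\mathtt{index}$/$\mathtt{rand}$ within a claimed block without a deterministic rejection. This requires carefully using the first-interaction indicator bit that marks leftmost nodes: a malicious prover could try to declare block boundaries different from the honest partition, so one must argue that whatever segmentation the prover declares, the left-endpoint check (that the declared-leftmost node's $\mathtt{rand}$ equals the value *it* sampled) pins down $\mathtt{rand}$ on that whole declared segment, and the $\mathtt{index}=+1/-1$ chain pins down a strictly increasing $H$-order index along it. Once this rigidity lemma is in place, both cases reduce to the birthday-style collision bound on $\log\log n$-bit identifiers, and the $1/\log n$ error follows from $2^{-\log\log n}=1/\log n$ together with a union bound over $\poly(n/\log n)$ events, absorbed into the $O(1/\log n)$ of the theorem statement (or driven below any constant via the repetition footnote).
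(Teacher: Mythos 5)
Your overall strategy is the same as the paper's: split into (a) the dishonorable case, caught by forcing an $r_{b}=r_{b'}$ collision between the two genuinely distinct blocks straddled by a mislabeled edge, and (b) the honorable case with a violating edge labeled inner-block, caught deterministically because the $\mathtt{index}$ chain forces strictly increasing indices along $H$ within a declared block, so $\mathtt{index}(v)<\mathtt{index}(u)$ and the check $\mathtt{index}(u)<\mathtt{index}(v)$ fails. The ``rigidity'' observations you emphasize (the leftmost node's check against its own sampled $r_{b}$, plus $\mathtt{rand}$-constancy and the $\pm 1$ index chain) are exactly the supporting facts the paper's terse proof implicitly relies on.

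The one genuine problem is the union bound you invoke at the end of each case. You sum a collision probability of $1/\log n$ over $\Theta(n/\log n)$ blocks (or adjacent block pairs) and assert the total is still $O(1/\log n)$; that computation is false, since $\Theta(n/\log n)\cdot (1/\log n)=\Theta(n/\log^{2} n)\gg 1$, so the bound as written is vacuous. Fortunately no union bound is needed, and it is in fact aimed in the wrong direction: to prove soundness you only need to exhibit \emph{one} verifier check that fails with probability at least $1-1/\log n$. Fix a single mislabeled edge $e=(u,v)$ with $u\in b$, $v\in b'$, $b\neq b'$. The deterministic chain checks pin $\mathtt{rand}(u)$ to $r_{b}$ and $\mathtt{rand}(v)$ to $r_{b'}$ (or else some node already rejects with probability $1$), so the check $\mathtt{rand}(u)=\mathtt{rand}(v)$ passes only if $r_{b}=r_{b'}$, which for two independent uniform $(\log\log n)$-bit strings happens with probability exactly $2^{-\log\log n}=1/\log n$; no other blocks enter the calculation. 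This is precisely how the paper argues. A secondary, minor point: in your treatment of the violating inner-block edge, once you suppose the prover assigns $\mathtt{rand}(u)\neq\mathtt{rand}(v)$, the verifier at $u$ already rejects deterministically via its direct check $\mathtt{rand}(u)=\mathtt{rand}(v)$ on inner-block-labeled edges, so the subsequent chain-tracing and collision argument there is superfluous (though not incorrect).
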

For a label assignment given by the prover in the first interaction, we say that an edge $e=(u,v)\in E$ is \emph{lying} if $\mathtt{class}(e)$ labels it as outer-block and $u$ and $v$ belong to (not necessarily distinct) blocks $b_{u}$ and $b_{v}$ such that $\hatpos(b_{u})\geq \hatpos (b_{v})$. We now prove the following two observations.
\begin{observation}\label{observation:double-log-honorable-to-lying}
	Suppose that there exists a violating edge $e=(u,v)\in E$ that was classified by the prover as outer-block. Then, any honorable label assignment admits a lying edge.
\end{observation}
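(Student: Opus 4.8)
\textbf{Proof plan for Observation~\ref{observation:double-log-honorable-to-lying}.}

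The plan is to argue by contraposition on the structure of the positions, using only the definitions of violating/lying edges and of $\hatpos$. Fix a violating edge $e=(u,v)\in E$ that the prover classified as outer-block, so that $v$ appears before $u$ in $H$, and fix any honorable label assignment. Let $b_u$ and $b_v$ denote the blocks of $u$ and $v$. Since the blocks are contiguous segments of $H$ and $v$ precedes $u$, we have $\pos(b_v)\le \pos(b_u)$; here I use the fact that $b_u\ne b_v$ because $e$ is a genuine outer-block edge and the assignment is honorable (so $\mathtt{class}(e)$ correctly marks $e$ as outer-block, and an honorable assignment does not force anything false, but the relevant point is just that an honorably-labeled outer-block edge really does cross blocks). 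Hence in fact $\pos(b_v)<\pos(b_u)$.

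Now consider the bitstrings $\hatpos(b_u)$ and $\hatpos(b_v)$ claimed by the prover. If $\hatpos(b_u)\ge\hatpos(b_v)$, then by definition the edge $e$ itself is lying (it is classified outer-block, and its endpoints lie in blocks whose claimed positions satisfy the lying inequality), and we are done. Otherwise $\hatpos(b_u)<\hatpos(b_v)$; combined with the true inequality $\pos(b_v)<\pos(b_u)$, this means the claimed positions $\hatpos$ and the true positions $\pos$ disagree on the relative order of $b_u$ and $b_v$. The idea is then to walk along the block sequence between $b_v$ and $b_u$ in $H$-order: since the true positions are strictly increasing along this walk while the claimed positions $\hatpos$ go from the larger value $\hatpos(b_v)$ down (eventually) to the smaller value $\hatpos(b_u)$, there must be two consecutive blocks $b,b'$ in the $H$-order (with $\pos(b')=\pos(b)+1$) for which $\hatpos(b)\ge\hatpos(b')$. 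Finally, observe that every pair of consecutive blocks is joined by an edge of $H\subseteq E$ — namely the edge from the rightmost node of $b$ to the leftmost node of $b'$ — and this edge is an outer-block edge, hence in an honorable assignment it is classified as outer-block. That edge therefore satisfies the definition of a lying edge, which completes the argument.

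The only mild subtlety — and the step I would be most careful with — is the transition from ``claimed and true orders of $b_u,b_v$ disagree'' to ``some adjacent pair of blocks is lying.'' The cleanest way to phrase it: let $b_v = c_0, c_1, \dots, c_k = b_u$ be the blocks listed in $H$-order, so $\pos(c_{j+1})=\pos(c_j)+1$. We have $\hatpos(c_0)=\hatpos(b_v)>\hatpos(b_u)=\hatpos(c_k)$, so the sequence $\hatpos(c_0),\dots,\hatpos(c_k)$ is not strictly increasing; pick the smallest $j$ with $\hatpos(c_j)\ge\hatpos(c_{j+1})$. Then the $H$-edge from $c_j$ to $c_{j+1}$ is outer-block, honorably labeled as such, and lying. (If instead $e$ itself was already lying we stopped earlier.) No randomness or protocol execution enters this observation at all — it is a purely combinatorial statement about the prover's first-round claims — so there is no probabilistic estimate to control here; the work is entirely in setting up the notation so that the monotonicity argument is airtight.
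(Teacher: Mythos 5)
Your argument is essentially the paper's argument in unrolled form. The paper closes the cycle $C$ consisting of $e$ together with the $H$-subpath from $v$ to $u$, lists the blocks around $C$, and invokes cyclicity: in any cyclic sequence of $\hatpos$ values, some consecutive pair must satisfy $\hatpos(b_i)\geq\hatpos(b_{(i+1)\bmod k})$, and the corresponding $C$-edge is lying. Your two cases --- either $e$ itself is lying, or the claimed order of $b_u,b_v$ disagrees with the true order and a monotonicity scan along the $H$-order of blocks from $b_v$ to $b_u$ finds a consecutive non-increasing pair --- are exactly that cyclicity argument split at the edge $e$. Your observation that the consecutive-block $H$-edges exist in $E$, are genuine outer-block edges, and are therefore honorably labeled outer-block is the same point the paper relies on.

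There is, however, one incorrect step: you claim that $b_u\neq b_v$ because ``an honorably-labeled outer-block edge really does cross blocks.'' That does not follow from the paper's definition. An assignment is honorable if every \emph{genuine} outer-block edge is labeled outer-block; nothing prevents an honorable prover from also labeling an inner-block edge as outer-block. So the hypothesis of the observation (a violating edge \emph{classified} as outer-block) is compatible with $b_u=b_v$, and your derivation $\pos(b_v)<\pos(b_u)$ breaks down there. The fix is the one-liner the paper opens with: if $b_u=b_v$ then $\hatpos(b_u)=\hatpos(b_v)$, so $\hatpos(b_u)\geq\hatpos(b_v)$ and $e$ is itself lying --- note the definition of a lying edge explicitly allows ``not necessarily distinct'' blocks precisely to cover this case. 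With that case added, your proof is complete.
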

\begin{proof}
	First, consider the case that $e$ is an inner-block edge (and was wrongfully labeled by $\mathtt{class}(e)$). In this case, the observation holds trivially since $e$ is a lying edge for any label assignment. Now, suppose that $e$ is an outer-block edge. Since $e$ is a violating edge, adding it to the subpath of $H$ from $v$ to $u$ closes a cycle $C$. Let us denote by $b_{0},\dots , b_{k-1}$ the blocks that participate in $C$ such that for all $0\leq i\leq k-1$, there exists a $C$-edge from a node in $b_{i}$ to a node in $b_{(i+1)\bmod k}$. Due to the cyclicity, for any assignment of $\hatpos$ values to the blocks, there exists an index $0\leq i\leq k-1$ for which $\hatpos(b_{i})\geq \hatpos(b_{(i+1)\bmod k})$. For such $i$, the $C$-edge from a node in $b_{i}$ to a node in $b_{(i+1)\bmod k}$ is by definition a lying edge.
\end{proof}
\begin{observation}\label{observation:double-log-lying-to-rejection}
	Suppose that the label assignment given by the prover in the first interaction admits a lying edge. Then, the verifier rejects the instance with probability $1-1/\log n$.
\end{observation}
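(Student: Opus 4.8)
The plan is to localize the whole argument at a single lying edge. Fix a lying edge $e=(u,v)\in E$, i.e., one with $\mathtt{class}(e)=\text{outer-block}$ whose blocks $b_{u},b_{v}$ satisfy $\hatpos(b_{u})\geq \hatpos(b_{v})$, and show that the equality test the protocol runs because of $e$ already makes the verifier reject with probability $1-1/\log n$. The field $\mathtt{acc}(e)$ points to exactly one endpoint of $e$; I would handle the case in which it points to $v$ and observe that the case in which it points to $u$ is symmetric, with $S^{-}$ replaced by $S^{+}$ and the GT inequality reversed (there the relevant proof is $\hatpos(b_{u})<\sigma^{+}_{i}(b_{u})$). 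Assuming $\mathtt{acc}(e)$ points to $v$, the orientation of $e$ places $u$ in $N^{\text{acc}}_{\text{in}}(v)$, so the protocol performs the check $\sigma^{-}_{\mathtt{ind}(u,v)}(b_{v})=\hatpos(b_{u})$, realized through the polynomial fingerprints.

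First I would clear away the deterministic rejection modes: if any GT proof of Interaction round~I fails, or $\mathtt{ind}(e)$ is out of range, or any within-block proof of correct evaluation (as in Lemma \ref{lemma:equality}) fails, then some node rejects with probability $1$ and we are done. So suppose none of these occur. Since the GT scheme of Lemma \ref{lemma:greater-than} is a deterministically sound proof labeling scheme, the passing of its proofs means the bitstrings actually written on $b_{v}$ satisfy $\hatpos(b_{v})>\sigma^{-}_{j}(b_{v})$ for every present index $j$, in particular for $j=\mathtt{ind}(u,v)$; and the passing of the within-block evaluation proofs means the fingerprint values the prover wrote on the nodes of $b_{v}$ and of $b_{u}$ are the genuine evaluations $\Phi_{q}(r;\sigma^{-}_{\mathtt{ind}(u,v)}(b_{v}))$ and $\Phi_{q}(r;\hatpos(b_{u}))$. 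Now the dichotomy: if $\sigma^{-}_{\mathtt{ind}(u,v)}(b_{v})=\hatpos(b_{u})$ as integers, then combining with the GT inequality gives $\hatpos(b_{v})>\hatpos(b_{u})$, contradicting $\hatpos(b_{u})\geq \hatpos(b_{v})$; hence $\sigma^{-}_{\mathtt{ind}(u,v)}(b_{v})\neq \hatpos(b_{u})$. These are distinct bitstrings of length $\log n$ and $q$ is a prime in $[\log^{2}n,2\log^{2}n]$, so Lemma \ref{lemma:poly-identity} gives $\Pr_{r}[\Phi_{q}(r;\sigma^{-}_{\mathtt{ind}(u,v)}(b_{v}))=\Phi_{q}(r;\hatpos(b_{u}))]\leq \log n/q\leq 1/\log n$. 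Since $v$ reads the second fingerprint off the adjacent node $u$ and compares it with its own label, with probability at least $1-1/\log n$ over the random $r$ drawn by the leftmost node of $H$ the two values differ and $v$ rejects; together with the deterministic case this yields the claimed bound.

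The main obstacle I expect is the bookkeeping around the fingerprint values rather than the probability estimate (which is a one-line appeal to Lemma \ref{lemma:poly-identity}): one has to argue carefully that a prover who misreports a fingerprint value cannot survive the within-block verification provided by Lemma \ref{lemma:equality}, so that — conditioned on no deterministic rejection — the only remaining randomness is the single draw of $r$, exactly as captured by Lemma \ref{lemma:poly-identity}. A secondary point to get right is that, although the value $r$ is shared across all fingerprint checks in the protocol, catching a single lying edge already forces a global rejection, so no union bound over edges is needed.
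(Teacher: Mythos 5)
Your proposal is correct and follows essentially the same argument as the paper: use the deterministically verified GT inequality together with the lying condition $\hatpos(b_{u})\geq\hatpos(b_{v})$ to conclude that the two compared bitstrings must differ, then invoke Lemma \ref{lemma:poly-identity} to bound the fingerprint-collision probability by $(\log n)/q\leq 1/\log n$. The only (immaterial) differences are that you treat the case where $\mathtt{acc}(e)$ points to the other endpoint than the paper does, and you spell out the deterministic rejection modes in slightly more detail.
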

\begin{proof}
	Suppose that $e=(u,v)$ is a lying edge and let $b_{u}$ and $b_{v}$ be the blocks to which $u$ and $v$ belong, respectively. Let us assume that $\mathtt{acc}(e)$ points to $u$ (the case where $\mathtt{acc}(e)$ points to $v$ is analogous) and let $i$ denote the index given in $\mathtt{ind}(e)$. We can assume that $\sigma_{i}^{+}(b_{u})>\hatpos(b_{u})$ (as this is checked by the verifier in a direct manner following interaction round I). Since $e$ is a lying edge, we get that $\sigma_{i}^{+}(b_{u})>\hatpos(b_{u})\geq \hatpos(b_{v})$ and in particular, $\sigma_{i}^{+}(b_{u})\neq  \hatpos(b_{v})$. By construction, the verifier rejects unless $\Phi_{q}(r;\sigma_{i}^{+}(b_{u}))= \Phi_{q}(r;\hatpos(b_{v}))$ which by Lemma \ref{lemma:poly-identity}, occurs with probability at most $(\log n)/q<1/\log n$.
\end{proof}
It is now simple to establish the protocol's soundness.
\begin{lemma}[Soundness]\label{double-log-soundness}
	If $(G,H)$ is a no-instance, then the verifier of $\double$ rejects the instance with probability $1-1/\log n$ against any prover.
\end{lemma}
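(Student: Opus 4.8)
The plan is a short exhaustive case analysis on the first-round label assignment produced by the (possibly dishonest) prover, feeding directly into the three results just proved. Fix a no-instance $(G,H)$. By the definition of a no-instance, there is at least one violating edge $e=(u,v)\in E$, i.e., an edge for which $v$ precedes $u$ in $H$. Recall also that the inner-block and outer-block sub-protocols run in parallel, and that the verifier of $\double$ rejects as soon as a single node outputs `no'; hence it suffices, in each case below, to exhibit \emph{one} of the two sub-protocols that rejects with probability $\ge 1-1/\log n$, and no union bound is incurred.

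\emph{Case 1: the first-round label assignment is dishonorable, or some violating edge is labeled by $\mathtt{class}$ as inner-block.} Here I would simply invoke Lemma~\ref{lemma:gp-soundness}, which says exactly that in this situation the (inner-block) verifier rejects with probability $1-1/\log n$.

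\emph{Case 2: the label assignment is honorable and no violating edge is labeled as inner-block.} Since $(G,H)$ is a no-instance, a violating edge exists, and under the hypotheses of this case it must be labeled as outer-block; thus we have a violating edge classified by the prover as outer-block. Applying Observation~\ref{observation:double-log-honorable-to-lying} to this honorable assignment then produces a lying edge, and Observation~\ref{observation:double-log-lying-to-rejection} shows that the (outer-block) verifier rejects with probability $1-1/\log n$. As Cases 1 and 2 are exhaustive, the verifier of $\double$ rejects with probability at least $1-1/\log n$ against any prover.

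I do not expect a genuine obstacle: all the work has been off-loaded to Lemma~\ref{lemma:gp-soundness} (whose proof is deferred to the appendix) and to the two observations. The only points requiring a line of care are (i) confirming the case split is exhaustive and (ii) noting that in each case we rely on exactly one sub-protocol, so the per-sub-protocol bound transfers to $\double$ without loss; and, within Case 2, that ``honorable, no violating inner-block edge, no-instance'' really does force a violating \emph{outer-block} edge to exist — which is immediate since a no-instance supplies a violating edge that, by hypothesis, cannot be labeled inner-block.
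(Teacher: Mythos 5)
Your proposal is correct and follows essentially the same two-case decomposition as the paper's own proof: the dishonorable/inner-block case handled by Lemma~\ref{lemma:gp-soundness}, and the honorable outer-block case handled by Observations~\ref{observation:double-log-honorable-to-lying} and~\ref{observation:double-log-lying-to-rejection}. Your added remarks on exhaustiveness and on relying on a single sub-protocol per case are sound and do not change the argument.
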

\begin{proof}
	Suppose that $(G,H)$ is a no-instance, i.e., it admits a violating edge $e\in E$. If $e$ is labeled by $\mathtt{class}(e)$ as inner-block or the prover produces a dishonorable labeling, then by Lemma \ref{lemma:gp-soundness} we get that the verifier rejects the instance with probability $1-1/\log n$. So, suppose that a violating edge $e$ is labeled by $\mathtt{class}(e)$ as outer-block and that the labeling provided by the prover is honorable. Then, it follows from Observations \ref{observation:double-log-honorable-to-lying} and \ref{observation:double-log-lying-to-rejection} that the verifier rejects the instance with probability $1-1/\log n$ which completes the soundness proof.
\end{proof}
\subsubsection{On the Reduction to Equality}\label{section:equality-discussion}
 As described above, the sub-protocol associated with verifying the validity of outer-block edges is facilitated by a reduction to the equality task. This reduction is defined by the interaction of the first round of $\double$. Since this reduction plays an important role in the $\iterated$ protocol of Section \ref{section:recursion}, we explicitly state its properties in the following lemma.
\begin{lemma}[Derived from the analysis of $\double$]\label{lemma:outer-block-to-equality}
	Let $\mathsf{EQ}_{k}$ be a protocol with perfect completeness for equality on instances of size $k$. Let $(G,H)$ be an instance of LR-sorting which consists of a directed planar graph $G=(V,E)$ on $n$ nodes and a Hamiltonian path $H$ in $G$. Moreover, suppose that $H$ is divided into blocks of size $\ell\geq \log n$ and let $F$ be the set of outer-block edges. Then, there exists a protocol $\Pi$ for the validity of $F$ (w.r.t.\ LR-sorting) such that:
	\begin{itemize}
		\item $\rounds(\Pi)=\rounds(\mathsf{EQ}_{\ell})+1-(\rounds(\mathsf{EQ}_{\ell})\bmod 2)$
		\item $\size(\Pi)=O(\size(\mathsf{EQ}_{\ell}))$
		\item $\completeness(\Pi)=\completeness(\mathsf{EQ}_{\ell})=0$
		\item $\soundness(\Pi)=\soundness(\mathsf{EQ}_{\ell})$
	\end{itemize}
\end{lemma}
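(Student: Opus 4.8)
The plan is to extract, from the description of $\double$, exactly the part of the protocol that handles outer-block edges, and observe that it is a black-box reduction to equality. First I would set up the same combinatorial scaffolding used in $\double$: the contracted graph $G_{\text{con}}$ (blocks contracted to nodes, multiplicities and loops removed) is planar hence $5$-degenerate, so for every block $b$ one fixes an accountability set $D(b)$ with $|D(b)| \le 5$ such that every outer-block edge has an endpoint in a block accountable for it; this partitions $D(b)$ into $D^-(b)$ and $D^+(b)$ according to block position along $H$. In the first interaction round the prover writes on each block $b$ (in the distributed encoding of Section \ref{section:arithmetic}, using that $|b| = \ell \ge \log n$ so each position fits with one bit per node) the value $\pos(b)$ together with the constant-size lists $S^-(b)$ and $S^+(b)$ of neighbor positions, plus $O(1)$-bit GT proofs (Lemma \ref{lemma:greater-than}) certifying $\pos(b) > s^-_i(b)$ and $\pos(b) < s^+_i(b)$, and for each outer-block edge $e$ the constant-size fields $\mathtt{acc}(e)$ (one bit) and $\mathtt{ind}(e)$. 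After this single round, verifying validity of $F$ reduces to checking, for every $v$ and every $u \in N^{\text{acc}}_{\text{in}}(v)$ (resp.\ $N^{\text{acc}}_{\text{out}}(v)$), the equality $\sigma^-_{\mathtt{ind}(u,v)}(b_v) = \hatpos(b_u)$ (resp.\ the $+$ version) — a constant number of equality instances of size $\ell$ per block, whose paths are the blocks themselves (adjacent along $H$-edges when $b_u, b_v$ are neighboring blocks, with a marked connecting edge as required by the equality definition).

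Next I would define $\Pi$ as: run the first interaction round above, then invoke $\mathsf{EQ}_\ell$ in parallel on every one of the (constant-per-block) equality instances, each node rejecting if any incident GT proof fails, any incident $\mathsf{EQ}_\ell$ copy rejects, or the bookkeeping on $\mathtt{acc}$, $\mathtt{ind}$ is locally inconsistent. The round count is one prover round followed by $\rounds(\mathsf{EQ}_\ell)$ rounds; since a DIP must end with a prover round, if $\rounds(\mathsf{EQ}_\ell)$ is even we absorb the extra leading round and if it is odd we pad by one, giving $\rounds(\Pi) = \rounds(\mathsf{EQ}_\ell) + 1 - (\rounds(\mathsf{EQ}_\ell) \bmod 2)$; edge-labels are converted to node-labels via Lemma \ref{lemma:edge-labels} at $O(1)$ overhead, so $\size(\Pi) = O(\size(\mathsf{EQ}_\ell))$. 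Perfect completeness is inherited from $\mathsf{EQ}_\ell$ and the GT scheme: an honest prover writes true positions, all GT and equality checks pass. For soundness I would reuse Observations \ref{observation:double-log-honorable-to-lying} and \ref{observation:double-log-lying-to-rejection} verbatim — if $F$ is not valid then there is a violating outer-block edge, hence (by the cycle argument) any $\hatpos$-assignment admits a lying edge $e$ with $\sigma^\bullet_i(b_u) \ne \hatpos(b_v)$ on a pair of blocks carrying an $\mathsf{EQ}_\ell$ instance, so that instance rejects except with probability $\soundness(\mathsf{EQ}_\ell)$; taking a union-free single-instance argument (one bad instance suffices to reject) gives $\soundness(\Pi) = \soundness(\mathsf{EQ}_\ell)$.

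The one genuinely delicate point — and the place I would spend the most care — is the equality-instance geometry: the equality task of Section \ref{section:arithmetic} demands two node-disjoint length-$\ell$ paths joined by a single marked edge, whereas here the "second path" is the neighbor block $b_u$, which may be accountable for many edges and may be non-adjacent to $b_v$ in $G$ only through edges that are not in $H$. I would handle this by letting the prover designate, for each such equality instance, a representative endpoint in $b_v$ and in $b_u$ and marking the incident outer-block edge $e$ itself as the connecting edge $e^*$ (so both endpoints can identify it from $\mathtt{acc}(e)$); since a block participates in only $O(1)$ such instances, each node carries only $O(1)$ parallel copies of $\mathsf{EQ}_\ell$ and the per-round size stays $O(\size(\mathsf{EQ}_\ell))$. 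A secondary subtlety is that several instances may share a block and hence share the encoding of $\hatpos(b_u)$; this is fine because the encoding is written once and read by all relevant checks, and a dishonest prover writing inconsistent copies is caught either by a GT check or by an equality check. Everything else is routine assembly of the pieces already proved in $\double$.
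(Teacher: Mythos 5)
Your proposal matches the paper's own treatment: the lemma is explicitly stated as ``derived from the analysis of $\double$,'' and its proof is exactly the outer-block sub-protocol you reconstruct (accountability sets $D(b)$ from $5$-degeneracy, one prover round writing $\pos(b)$, $S^{\pm}(b)$, the GT proofs and the edge fields $\mathtt{acc},\mathtt{ind}$, then parallel $\mathsf{EQ}_{\ell}$ invocations with the outer-block edge serving as the marked connecting edge), with the paper's round-parity remark and Observations \ref{observation:double-log-honorable-to-lying}--\ref{observation:double-log-lying-to-rejection} supplying the round count and the soundness bound just as you argue. One small wording slip: your even/odd cases for merging the leading prover round are transposed relative to the formula you correctly state --- the merge happens when $\rounds(\mathsf{EQ}_{\ell})$ is odd, i.e., when $\mathsf{EQ}_{\ell}$ itself begins with a prover interaction.
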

Notice that $\rounds(\Pi)$ depends on the parity of $\rounds(\mathsf{EQ}_{\ell})$. This is because if $\mathsf{EQ}_{\ell}$ starts with a verifier-interaction (i.e., $\rounds(\mathsf{EQ}_{\ell})$ is even), then $\Pi$ needs an additional round for prover-interaction (for instance, this is the case in the $\double$ protocol). In contrast, if  $\mathsf{EQ}_{\ell}$ starts with a prover-interaction, then the first round of $\mathsf{EQ}_{\ell}$ can be merged with the interaction that produces the equality instances.

\subsection{The Self-Reducibility of Equality (with Gaps)}\label{section:self-reduction}
In this section, we present a self-reduction for a variation of equality to which we refer as \emph{$(g,\ell)$-equality}. This variation differs from the standard equality task strictly by the manner in which a bitstring $\beta$ is encoded on a path $Q$. Specifically, the task is defined such that $\beta$ is of length $\ell$ and $Q$ is of size $g\cdot \ell$ for two integers $g,\ell>0$. The encoding is then defined such that $\beta$ is written on $Q$ such that the bit $\beta[i]$ is written on the $((i-1)\cdot g+1)$-th leftmost node of $Q$. In other words, the bitstrings are encoded such that every pair of consecutive bits are separated by $g-1$ nodes. 

Based on this encoding paradigm, the $(g,\ell)$-equality task is defined similarly to the standard equality tasks. That is, the goal is to test the equality between two bitstrings $\alpha,\alpha'\in \{0,1\}^{\ell}$ which are encoded to an ordered pair $(P,P')$ of node-disjoint paths of size $g\cdot\ell$ according to the mechanism described above.
\begin{lemma}\label{lemma:self-reduction}
	There exist two constants $c_{1},c_{2}>0$ for which the following holds. Let $g,\ell>0$ be two integers and suppose that there exists a $(g,c_{1} \log \ell)$-equality protocol $\Pi$ with $\completeness(\Pi)=0$. Then, there exists a $(c_{2}\cdot g\cdot \log ^{2} \ell,\ell)$-equality protocol $\Pi'$ such that: 
	\begin{itemize}
		\item $\rounds(\Pi')=\rounds(\Pi)+3-(\rounds(\Pi)\bmod 2)$
		\item $\size(\Pi')=O(\size(\Pi))$
		\item $\completeness(\Pi')=0$
		\item $\soundness(\Pi')\leq \soundness(\Pi)+1/\ell$
	\end{itemize}
\end{lemma}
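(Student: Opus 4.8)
The plan is to implement the self-reduction exactly along the lines sketched in the Technical Overview, turning a single $(c_2\cdot g\cdot\log^2\ell,\ell)$-equality instance into a collection of $(g,c_1\log\ell)$-equality instances (one per "segment"), plus a bounded number of applications of the arithmetic proof-labeling schemes from Section~\ref{section:arithmetic}. Fix a prime $q$ in the range $[\ell^2,2\ell^2]$ (which exists by Bertrand's postulate), and recall that $\Phi_q(x;s)=\sum_i s[i]x^i$ over $\mathbb{F}_q$. The first step is: the endpoint $v^\ast$ of $e^\ast$ draws a uniform $r\in\{0,\dots,q-1\}$ and sends it to the prover; this costs one verifier-interaction round. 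By Lemma~\ref{lemma:poly-identity}, if $\alpha\neq\alpha'$ then $\Phi_q(r;\alpha)=\Phi_q(r;\alpha')$ with probability at most $\ell/q\le 1/\ell$, so it suffices to (a) have the prover encode $\Phi_q(r;\alpha)$ near $u^\ast$ and $\Phi_q(r;\alpha')$ near $v^\ast$ and test their equality, and (b) certify that these two encoded values really are the claimed polynomial evaluations. Part (a) is a single equality instance on bitstrings of length $O(\log q)=O(\log\ell)$, i.e.\ a $(g,c_1\log\ell)$-equality instance once we choose $c_1$ large enough.

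The bulk of the work is part~(b), the polynomial-evaluation verification, carried out symmetrically on $P$ (for $\alpha$) and $P'$ (for $\alpha'$); I describe $P$. Partition $P$ into $\ell$ \emph{segments} $S_1,\dots,S_\ell$, where $S_i$ consists of $c_2 g\log^2\ell$ consecutive nodes starting at the node holding $\alpha[i]$ (so $S_i$ has room for a constant number of length-$O(\log\ell)$ $(g,\cdot)$-equality sub-instances, plus room for the addition/multiplication proof-labeling schemes, which by Lemmas~\ref{lemma:addition-mod-p}, \ref{lemma:multiplication-to-equality}, \ref{lemma:multiplication-mod-p} need a path of length $\Theta(\log^2\ell)$ scaled by $g$ to host a multiplication on $O(\log\ell)$-bit numbers). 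On each $S_i$ the prover writes (in the $(g,\cdot)$-encoding) the values $q$, $r$, $r^{i-1}\bmod q$, $r^i\bmod q$, $\Phi_q(r;\alpha[1..i-1])$, and $\Phi_q(r;\alpha[1..i])$. The verifier then checks three kinds of conditions: (i) \emph{intra-segment arithmetic}: $r^i\equiv r\cdot r^{i-1}\pmod q$ via modular multiplication (which reduces to a constant number of $(g,O(\log\ell))$-equality instances inside $S_i$, using that $S_i$ is long enough that each node participates in only one such multiplication proof), and $\Phi_q(r;\alpha[1..i])\equiv \Phi_q(r;\alpha[1..i-1])+\alpha[i]\cdot(r^i\bmod q)\pmod q$ via modular addition, where the bit $\alpha[i]$ sits at the left end of $S_i$; (ii) \emph{inter-segment consistency}: the copies of $q$, $r$, $r^{i-1}\bmod q$ in $S_i$ agree with $q$, $r$, $r^i\bmod q$ in $S_{i-1}$, and the copy of $\Phi_q(r;\alpha[1..i-1])$ in $S_i$ agrees with $\Phi_q(r;\alpha[1..i])$ in $S_{i-1}$ — each of these is again a $(g,O(\log\ell))$-equality instance, modulo the technical point that $S_{i-1}$ and $S_i$ may not be "adjacent" in the precise sense demanded by the equality definition; and (iii) \emph{base case}: $S_1$ encodes $q$ a prime in $[\ell^2,2\ell^2]$ (checkable locally, or with a short certificate), $r$ equal to the value $v^\ast$ sampled (checked by $v^\ast$ directly, or propagated), $r^0=1$, $r^1=r$, $\Phi_q(r;\varepsilon)=0$. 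If all checks pass, a telescoping argument shows $\Phi_q(r;\alpha)$ is encoded in $S_\ell$ as claimed, hence plugging into the equality test of part~(a) is correct.

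The round bookkeeping goes as follows: the $r$-sampling is a verifier round; all the prover assignments ($q,r$, powers, partial evaluations, and all arithmetic-scheme certificates) are produced in a single prover round afterwards; then every equality sub-instance — the one from part~(a), the inter-segment ones, and the ones spawned by the modular multiplications — is an instance of $(g,c_1\log\ell)$-equality and they are all run in parallel using $\Pi$. Since the arithmetic proof-labeling schemes are non-interactive, the total is $\rounds(\Pi)$ plus the $r$-round plus at most two rounds of "glue" for aligning parities, giving $\rounds(\Pi)+3-(\rounds(\Pi)\bmod 2)$; the proof size is $O(\size(\Pi))$ because each node lies in a constant number of equality sub-instances and carries a constant number of additional bits; completeness is perfect since all checks hold for the honest prover; and the soundness error is $\soundness(\Pi)+1/\ell$, the extra $1/\ell$ being exactly the Schwartz–Zippel failure probability in part~(a) (when the prover is forced, by the correctness of all the b-checks, to encode the true evaluations of unequal strings). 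The main obstacle I expect is purely organizational rather than conceptual: getting the segment length $c_2 g\log^2\ell$ and the $(g,\cdot)$-encoding to line up so that (1) each node genuinely participates in only $O(1)$ equality sub-instances — this is where the quadratic-in-$\log\ell$ slack and the parameter constraint $\ell'\le\sqrt{\ell/2}$ of Lemma~\ref{lemma:multiplication-to-equality} are used — and (2) the "non-adjacent segments" technicality in the inter-segment checks is discharged cleanly (e.g.\ by having the prover re-encode $S_{i-1}$'s relevant suffix at the right boundary of $S_{i-1}$ so that it abuts $S_i$, turning each consistency check into a bona fide equality instance with a marked connecting edge).
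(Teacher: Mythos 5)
Your construction follows the paper's proof essentially step for step: Schwartz--Zippel fingerprinting with a prime $q=\Theta(\ell^2)$, a partition of $P$ into length-$\Theta(g\log^2\ell)$ clusters carrying $q$, $r$, $r^{i-1}\bmod q$, $r^{i}\bmod q$, $\Phi_q(r;\alpha[1..i-1])$, $\Phi_q(r;\alpha[1..i])$ (plus a global copy of $\Phi_q(r;\alpha)$), intra-cluster arithmetic discharged via the modular addition/multiplication schemes, inter-cluster consistency and the final check at $e^{*}$ discharged via parallel calls to $\Pi$, and the same round/soundness bookkeeping. Two points, however, are genuine gaps rather than organizational details. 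First, you have the single node $v^{*}$ draw $r\in\{0,\dots,q-1\}$ and send it to the prover. That is a verifier message of $\Theta(\log q)=\Theta(\log\ell)$ bits from one node, and since the proof size counts messages in both directions, this already forces $\size(\Pi')=\Omega(\log\ell)$ — which violates $\size(\Pi')=O(\size(\Pi))$ precisely in the regime where the lemma is used recursively with $\size(\Pi)=O(1)$. The paper avoids this by sampling $r$ in a distributed manner: each of the $\Theta(\log\ell)$ nodes participating in the encoding of $q$ draws a single bit, so $r$ is uniform over $\{0,\dots,2^{\lfloor\log q\rfloor}-1\}$ (which costs a factor of $2$ in the Schwartz--Zippel bound, absorbed by choosing $q\in[2\ell^2,3\ell^2]$), and the sampled $r$ is then carried over to $P'$ by the prover and certified with another equality instance.

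Second, your base case ``(iii)'' — that $q$ is a prime in the correct range and that $r$ matches the sampled value — is not ``checkable locally, or with a short certificate'' in any obvious sense: $q$ is spread over $\Theta(\log\ell)$ nodes, so no node can test primality or even read $q$, and this is where a nontrivial chunk of the paper's argument lives. The paper certifies the range by forcing the encoding of $q$ to occupy exactly $2\log\ell+1$ or $2\log\ell+2$ non-leading-zero positions (using per-cluster counters $i-1,i-2$ verified by the addition scheme and equality between neighboring clusters), and certifies primality by writing in cluster $i$ the $i$-th odd number $2i-1$ together with its inverse modulo $q$ and proving $ (2i-1)\cdot(2i-1)^{-1}\equiv 1 \bmod q$ via the modular-multiplication-to-equality reduction; since every odd number below $2\ell>\sqrt{q}$ appears in some cluster and oddness of $q$ is checked directly, this rules out all nontrivial divisors. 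Without some such distributed primality certificate (and a distributed consistency check tying the encoded $r$ to the actually sampled bits), a cheating prover could plant a composite or out-of-range $q$, or a self-serving $r$, and the Schwartz--Zippel soundness bound would not apply. Everything else in your write-up, including the parity-dependent round count and the $\soundness(\Pi)+1/\ell$ union bound, matches the paper's argument.
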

For the sake of clear presentation, we keep the constants $c_{1}$ and $c_{2}$ implicit in the protocol's description. The proof of Lemma \ref{lemma:self-reduction} is presented in two stages. First, in Section \ref{section:simplified} we present a protocol that operates under some simplifying assumptions. We emphasize that the simplifying assumptions are placed strictly to avoid a cumbersome description. In Appendix \ref{section:lifting}, we present the modifications required in order to lift the simplifying assumptions.

We note that the guarantees of the protocol presented in Section \ref{section:simplified} will be slightly better than those stated in Lemma \ref{lemma:self-reduction}. In particular, the number of rounds in the described protocol will be $\rounds(\Pi)+1-(\rounds(\Pi)\bmod 2)$ and the soundness error will be $\soundness(\Pi)+1/2\ell$. The number of interaction rounds and soundness error that are stated in Lemma \ref{lemma:self-reduction} will become necessary when we lift the simplifying assumptions in Appendix \ref{section:lifting}.

\subsubsection{A Protocol with Simplifying Assumptions}\label{section:simplified}
The first simplifying assumption we make in the current section is that $g=1$. That is, we consider an instance of $(c_{2}\log ^{2} \ell,\ell)$-equality on two node-disjoint paths $P,P'$ of size $c_{2}\cdot \log ^{2} \ell\cdot \ell$. In addition, we assume that two integers  $q$ and $r$ are encoded to the $c_{1}\log \ell$ leftmost nodes of $P$ and $P'$ in a distributed manner, where $q$ is a prime number in the range $[2\ell^{2},3\ell^{2}]$ and $r$ is a random value drawn u.a.r.\ from $\{0,\dots, q-1\}$.

Let $\alpha$ and $\alpha'$ denote the $\ell$-length bitstrings encoded to $P$ and $P'$, respectively, in accordance with the formulation of $(c_{2}\log ^{2} \ell,\ell)$-equality. The idea of the protocol is simple. The prover seeks to assign the nodes of $P$ and $P'$ with the values $\Phi_{q}(r;\alpha)$ and $\Phi_{q}(r;\alpha')$ in a distributed manner and then use $\Pi$ to prove that the bitstrings representing $\Phi_{q}(r;\alpha)$ and $\Phi_{q}(r;\alpha')$ are equal. The main objective of the protocol is therefore to have the prover provide proof that the polynomials were computed correctly. As we will show, this task can be reduced to equality as well.

The protocol starts by having the prover divide both $P$ and $P'$ into \emph{clusters} of size $c_{2}\log ^{2} \ell$. The clusters consist of consecutive path nodes such that the $i$-th cluster starts from the $(c_{2}\log ^{2} \ell\cdot(i-1)+1)$-th node of the path and finishes at the $(c_{2}\log ^{2} \ell\cdot i)$-th node of the path. Notice that the clusters are defined such that for each $1\leq i\leq \ell$, the bits $\alpha[i]$ and $\alpha'[i]$ are written on the first node of the $i$-th cluster in $P$ and $P'$, respectively. Each cluster is further partitioned into \emph{segments} of size $c_{1}\log \ell$. Similarly to the cluster-partition, each segment consists of $c_{1}\log \ell$ consecutive cluster-nodes. As part of the label assignment (which is described in full below), the prover informs each node whether it is the leftmost node of a segment/cluster. This only requires a constant number of bits. We now describe the prover's interaction as well as the verifier's verification process on the path $P$ (the description for $P'$ is similar).

For each $1<i\leq\ell$, the prover assigns the following bitstrings to every segment of the $i$-th cluster of $P$: $w_{1}(i)=q$;  $w_{2}(i)=r$;  $w_{3}(i)=r^{i-1}\bmod q$;  $w_{4}(i)=r^{i}\bmod q$;  $w_{5}(i)=\Phi_{q}(r;\alpha [1,\dots, i-1])$;  $w_{6}(i)=\Phi_{q}(r;\alpha [1,\dots, i])$; and $w_{7}(i)=\Phi_{q}(r;\alpha)$. The values $w_{j}(i)$ are assigned to every segment of cluster $i$ in a distributed manner such that every node receives a bit for each $w_{j}(i)$ (and $7$ bits in total). For the first cluster, the prover only assigns the values $w_{1}(1)=q,w_{2}(1)=w_{4}(1)=r,w_{6}(1)=\alpha[1]\cdot r,w_{7}(1)=\Phi_{q}(r;\alpha)$. Notice that due to the simplifying assumptions, the verifier at the first cluster can directly check the correctness of all assignment apart from $w_{7}(1)$.

Given the label assignment, the verifier at each cluster $1\leq i\leq \ell$ seeks to verify \emph{segment-consistency}, i.e., that all segments of cluster $i$ received the exact same values $w_{1}(i),\dots, w_{7}(i)$. In addition, the verifier at each cluster $1< i\leq \ell$ aims to check \emph{cluster-consistency}, i.e., that all of the following equalities are satisfied: (1) $w_{1}(i)=w_{1}(i-1)$; (2) $w_{2}(i)=w_{2}(i-1)$; (3) $w_{7}(i)=w_{7}(i-1)$; (4) $w_{3}(i)=w_{4}(i-1)$; (5) $w_{5}(i)=w_{6}(i-1)$; (6) $w_{6}(i)=w_{5}(i)+\alpha[i]\cdot w_{4}(i)\bmod w_{1}(i)$ and (7) $w_{4}(i)=[w_{3}(i)\cdot w_{2}(i)]\bmod w_{1}(i)$. Notice that conditions (4) and (5) of cluster-consistency are required since $w_{3}(i)$ and $w_{4}(i-1)$ are both claimed to be $r^{i-1}\bmod q$, and $w_{5}(i)$ and $w_{6}(i-1)$ are both claimed to be $\Phi_{q}(r;\alpha [1,\dots, i-1])$. In addition to the above conditions, the verifier at cluster $\ell$ checks that $w_{6}(\ell)=w_{7}(\ell)$ in a direct manner via a bit-by-bit comparison.

The segment-consistency conditions are verified within each cluster $1\leq i\leq \ell$ by an application of the $(1,c_{1}\log \ell)$-equality protocol $\Pi$ between every pair of adjacent segments of cluster $i$ for every assigned bitstring $w_{j}(i)$. Similarly, the cluster-consistency conditions (1)--(5) are verified for each $1<i\leq \ell$ by applications of $\Pi$ between the rightmost segment of cluster $i-1$ and the leftmost segment of cluster $i$.

Verifying condition (6) for each cluster $i$ is simple. If $\alpha[i]=0$, then the verifier can directly check that $w_{5}(i)=w_{6}(i)$. Otherwise, the verifier uses the modular addition scheme of Lemma \ref{lemma:addition-mod-p} to verify that $w_{6}(i)=[w_{5}(i)+ w_{4}(i)]\bmod w_{1}(i)$. As for (7), we note that this is a task of verifying modular multiplication. Since the clusters are of size $c_{2}\log ^{2}\ell$, Lemma \ref{lemma:multiplication-to-equality} implies that multiplication can be reduced to a constant number of $(1,c_{1}\log \ell)$-equality instances between all pairs of adjacent segments in cluster $i$. Furthermore, Lemma  \ref{lemma:multiplication-mod-p} extends this to modular multiplication. Once the equality instances are constructed, the protocol $\Pi$ is used to verify their correctness.

Towards the final verification step, recall that by definition of the equality task, there exists an edge $e^{*}=(u^{*},v^{*})$ between $u^{*}\in P$ and $v^{*}\in P'$. Let $w_{7}(u^{*})$ and $w_{7}(v^{*})$ be the $w_{7}(\cdot)$ values assigned to the segments of $u^{*}$ and $v^{*}$, respectively. Recall that these values are claimed by the prover to be $\Phi_{q}(r;\alpha)$ and $\Phi_{q}(r;\alpha')$, respectively. To complete the protocol, the verifier checks that $w_{7}(u^{*})=w_{7}(v^{*})$ by means of an execution of $\Pi$ between the segments of $u^{*}$ and $v^{*}$. This completes the description of the protocol $\Pi'$. We now analyze its properties.

\paragraph{Complexity.} 
The protocol $\Pi'$ is obtained by a single round of interaction from prover to verifier, followed by an invocation of $\Pi$ on many instances in parallel. Notice that if $\Pi$ begins with a prover interaction (i.e., $\rounds(\Pi)$ is odd), then the prover can merge the first interactions of $\Pi$ and $\Pi'$. This implies that if $\rounds(\Pi)$ is odd, then $\rounds(\Pi')=\rounds (\Pi)$. If $\rounds(\Pi)$ is even, then $\Pi$ begins immediately after the initial prover interaction which means that in this case $\rounds(\Pi')=\rounds (\Pi)+1$. For the proof size, the assignment of $w_{j}(i)$ values to every segment of cluster $i$ assigns each node with a constant number of bits. Then, the protocol is designed such that each segment participates in a constant number of $\Pi$ executions (some of which come from the verification of modular multiplication). Thus, we get $\size(\Pi')=O(\size (\Pi))$.

\paragraph{Correctness.}  
For the completeness of the protocol, suppose that $\alpha=\alpha'$. This means that $\Phi_{q}(r;\alpha)=\Phi_{q}(r;\alpha')$ and it follows that the label assignment produced by the honest prover satisfies all verification conditions. Since the verification conditions are instances of equality whose correctness is verified using $\Pi$, we get that $\completeness(\Pi')=\completeness(\Pi)=0$. 

For soundness, suppose that $\alpha\neq\alpha'$. This means that the probability of the event $\Phi_{q}(r,\alpha)\neq\Phi_{q}(r,\alpha')$ is at least $1-\ell/q>1-1/2\ell$. Let us now condition on this event and bound the probability that the verifier accepts. Recall that in the final verification step $\Pi$ is used to check whether $w_{7}(u^{*})=w_{7}(v^{*})$, where $w_{7}(u^{*})$ is the value assigned to $u^{*}$'s segment under the claim $w_{7}(u^{*})=\Phi_{q}(r,\alpha)$ and $w_{7}(v^{*})$ is the value assigned to $v^{*}$'s segment under the claim that $w'_{7}(v^{*})=\Phi_{q}(r,\alpha')$. Therefore, if the labels assigned by the prover satisfy $w_{7}(u^{*})\neq w_{7}(v^{*})$, then the probability that the verifier accepts the instance is bounded by $\soundness(\Pi)$. 

Let us now assume that $w_{7}(u^{*})= w_{7}(v^{*})$ and note that since we are considering the case where $\Phi_{q}(r,\alpha)\neq\Phi_{q}(r,\alpha')$, this implies that either $w_{7}(u^{*})\neq \Phi_{q}(r,\alpha)$ or $w_{7}(v^{*})\neq \Phi_{q}(r,\alpha')$. W.l.o.g., we assume the former. Now, notice that if all segment-consistency and cluster-consistency conditions are satisfied, then it would imply that $w_{7}(u^{*})=\Phi_{q}(r,\alpha)$. Therefore, the case where $w_{7}(u^{*})\neq \Phi_{q}(r,\alpha)$ implies that there is a verification condition that does not hold in $P$. Since we use $\Pi$ to verify all equality conditions, we get that the probability of the verifier accepting the instance is bounded by $\soundness(\Pi)$. 

To conclude, when conditioning on the event $\Phi_{q}(r,\alpha)\neq\Phi_{q}(r,\alpha')$ the acceptance probability is at most $\soundness(\Pi)$. Since the probability of $\Phi_{q}(r,\alpha)=\Phi_{q}(r,\alpha')$ is bounded by $1/2\ell$, it follows from a union bound argument that $\soundness(\Pi')\leq \soundness(\Pi)+1/2\ell$.


\subsection{The $\iterated$ Protocol}\label{section:recursion}
In this section, we present the $\iterated$ protocol. The properties of the protocol are stated in the following theorem.
\begin{theorem}\label{theorem:iterated-log}
	The $\iterated$ protocol solves LR-sorting in planar graphs. It admits $O(\log ^{*}n)$ interaction rounds, a proof size of $O(1)$, perfect completeness, and a soundness error of $1/8$.
\end{theorem}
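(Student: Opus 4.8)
The plan is to build the $\iterated$ protocol by combining the reduction from outer-block edges to equality (Lemma~\ref{lemma:outer-block-to-equality}) with repeated applications of the self-reduction for $(g,\ell)$-equality (Lemma~\ref{lemma:self-reduction}), while using a ``piggybacking'' strategy to keep the round count linear in $\log^{*}n$ rather than quadratic. First, I would partition the Hamiltonian path $H$ into blocks; as in $\double$ the edges split into inner-block and outer-block edges, and the validity of outer-block edges reduces to a constant number of equality instances per block via Lemma~\ref{lemma:outer-block-to-equality}. To make these instances amenable to the self-reduction, the block size is chosen to be $\Theta(\log n \cdot (\log\log n)^{2})$ (or the appropriate polylog factor) so that the equality instances of size $\log n$ can be reformulated as $(g,\ell)$-equality instances with enough ``gap'' for the prover to encode polynomial evaluations and multiplication proofs. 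Applying Lemma~\ref{lemma:self-reduction} once turns each size-$\ell$ equality instance into a constant number of equality instances of size $O(\log \ell)$; iterating this $\Theta(\log^{*}n)$ times drives the instance size down to $O(1)$, which (by the base case of equality on constant-size instances, solvable with $O(1)$ bits) yields proof size $O(1)$.

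The key difficulty, and the heart of the proof, is controlling two blow-ups that a naive iteration would incur. The first is that each self-reduction replaces one equality instance by a constant number $c>1$ of smaller ones, so after $d=\Theta(\log^{*}n)$ rounds a single segment could carry $c^{d}=2^{\Theta(\log^{*}n)}$ bitstrings, destroying the $O(1)$ proof size. To avoid this I would change the encoding convention: rather than re-encoding all the fresh equality instances locally in each segment, I maintain the invariant that each segment of each ``layer'' (the segment-partition produced by the $i$-th self-reduction) carries only a constant number of bitstrings, and I have the prover directly provide the \emph{propagated} polynomial-evaluation values that link a layer-$i$ equality instance to its corresponding layer-$d$ instance. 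Each layer-$i$ segment originates only $O(1)$ equality instances and the layer-$d$ bitstrings are of constant size, so this costs $O(1)$ bits per layer, i.e.\ $O(\log^{*}n)$ bits of total communication per node. Correctness of these propagated values is then verified bottom-up: for each layer $i$, the nodes of a layer-$i$ segment check consistency of their assigned propagated values against the assignment of their containing layer-$(i+1)$ segment, exactly as the cluster-consistency checks inside a single application of Lemma~\ref{lemma:self-reduction} but now chained across layers. Care is needed because after re-encoding, equality instances may be between non-adjacent segments; the propagated-value mechanism is precisely what lets the endpoints test such equalities directly without needing adjacency.

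The second blow-up is in the round count: the reduction of Lemma~\ref{lemma:outer-block-to-equality} only handles outer-block edges, and recursing into each block to handle its inner-block edges would multiply the $O(\log^{*}n)$ rounds by another $O(\log^{*}n)$. To fix this I would argue that the recursive invocation of the outer-block procedure \emph{inside} each block can be run in parallel with the self-reduction invocations at the top level: running $\double$ (or more precisely the outer-block sub-protocol) within each block during the same rounds in which the self-reduction acts on the whole path only adds a constant number of equality instances to each segment, which does not asymptotically affect the proof size. Unwinding this piggybacking across the full recursion keeps the round complexity at $O(\log^{*}n)$. Assembling these pieces: perfect completeness is inherited from the honest-prover construction (every check is an equality or arithmetic instance with perfect completeness); the soundness error is a union over the $O(\log^{*}n)$ layers, each contributing a $1/\ell$-type error from Lemma~\ref{lemma:self-reduction} plus the base equality error, which a geometric-series argument (the instance sizes shrink doubly-exponentially so the errors are dominated by the first few layers, each of which can be amplified by constant parallel repetition) bounds by $1/8$. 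Finally, feeding this LR-sorting protocol into Lemma~\ref{lemma:sorting-to-planarity} yields the claimed protocols for embedded planarity and planarity, and Lemma~\ref{lemma:edge-labels} justifies the use of edge-labels throughout. The main obstacle I expect is the careful bookkeeping of which equality instance in layer $d$ each layer-$i$ instance maps to under the composition of the polynomial-evaluation propagations, and verifying that the bottom-up consistency checks are both sound and implementable with $O(1)$ bits per layer.
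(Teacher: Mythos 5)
Your proposal follows essentially the same route as the paper: the same block decomposition with the outer-block-to-equality reduction, the same iterated self-reduction with the invariant that each segment carries only a constant number of bitstrings, the same propagated layer-$d$ values verified bottom-up to handle equalities between non-adjacent segments, and the same piggybacking of the inner-block recursion to keep the round count at $O(\log^{*}n)$. One small slip: the per-layer soundness contributions $1/\log^{(j)}n$ are dominated by the \emph{last} few layers (where the instance sizes, and hence the field sizes, are smallest), not the first few, but the geometric-series bound of $1/8$ goes through exactly as you describe and no parallel repetition is needed.
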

As we describe below, the $\iterated$ protocol is obtained from $d=\Theta(\log ^{*}n)$ invocations of the procedure described in Lemma \ref{section:self-reduction}. For the sake of a clear exposition, we present the protocol for a specific $d$ value. However, we note that changing the value of $d$, along with some minuscule adjustments, one can obtain the following generalized result.  
\begin{theorem}\label{theorem:general-iterated-log}
	For any integer $1\leq d\leq \log^{*}n$, there exists a protocol $\Pi_{d}$ that solves LR-sorting in planar graphs such that $\rounds(\Pi_{d})=O(d)$; $\size(\Pi_{d})=O(\log ^{(d+1)}n)$; $\completeness(\Pi_{d})=0$; and $\soundness(\Pi_{d})=O(1/\log ^{(d)}n)$.
\end{theorem}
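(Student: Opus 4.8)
The plan is to establish Theorem~\ref{theorem:general-iterated-log} (which yields Theorem~\ref{theorem:iterated-log} as the special case $d=\Theta(\log^{*}n)$, after a constant amount of error boosting) by composing the block-partition idea of $\double$ with $d$ successive applications of the self-reduction of Lemma~\ref{lemma:self-reduction}, interleaved with the recursion that handles inner-block edges. Given an LR-sorting instance $(G,H)$, the prover first partitions $H$ into blocks whose size is $\Theta(\log n)$ times a $\poly\log n$ slack factor; the slack is there so that, after invoking the reduction of Lemma~\ref{lemma:outer-block-to-equality}, verifying the outer-block edges reduces to a constant number of $(g,\ell)$-equality instances per block with $\ell=\Theta(\log n)$ and gap $g$ large enough to serve as workspace for the first call to Lemma~\ref{lemma:self-reduction}. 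One application of Lemma~\ref{lemma:self-reduction} replaces each such instance by $O(1)$ equality instances on bitstrings of length $\Theta(\log^{(2)}n)$; applying it $d$ times in sequence leaves equality instances of length $\Theta(\log^{(d+1)}n)$, which are of constant length when $d=\Theta(\log^{*}n)$. Each self-reduction costs $O(1)$ extra rounds by Lemma~\ref{lemma:self-reduction}, so the outer-block part uses $O(d)$ rounds, and the base equality test on the length-$\Theta(\log^{(d+1)}n)$ instances (handled, e.g., via Lemma~\ref{lemma:equality}) is the dominant contribution to the proof size, giving the stated $O(\log^{(d+1)}n)$ bound.

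Two obstacles keep the na\"ive version of this plan from meeting the claimed bounds, and removing them is the bulk of the work. First, since each application of Lemma~\ref{lemma:self-reduction} turns one equality instance into $\Theta(1)$ instances, after $d$ layers a single segment could carry $2^{\Theta(d)}$ bitstrings --- for $d=\Theta(\log^{*}n)$ this is $2^{\Theta(\log^{*}n)}$ bits per label, well above the target. Second, Lemma~\ref{lemma:outer-block-to-equality} certifies only outer-block edges; the inner-block edges must be certified by recursing the whole outer-block procedure inside each block, and doing so sequentially multiplies the round count, yielding $\Theta(d^{2})$ rounds.

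I would remove the first obstacle by switching, inside each self-reduction, to an encoding convention that maintains the invariant \emph{``every segment carries only a constant number of bitstrings''} at every layer; after the $d$-th layer this gives segments carrying $O(1)$ bitstrings each, at the cost that the two sides of an equality instance produced at an early layer may land in \emph{non-adjacent} segments of the final layer. To compare non-adjacent endpoints, I would have the prover furnish, for each layer-$i$ segment $S$ and each of the $O(1)$ equality instances originating at $S$, the \emph{propagated value} that the instance takes at the final layer --- i.e., the composition of $d-i$ of the $\Phi$-evaluations prescribed by the parameters of Lemma~\ref{lemma:self-reduction}. Since only $O(1)$ instances originate at each segment and the final-layer bitstrings are short, this is $O(1)$ bits per node per layer, hence $O(d)$ bits per node overall. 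The correctness of the propagated values is checked bottom-up: for each $i\in[d-1]$ a layer-$i$ segment verifies its claimed values against those of its layer-$(i+1)$ ancestor by a single modular polynomial-evaluation check, which reduces to $O(1)$-size equality and modular-arithmetic tasks via Lemmas~\ref{lemma:addition-mod-p}, \ref{lemma:multiplication-to-equality}, \ref{lemma:multiplication-mod-p}, and~\ref{lemma:equality}; once all propagated values are certified, each original equality instance is decided by a direct comparison at the final layer. I would remove the second obstacle by \emph{piggybacking}: run the inner-block recursion inside each block \emph{in parallel} with the self-reduction invocations rather than after them --- exactly as the $O(\log^{(3)}n)$-proof-size example in the technical overview runs $\double$ inside each block in parallel with one self-reduction --- which adds only $O(1)$ equality instances per segment and brings the round count back to $O(d)$.

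Completeness is immediate: the honest prover can fill in every label and every propagated value consistently, and all verification conditions are equality or modular-arithmetic checks with perfect completeness. For soundness, a no-instance of LR-sorting yields, by the analysis of $\double$ (Observations~\ref{observation:double-log-honorable-to-lying} and~\ref{observation:double-log-lying-to-rejection}), a ``lying'' pair of blocks whose position bitstrings differ, hence a false equality instance entering the recursion; a union bound over the $O(d)$ layers --- using the $1/\ell$ loss per layer of Lemma~\ref{lemma:self-reduction}, the Schwartz--Zippel error $O(1/\log^{(i)}n)$ of the layer-$i$ propagated-value checks (Lemma~\ref{lemma:poly-identity}), and the $O(1/\log^{(d+1)}n)$ soundness of the base equality test --- gives total soundness error $O(1/\log^{(d)}n)$, which for $d=\Theta(\log^{*}n)$ is a constant that $O(1)$-fold repetition drives below $1/8$. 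I expect the main obstacle to be the simultaneous bookkeeping of the two fixes: one must design the layered segment structure and the propagated-value certificates so that (i) the ``constant bitstrings per segment'' invariant genuinely survives all $d$ layers, (ii) the bottom-up verification uses $O(1)$ communication per layer and interleaves cleanly with the piggybacked inner-block recursion (in particular, the inner-block recursion at block size $s$ must itself obey the same layered accounting), and (iii) no layer's error blows up under the union bound. Reconciling requirement (i) and the resulting non-adjacency of final-layer equality instances with requirement (ii) --- i.e., enabling local verification without reintroducing adjacency --- is where the real work lies; the remainder is an assembly of the reductions of Sections~\ref{section:equality-discussion} and~\ref{section:self-reduction} together with the arithmetic primitives of Section~\ref{section:arithmetic}.
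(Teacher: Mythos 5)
Your proposal follows essentially the same route as the paper: block-partition plus the reduction of Lemma~\ref{lemma:outer-block-to-equality}, $d$ chained applications of the self-reduction, the interleaved encoding that keeps $O(1)$ bitstrings per segment, prover-supplied propagated (final-layer) values verified bottom-up to handle the resulting non-adjacency, piggybacking the inner-block recursion onto the layer construction to avoid quadratic rounds, and a telescoping union bound over layers for soundness. The plan and both of its key fixes match the paper's construction in Sections~\ref{section:iterated-naive} and~\ref{section:iterated-better}, so I have no substantive corrections.
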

We present the $\iterated$ protocol in two stages. First, a na\"ive protocol that runs for $O((\log ^{*}n)^{2})$ rounds and admits a proof size of $2^{O(\log ^{*}n)}$ is presented in Section \ref{section:iterated-naive}. Then, in Section \ref{section:iterated-better}, we show how the na\"ive protocol can be modified to obtain the properties stated in Theorem \ref{theorem:iterated-log}. 

\subsubsection{A Protocol with $O((\log ^{*}n)^{2})$ Rounds and $2^{O(\log ^{*}n)}$ Proof Size}\label{section:iterated-naive}
Let $(G,H)$ be an instance for the LR-sorting problem which consists of a directed planar graph $G$ on $n$ nodes along with a Hamiltonian path $H$ in $G$. Let $d$ denote the smallest integer for which $\log^{(d)}n\leq 16$ and notice that $d=\Theta(\log ^{*}n)$. For every integer $j\in \{1,\dots , d\}$, define $g_{j}=\prod_{i=j+1}^{d+1}c (\log^{(i)}n)^{2}=c^{d-j}\prod_{i=j+1}^{d+1} (\log^{(i)}n)^{2}$, where $c$ is a constant which is large enough to accommodate the protocol's construction. 

In the beginning of the protocol, the prover divides $H$ into blocks of size $g_{1}\log n$. As described in the $\double$ protocol of Section \ref{section:double-log}, the block construction induces a partition of the edges into inner-block and outer-block. As in $\double$, let us assume that each edge $e\in E$ is assigned with a single bit $\mathtt{class}(e)$ that indicates if it is inner-block or outer-block. We will now focus on the validity of the outer-block edges and later show how to verify the validity of the inner-block edges.

\paragraph{Verifying the validity of outer-block edges.} In the first interaction round, the prover assigns each block with a constant number of equality instances based on the reduction described in Lemma \ref{lemma:outer-block-to-equality} (recall that the explicit details of the reduction are presented in the first interaction round of the $\double$ protocol). A modification that we make is that each of the bitstrings assigned to a block $b$ is encoded as a $(g_{1},\log n)$-equality instance (as defined in Section \ref{section:self-reduction}). It is straightforward to see that Lemma \ref{lemma:outer-block-to-equality} applies for this case as well. 

A small subtlety that we need to address is that as part of the reduction, the prover needs to provide a proof for claims of the form $\alpha>\alpha'$ associated with pairs of bitstrings $\alpha,\alpha'$ assigned to block $b$. This is accomplished in $\double$ with $O(1)$ proof size based on the protocol of Lemma \ref{lemma:greater-than}. Here, we cannot use Lemma \ref{lemma:greater-than} as-is because it assumes that the node associated with the bit $\alpha[i]$ is adjacent to the node associated with the bit $\alpha[i+1]$ for each index $i$. This is no longer the case in instances of $(g_{1},\log n)$-equality. Nevertheless, it is easy to simulate the scheme of Lemma \ref{lemma:greater-than} by having the prover assign the labels associated with $\alpha[i]$ and $\alpha[i+1]$ to all $g_{1}-1$ nodes that are placed between $\alpha[i]$ and $\alpha[i+1]$ in the encoding. This allows each block to simulate the protocol of Lemma \ref{lemma:greater-than} as if the bits $\alpha[i]$ and $\alpha[i+1]$ are assigned to adjacent nodes.

Notice that the $g_{j}$ values are define such that we can use Lemma \ref{lemma:self-reduction} to reduce each $(g_{1},\log n)$-equality to a constant number of $(g_{2},O(\log \log n))$-equality tasks which are defined between pairs of neighboring segments of size $O(\log \log n)$. More generally, we can recursively define the execution of the protocol in \emph{layers} $1,\dots, d$ as follows. The first layer is defined by the block construction and $(g_{1},\log n)$-equality instances described above. Then, for each $1\leq j\leq d-1$, the $(j+1)$-th layer is obtained from the $j$-th layer by an invocation of Lemma \ref{lemma:self-reduction} to reduce each $(g_{j},O(\log ^{(j)}n))$-equality instance of layer $j$ to a constant number of $(g_{j+1},O(\log ^{(j+1)} n))$-equality tasks which are defined between pairs of neighboring segments of size $O(\log ^{(j+1)} n)$. 

Notice that by construction, in the $j$-th layer, segments are of size $O(\log ^{(j)}n)$ and each segment is affiliated with $2^{O(j)}$ bitstrings which are associated with instances of $(g_{j},O(\log ^{(j)}n))$-equality. In particular, segments of the $d$-th layer are of constant size and each segment is encoded with $2^{O(d)}=2^{O(\log ^{*}n)}$ bitstrings which are associated with instances of $(g_{d},O(\log ^{(d)}n))$-equality. Recall that by definition $g_{d}=O(1)$ and $\log ^{(d)}n=O(1)$. Hence, the remaining task of the sub-protocol is to check the correctness of $2^{O(\log ^{*}n)}$ instances of $(O(1),O(1))$-equality for every $d$-th layer segment.

For each $d$-th layer segment $\mathtt{seg}$, let $A(\mathtt{seg})$ be the sequence of bitstrings assigned to it which are associated with equality instances. In the final interaction round, the prover assigns each node $v\in \mathtt{seg}$ with the entire sequence $A(\mathtt{seg})$ along with its position within $\mathtt{seg}$. We note that this assignment allows each node $v\in \mathtt{seg}$ to verify that it received a correct $A(\mathtt{seg})$ assignment. This is done by having $v$ check that it received the same  $A(\mathtt{seg})$ assignment as its segment-neighbors and that its assigned position is consistent with its segment-neighbors. In addition, $v$ checks that the bits assigned to it in the distributed encoding of $A(\mathtt{seg})$ are consistent with its received $A(\mathtt{seg})$ assignment. Following that, the nodes of each segment $\mathtt{seg}$ perform the equality tasks defined in the procedure of Lemma \ref{section:self-reduction} directly by comparing the relevant bitstrings in $A(\mathtt{seg})$ with nodes of neighboring segments. This concludes the sub-protocol for verifying the validity of outer-block edges. We now describe the sub-protocol for inner-block edges.

\paragraph{Verifying the validity of inner-block edges.} The sub-protocol for inner-block edges starts with the leftmost node of each block $b$ sampling a random bitstring $r_{b}$ of length $3$ and sending it to the prover. The prover then assigns each node $v\in b$ with the value $\mathtt{rand}(v)=r_{b}$. Each node $v$ then checks that it received the same $\mathtt{rand}$ value as its block-neighbors and that $\mathtt{rand}(v)=\mathtt{rand}(u)$ for every edge $(u,v)\in E$ that was labeled as inner-block by the $\mathtt{class}(e)$ assignment. 

The sub-protocol is then obtained by means of a recursive invocation of the protocol described so far on all subgraphs $G(b)$ induced by a block $b$. The recursive calls on the subgraphs are executed in parallel. The recursion halts when the size of the evaluated graph becomes $2$ in which case LR-sorting trivially holds. This completes the description of the protocol. We go on to analyze it.

\paragraph{Complexity.} 
The sub-protocol for outer-block edges consists of $d=O(\log ^{*}n)$ layers each of which is obtained by a call to the procedure of Lemma \ref{lemma:self-reduction} which takes $O(1)$ rounds. Therefore, the total number of rounds in the sub-protocol is $O(\log ^{*}n)$. As for the sub-protocol for inner-block edges, first notice that for an $n$-node graph, the block size is defined to be $g_{1}\log n$  where $g_{1}=\prod_{i=j+1}^{d}c (\log^{(i)}n)^{2}<c^d\cdot (\log \log n)^{2d}\leq(c\log \log n)^{2\log ^{*}n}<\log n$. Therefore, the block size can be bounded by $O(\log^{2}n)$ and therefore, it follows that the recursion depth is $O(\log ^{*}n)$. Overall, we get that the protocol runs for $O((\log ^{*}n)^{2})$ rounds. As for proof size, the largest labels assigned during the protocol are the sets $A(\mathtt{seg})$ sent to each segment $\mathtt{seg}$ in the $d$-th layer of the sub-protocol for outer-block edges. Therefore, the proof size is $O(|A(\mathtt{seg})|)=2^{O(d)}=2^{O(\log ^{*}n)}$.

\paragraph{Correctness.}
By the completeness of the $\double$ protocol, if $(G,H)$ is a yes-instance, then all equality instances defined in the first layer for outer-block edges are correct for all invocations of the sub-protocol for outer-block edges. Then, by the completeness in Lemma \ref{lemma:self-reduction}, it follows that all equality instances evaluated in the $d$-th layer are correct. Thus, it follows that the verifier accepts the instance in this case, i.e., the protocol has perfect completeness.

Regarding soundness, first suppose that the prover produces a dishonorable assignment (as defined in the correctness analysis of Section \ref{section:double-log}) for some block-partition defined during the protocol (i.e., either the initial block-partition or one invoked during a recursive step). This means that there is an outer-block edge $(u,v)$ that is labeled as inner-block. Let $b$ and $b'$ denote the block of $u$ and $v$ respectively. Then, the verifier rejects unless $r_{b}=r_{b'}$ which occurs with probability $1/8$. Hence, we now assume that throughout the protocol whenever the prover constructs a block-partition, it provides an honorable assignment. 

Suppose that $e$ is a violating edge and notice that $e$ must be an outer-block edge for some block-partition defined during the protocol. By the assumption, this means that $e$ is (correctly) labeled as outer-block at that stage. Let $G'$ be the graph for which $e$ is an outer-block edge and let $N$ be the number of nodes in $G'$. By Observation \ref{observation:double-log-honorable-to-lying}, the (honorable) label assignment admits a lying edge. As noted in the proof of Observation \ref{observation:double-log-lying-to-rejection}, this implies that there exists an incorrect equality instance in layer $1$ of the execution of the sub-protocol for outer-block edges on $G'$. Furthermore, by Lemma \ref{lemma:self-reduction} we get that if there exists an incorrect equality instance in layer $j$, then the probability that all equality instances are correct in layer $j+1$ is at most $1/\log ^{(j)}N$. Let $d'$ denote the final layer of the execution and recall that $d'$ is the smallest integer for which $\log ^{(d')}N\leq 16$. By a union bound argument, we get that the probability that all equality instances are correct in the $d'$-th layer is bounded from above by $1/\log N+1/\log ^{(2)}N+\dots+1/\log ^{(d'-1)}N$. Now, notice that for every integer $1\leq j\leq d'-1$, it holds that $1/\log^{(j)}N\leq (1/16)\cdot 1/2^{d'-j-1}$. Therefore, we can bound the probability that all equality instances of the $d$-th layer are correct by $(1/16)\cdot \sum_{j=1}^{d'-1}1/2^{d'-j-1}=(1/16)\cdot\sum_{i=0}^{d'-2}1/2^{i}<(1/16)\cdot \sum_{i=0}^{\infty}1/2^{i}=1/8$. This concludes the soundness argument since by construction, the verifier rejects the instance unless all equality instances of the $d$-th layer are correct. 

\subsubsection{A Protocol with $O(\log ^{*}n)$ Rounds and $O(1)$ Proof Size}\label{section:iterated-better}
We present some modifications to the protocol of Section \ref{section:iterated-naive}. As we show, applying these modifications produces a protocol that satisfies the properties stated in Theorem \ref{theorem:iterated-log}. For convenience, we describe a protocol where the final interaction round admits labels of size $\Theta(d)$ and every other round admits constant sized labels. To get to a constant proof size, the prover can break the labels of the final round into $d$ constant sized pieces, and assign these pieces to the nodes over $d$ interaction rounds.

\paragraph{Reducing the proof size.}
Recall that in the sub-protocol for outer-block edges in Section \ref{section:iterated-naive}, we construct $d$ layers in $O(d)$ interaction rounds. For each $1\leq j\leq d-1$, the protocol advances from layer $j$ to layer $j+1$ by an invocation of Lemma \ref{lemma:self-reduction}. This transition exponentially reduces the layer's segment but increases the number of bitstrings encoded to each segment by a constant factor. In this section, we avoid this increase in encoded bitstrings by defining a different encoding convention when advancing from layer $j$ to layer $j+1$. 

Suppose that in layer $j\in [d]$, a segment $S$ receives $k$ bitstrings $\alpha_{0},\dots, \alpha_{k-1}$ as part of the layer encoding. For each $i\in \{0,\dots, k-1\}$, let $\mathcal{T}_{i}$ be the segment-partition obtained from invoking the procedure of Lemma \ref{lemma:self-reduction} on the segment $S$ w.r.t.\ the bitstring $\alpha_{i}$. The segment-sequence of layer $j+1$ that is placed within $S$ is defined as an interleaving of the sequences $\mathcal{T}_{0},\dots, \mathcal{T}_{k-1}$ such that for each $i\in  \{0,\dots, k-1\}$, a segment $T\in\mathcal{T}_{i}$ is immediately followed by a segment $T'\in\mathcal{T}_{i+1\bmod k}$. In other words, instead of a single segment which receives $k\cdot \delta$ bitstrings for some constant $\delta>1$, we have a \emph{batch} $B\in \mathcal{T}_{0}\times \dots\times\mathcal{T}_{k-1}$ of $k$ segments each of which receives $\delta$ bitstrings. We remark that this encoding introduces the following obstacle: it is no longer the case that in every layer $j$ all equalities are defined between neighboring segments. This is because consecutive segments in $\mathcal{T}_{i}$ are no longer adjacent in the graph (as they are separated by other segments in their respective batches). 

Let us make some additional remarks regarding the described encoding. First, we note that by construction, this encoding maintains the invariant that for all layers of the sub-protocol, every segment receives a constant number of bitstrings. Consequently, the size of each batch is constant (since it corresponds to the number of bitstrings assigned to a single segment in the previous layer). We also note that each $j$-th layer segment is associated with a constant number of invocations of Lemma \ref*{lemma:self-reduction}. Since each such invocation is mapped to a distinct segment-sequence of layer $j+1$, we need to increase the value $g_{j}$ by a multiplicative factor $2^{O(d-j)}$ for each $1\leq j\leq d$. Observe that this adjustment can be obtained simply by increasing the constant $c$ in the definition of $g_{j}$ in Section \ref{section:iterated-naive}. 

Towards describing the remainder of the protocol, we establish some additional notations. Recall that when applying the procedure of Lemma \ref{lemma:self-reduction} w.r.t.\ some bitstring $\alpha$, the prover assigns a field $w_{7}(\cdot)$ which corresponds to $\Phi_{q}(r;\alpha)$ to each segment. For each bitstring $\alpha$ assigned to some segment in layer $j\in [d]$, let $D(\alpha)$ denote the bitstring assigned by the prover to a segment in layer $d$ which was obtained by a sequence of $w_{7}(\cdot)$ assignments. Notice that $D(\alpha)=\alpha$ for every bitstring $\alpha$ that was assigned in the $d$-th layer.

Consider a segment $S$ of layer $j\in [d]$. We denote by $A(S)$ the bitstring sequence assigned to $S$ and by $D(S)$ the sequence of $D(\alpha)$ values for bitstrings $\alpha\in A(S)$. For a batch $B=(S_{1},\dots, S_{|B|})$ consisting of $j$-th layer segments, let $D(B)=(D(S_{1}),\dots, D(S_{|B|}))$. Here, for cohesiveness, we define the segments of layer $1$ as the blocks defined in the block-partition, and define each batch of layer $1$ as a single block. For each node $v\in V$ and each $j\in [d]$, let $S_{v}^{j}$ and $B_{v}^{j}$ be $v$'s segment and batch in layer $j$, respectively. The prover assigns each $v\in V$ with the sequences $D(S_{v}^{j})$ and $D(B_{v}^{j})$ for each $j\in [d]$. In addition, the prover provides proof for the correctness of the $D(S_{v}^{j})$ and $D(B_{v}^{j})$ assignments. Let us first present the verification process assuming that all assigned $D(S_{v}^{j})$ and $D(B_{v}^{j})$ values are correct. Following that, we shall explain how the prover proves their correctness.

Suppose that $B=(S_{1},\dots, S_{|B|})$ and $B'=(S'_{1},\dots, S'_{|B'|})$ are a pair of consecutive batches in layer $j\in \{2,\dots, d\}$ that are placed within the same segment of layer $j-1$. Recall that by the encoding convention, $S_{i}$ and $S'_{i}$ are consecutive segments in an invocation of Lemma \ref{lemma:self-reduction} w.r.t.\ some bitstring. Furthermore, recall that the verification process described in Lemma \ref{lemma:self-reduction} tests equality between pairs of bitstrings assigned to neighboring segments. Based on this notion, we say that segments $S_{i}$ and $S'_{i}$ are \emph{$D$-consistent} if $D(\alpha)=D(\alpha')$ for every pair $\alpha\in A(S), \alpha'\in A(S')$ for which an equality condition is defined. We naturally extend the definition to batches and say that the consecutive batches $B=(S_{1},\dots, S_{|B|})$ and $B'=(S'_{1},\dots, S'_{|B'|})$ are $D$-consistent if every pair of segments $S_{i}$ and $S'_{i}$ are $D$-consistent. For the blocks (i.e., the segments/batches of layer $1$), we define $D$-consistency w.r.t.\ the equality conditions defined by Lemma \ref{lemma:outer-block-to-equality}. That is, we say that a pair $b, b'$ of neighboring blocks in $G$ are $D$-consistent if $D(\alpha)=D(\alpha')$ for the pair $\alpha\in A(b),\alpha'\in A(b')$ for which an equality condition is defined.

The goal of the verifier is to check $D$-consistency for every pair of relevant batches $B,B'$ (as defined above) in every layer $j\in [d]$. The main observation here is that $D$-consistency between $B$ and $B'$ can be checked directly if one is given $D(B)$ and $D(B')$. Indeed, in this case it is just a matter of checking equality between all relevant pairs of bitstrings from $D(B)$ and $D(B')$. Recall that each node $v\in B$ (resp., $v'\in B'$) receives $D(B)$ (resp., $D(B')$) and that the batches $B$ and $B'$ are adjacent. Therefore, there exists a node $v\in B$ neighboring on a node $v'\in B'$. The node $v$ can see $D(B')$ from the label of $v'$ and thus, $v$ can check that $B$ and $B'$ are $D$-consistent.

We go on to explain how the prover proves the correctness of the assignments of $D(S_{v}^{j})$ and $D(B_{v}^{j})$ for each node $v\in V$ and layer $j\in [d]$. To provide the proof, the prover also assigns to each $v\in V$ the position of $S_{v}^{j}$ in the batch $B_{v}^{j}$ for all $j\in [d]$. Moreover, the prover provides $v$ with its position in the $d$-th layer segment $S_{v}^{d}$. Given this assignment, each node $v\in V$ checks that the assignment $D(S_{v}^{j})$ (resp., $D(B_{v}^{j})$) is identical to its segment-neighbors (resp., batch-neighbors). Each node $v\in V$ can also check the correctness of the assigned positions based on standard checks with its segment/batch neighbors. Additionally, $v$ checks that the assignment of $D(B_{v}^{j})$ is compatible with the assignment of $D(S_{v}^{j})$ and the position of $S_{v}^{j}$ in  $B_{v}^{j}$. 

It remains to describe how each node $v\in V$ verifies the correctness of the assignment of the sequences $D(S_{v}^{j})$. First, in the case of $D(S_{v}^{d})$, since $v$ also receives its position in $S_{v}^{d}$, this can be done in a straightforward manner (similarly to the verification of the $A(\mathtt{seg})$ assignment in Section \ref{section:iterated-naive}). For segments of layer $j<d$, we observe that the verifier can use the $D(S_{v}^{j+1})$ assignments to verify the correctness of the $D(S_{v}^{j})$ assignments. More concretely, notice that if a bitstring $\alpha'$ is obtained in layer $j+1$ as a consequence of a $w_{7}(\cdot)$ assignment w.r.t.\ to some bitstring $\alpha$ given in layer $j$, then by definition it should hold that $D(\alpha)=D(\alpha')$. Now, suppose that $S_{v}^{j+1}$ is a segment for which there exists $\alpha'\in A(S_{v}^{j+1})$ that represents a $w_{7}(\cdot)$ assignment w.r.t.\ a bitstring $\alpha\in A(S_{v}^{j})$. Then, $v$ can directly check that the bitstring $D(\alpha)$ in the assignment associated with $D(S_{v}^{j})$ is the same as the bitstring $D(\alpha')$ in the assignment associated with $D(S_{v}^{j+1})$. 

\paragraph{Reducing the number of rounds.} 
The modifications required to reduce the number of interaction rounds are quite simple. Recall that in Section \ref{section:iterated-naive}, the sub-protocol for outer-block edges is invoked recursively within each block resulting in a total of $\Theta(\log ^{*}n)$ invocations (for a total $\Theta((\log^{*}n)^{2})$ interaction rounds in the protocol). Here, we observe that the $j$-th recursive invocations within the blocks can be implemented \emph{in parallel} to the assignments of the $j$-th layer in the sub-protocol for outer-block edges. That is, we augment the sub-protocol for outer-block edges such that when the prover assigns the $j$-th layer, it also produces the block partition and label assignment that corresponds to the $j$-th recursive invocation of the sub-protocol. By construction, this only adds a constant number of equality instances for each segment and thus, does not (asymptotically) affect the proof size.

\paragraph{Complexity.}
As discussed above, the number of rounds in the modified protocol is $O(d)=O(\log ^{*}n)$. Regarding proof size, first notice that in each round during the construction of the $d$ layers, each node receives a constant number of bits, i.e., the construction of layers admits a constant proof size. 

Let us now consider the components that make up the label assignment after the layer construction. First, each node $v\in V$ is labeled with its position in $S_{v}^{d}$. This can be implemented with constant-size labels since the $d$-th layer segments are of constant size. Let us now consider the label assignment associated with layer $j\in [d]$. Each node $v\in V$ receives the sequences $D(S_{v}^{j})$ and $D(B_{v}^{j})$ as well as the position of $S_{v}^{j}$ in $B_{v}^{j}$. For the position assignment, recall that each batch consists of a constant number of segments. Therefore, the position only requires constant size. As for $D(S_{v}^{j})$ and $D(B_{v}^{j})$, first notice that by construction, every $D(\alpha)$ is of constant size (as it is a bitstring assigned in the $d$-th layer). Moreover, the encoding scheme of the layers is defined such that every segment $S_{v}^{j}$ receives a constant number of bitstrings and every batch $B_{v}^{j}$ consists of a constant number of segments. Therefore, for every $j\in [d]$, the sequences $D(S_{v}^{j})$ and $D(B_{v}^{j})$ only require a constant number of bits. This gives a total of $O(d)$ communication needed from the prover. Recall that to get a constant proof size, the prover can simply break the labels into $d$ constant-sized pieces and send them to the verifier over $d$ interaction rounds. Overall, this results in the desired $O(d)$ interaction rounds and $O(1)$ proof size.

\paragraph{Correctness.}
The completeness of the protocol is similar to the completeness presented in Section \ref{section:iterated-naive}. We now analyze the soundness. For a pair $\alpha,\alpha'$ of bitstrings assigned to some segment in layer $j\in [d]$, we say that $\alpha$ and $\alpha'$ are \emph{mutually constrained} if the event $D(\alpha)\neq D(\alpha')$ implies that the verifier rejects the instance. Note that for $\alpha$ and $\alpha'$ to be mutually constrained, $D(\alpha)$ and $D(\alpha')$ do not have to be compared directly by the verifier as part of the protocol. Indeed, it could be the case that $\alpha$ and $\alpha'$ are mutually constrained due to transitivity. To prove the soundness of the protocol, we now prove the following lemma.

\begin{lemma}\label{lemma:constrained}
	Let $p_{1},p_{2},\dots, p_{d}$ be a sequence of real values defined such that $p_{d}=0$ and $p_{j}=p_{j+1}+1/\log ^{(j)}n$ for all $1\leq j\leq d-1$. Then, for all $j\in [d]$, if the $j$-th layer admits mutually constrained bitstrings $\alpha,\alpha'$ such that $\alpha\neq\alpha'$, then the verifier rejects the instance with probability at least $1-p_{j}$.
\end{lemma}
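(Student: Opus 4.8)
The plan is to prove the lemma by reverse induction on $j$ (equivalently, by induction on $d-j$), following the template of the soundness analysis of the na\"ive protocol in Section \ref{section:iterated-naive}: a distinct mutually constrained pair in layer $j$ either forces an immediate rejection or, with probability at least $1-1/\log^{(j)}n$, gives rise to a distinct mutually constrained pair in layer $j+1$. Throughout I use that $D(\alpha)=D(\bar\alpha)$ whenever $\bar\alpha$ is the layer-$(j+1)$ bitstring designated as the $w_{7}(\cdot)$-value of a layer-$j$ bitstring $\alpha$ (immediate from the recursive definition of $D$), and that the verifier's layer-$d$ check together with the downward-consistency checks certifies that the prover's assigned $D$-values match the true $D$-values unless the verifier rejects.

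For the base case $j=d$, every layer-$d$ bitstring $\gamma$ has $D(\gamma)=\gamma$, and the verifier checks the assigned $D$-value of $\gamma$ against $\gamma$ itself using the known node positions inside the constant-size segment. Hence if $\alpha\neq\alpha'$ are mutually constrained in layer $d$, then $D(\alpha)=\alpha\neq\alpha'=D(\alpha')$, so by the definition of mutual constraint the verifier rejects; this happens with probability $1=1-p_{d}$.

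For the inductive step, assume the claim for layer $j+1$ and let $\alpha\neq\alpha'$ be mutually constrained in layer $j$. Let $\bar\alpha,\bar{\alpha'}$ be the layer-$(j+1)$ bitstrings designated as the $w_{7}(\cdot)$-values of $\alpha,\alpha'$ in the two self-reductions invoked on them, and let $q,r$ be the prime and the random evaluation point used by the layer-$j$ self-reductions (a single $r$ is drawn per layer and redistributed by the prover, as in $\double$). Since $D(\bar\alpha)=D(\alpha)$ and $D(\bar{\alpha'})=D(\alpha')$ and $\alpha,\alpha'$ are mutually constrained, the pair $\bar\alpha,\bar{\alpha'}$ is mutually constrained in layer $j+1$ as well. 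Let $B$ be the event $\Phi_{q}(r;\alpha)=\Phi_{q}(r;\alpha')$; since $\alpha\neq\alpha'$ and $q\in[2\ell^{2},3\ell^{2}]$ for the layer-$j$ bitstring length $\ell=\Theta(\log^{(j)}n)$, Lemma \ref{lemma:poly-identity} gives $\Pr[B]\le 1/\log^{(j)}n$ for a suitable choice of the constants in the segment sizes. Conditioned on $\neg B$, layer $j+1$ contains a distinct mutually constrained pair or the verifier rejects: if the prover reported both $w_{7}(\cdot)$-values honestly then $\bar\alpha=\Phi_{q}(r;\alpha)\neq\Phi_{q}(r;\alpha')=\bar{\alpha'}$ and we take the pair $\bar\alpha,\bar{\alpha'}$; otherwise the prover misreported at least one $w_{7}(\cdot)$-value, and by the soundness analysis of Lemma \ref{lemma:self-reduction} some segment- or cluster-consistency condition of the offending self-reduction is violated --- if it is the modular-addition condition it is rejected outright by the proof labeling scheme of Lemma \ref{lemma:addition-mod-p}, and otherwise it is an equality condition (the modular-multiplication condition reduces to equality via Lemmas \ref{lemma:multiplication-to-equality} and \ref{lemma:multiplication-mod-p}) between two layer-$(j+1)$ bitstrings that must then differ. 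Applying the induction hypothesis to layer $j+1$ gives $\Pr[\text{reject}\mid\neg B]\ge 1-p_{j+1}$, hence
\[
\Pr[\text{reject}]\ \ge\ \Pr[\neg B]\,(1-p_{j+1})\ \ge\ 1-\Pr[B]-p_{j+1}\ \ge\ 1-\tfrac{1}{\log^{(j)}n}-p_{j+1}\ =\ 1-p_{j},
\]
completing the induction.

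The main obstacle is the inductive step in the case that the prover misreports a polynomial evaluation: one has to run through the full list of segment- and cluster-consistency conditions of Lemma \ref{lemma:self-reduction} and verify that each is either enforced by a zero-error proof labeling scheme (so that a violation causes a deterministic rejection) or is an equality-type condition whose violation yields a genuinely distinct mutually constrained pair one layer down; and one has to set up the conditioning on the randomness of layers $1,\dots,j$ carefully so that the per-layer Schwartz--Zippel errors telescope exactly into $p_{j}=\sum_{i=j}^{d-1}1/\log^{(i)}n$. A secondary point to nail down is that the $w_{7}(\cdot)$-child relation and the resulting $D$-values are well defined (a canonical choice among the segments produced by a self-reduction suffices) and that the verifier's downward- and layer-$d$ consistency checks indeed certify that the assigned $D$-values coincide with the true ones --- which is what makes ``mutually constrained'', and its closure under transitivity, the right notion to induct on.
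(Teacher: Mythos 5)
Your proposal is correct and follows essentially the same route as the paper's proof: reverse induction on the layer index, with the base case using $D(\gamma)=\gamma$ in layer $d$, and the inductive step conditioning on $\Phi_{q}(r;\alpha)\neq\Phi_{q}(r;\alpha')$ (probability $\geq 1-1/\log^{(j)}n$ by Schwartz--Zippel) and then arguing that either the honestly propagated $w_{7}(\cdot)$-values form a distinct mutually constrained pair in layer $j+1$ or a misreported value triggers a violated consistency condition yielding such a pair, followed by the same telescoping union bound. The paper organizes the case split as $\beta=\beta'$ versus $\beta\neq\beta'$ rather than honest versus dishonest reporting, but the two partitions cover the same ground and the argument is otherwise identical.
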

Before proving the lemma we note that it immediately implies the soundness of the protocol. To see that, first recall that by the soundness analysis of Section \ref{section:iterated-naive}, it follows that if the prover produces a dishonorable assignment at any time during the protocol, then the verifier rejects with probability at least $7/8$. In addition, soundness analysis of Section \ref{section:iterated-naive} implies that any honorable assignment must admit a mutually constrained pair $\alpha\neq \alpha'$ in some layer $j\in[d]$. In this case, Lemma \ref{lemma:constrained} implies that the verifier rejects with probability at least $1-p_{j}$. The soundness follows since $1-p_{j}\geq 7/8$ (as established in the soundness analysis of Section \ref{section:iterated-naive}).

\begin{proof}[Proof of Lemma \ref{lemma:constrained}]
	We prove the lemma by induction on the layers in reverse order. For the base case, note that if $\alpha$ and $\alpha'$ are assigned in the $d$-th layer, then by definition $D(\alpha)=\alpha,D(\alpha')=\alpha'$. Therefore, $\alpha\neq\alpha'$ implies that $D(\alpha)\neq D(\alpha')$ and since $\alpha$ and $\alpha'$ are mutually constrained, it follows that the verifier rejects the instance with probability $1$.
	
	We now consider layer $j<d$ and assume that the lemma holds for all layers $j'>j$. Recall that the $w_{7}(\cdot)$ assignments for $\alpha$ and $\alpha'$ correspond to the polynomial values $\Phi_{q}(r;\alpha)$ and $\Phi_{q}(r;\alpha')$, respectively. Since $\alpha\neq \alpha'$, we get that the probability of $\Phi_{q}(r;\alpha)=\Phi_{q}(r;\alpha')$ is at most $1/\log ^{(j)}n$. Let us condition on the event that $\Phi_{q}(r;\alpha)\neq \Phi_{q}(r;\alpha')$. Let $\beta$ and $\beta'$ be bitstrings assigned in the $(j+1)$-th layer as an encoding of the $w_{7}(\cdot)$ assignment associated with $\alpha$ and $\alpha'$, respectively. If the prover assigns $\beta=\beta'$, then this must mean that either $\beta\neq \Phi_{q}(r;\alpha)$ or $\beta'\neq \Phi_{q}(r;\alpha')$. In either case, it follows from the soundness analysis presented in Section \ref{section:self-reduction} that there exist two mutually constrained bitstrings $\gamma,\gamma'$ assigned in the $(j+1)$-th layer such that $\gamma\neq \gamma'$. Now consider the case where $\beta\neq \beta'$. Notice that by definition, it holds that $D(\beta)=D(\alpha)$ and $D(\beta')=D(\alpha')$. Since $\alpha$ and $\alpha'$ are mutually constrained, so are $\beta$ and $\beta'$. To conclude, we get that the probability that all pairs of mutually constrained bitstrings in the $(j+1)$-th layer are equal is bounded from above by $1/\log ^{(j)}n$. By the induction hypothesis, this means that the verifier rejects the instance with probability at least $1-1/\log ^{(j)}n-p_{j+1}=1-p_j$.
\end{proof}


	\clearpage
	\bibliographystyle{alpha}
	
	\bibliography{references}

\newcommand{\etalchar}[1]{$^{#1}$}
\begin{thebibliography}{FFM{\etalchar{+}}23}

\bibitem[Bab85]{Babai85}
L{\'{a}}szl{\'{o}} Babai.
\newblock Trading group theory for randomness.
\newblock In Robert Sedgewick, editor, {\em Proceedings of the 17th Annual
  {ACM} Symposium on Theory of Computing, May 6-8, 1985, Providence, Rhode
  Island, {USA}}, pages 421--429. {ACM}, 1985.

\bibitem[BKO22]{BickKO22}
Aviv Bick, Gillat Kol, and Rotem Oshman.
\newblock Distributed zero-knowledge proofs over networks.
\newblock In Joseph~(Seffi) Naor and Niv Buchbinder, editors, {\em Proceedings
  of the 2022 {ACM-SIAM} Symposium on Discrete Algorithms, {SODA} 2022, Virtual
  Conference / Alexandria, VA, USA, January 9 - 12, 2022}, pages 2426--2458.
  {SIAM}, 2022.

\bibitem[FFM{\etalchar{+}}21]{FeuilloleyFMRRT21}
Laurent Feuilloley, Pierre Fraigniaud, Pedro Montealegre, Ivan Rapaport,
  {\'{E}}ric R{\'{e}}mila, and Ioan Todinca.
\newblock Compact distributed certification of planar graphs.
\newblock {\em Algorithmica}, 83(7):2215--2244, 2021.

\bibitem[FFM{\etalchar{+}}23]{FeuilloleyF0RRT23}
Laurent Feuilloley, Pierre Fraigniaud, Pedro Montealegre, Ivan Rapaport,
  {\'{E}}ric R{\'{e}}mila, and Ioan Todinca.
\newblock Local certification of graphs with bounded genus.
\newblock {\em Discret. Appl. Math.}, 325:9--36, 2023.

\bibitem[GH16]{GhaffariH16}
Mohsen Ghaffari and Bernhard Haeupler.
\newblock Distributed algorithms for planar networks {I:} planar embedding.
\newblock In George Giakkoupis, editor, {\em Proceedings of the 2016 {ACM}
  Symposium on Principles of Distributed Computing, {PODC} 2016, Chicago, IL,
  USA, July 25-28, 2016}, pages 29--38. {ACM}, 2016.

\bibitem[GMR89]{GoldwasserMR89}
Shafi Goldwasser, Silvio Micali, and Charles Rackoff.
\newblock The knowledge complexity of interactive proof systems.
\newblock {\em {SIAM} J. Comput.}, 18(1):186--208, 1989.

\bibitem[GP25]{GilP2025}
Yuval Gil and Merav Parter.
\newblock New distributed interactive proofs for planarity: {A} matter of left
  and right.
\newblock In {\em 39th International Symposium on Distributed Computing (DISC),
  Berlin, Germany}, October 2025.

\bibitem[HT74]{hopcroft1974efficient}
John Hopcroft and Robert Tarjan.
\newblock Efficient planarity testing.
\newblock {\em Journal of the ACM (JACM)}, 21(4):549--568, 1974.

\bibitem[KKP10]{KormanKP10}
Amos Korman, Shay Kutten, and David Peleg.
\newblock Proof labeling schemes.
\newblock {\em Distributed Comput.}, 22(4):215--233, 2010.

\bibitem[KN97]{Kushilevitz-Nisan97}
Eyal Kushilevitz and Noam Nisan.
\newblock {\em Communication complexity}.
\newblock Cambridge University Press, 1997.

\bibitem[KOS18]{kol2018interactive}
Gillat Kol, Rotem Oshman, and Raghuvansh~R Saxena.
\newblock Interactive distributed proofs.
\newblock In {\em Proceedings of the 2018 ACM Symposium on Principles of
  Distributed Computing}, pages 255--264, 2018.

\bibitem[NPY20]{NaorPY20}
Moni Naor, Merav Parter, and Eylon Yogev.
\newblock The power of distributed verifiers in interactive proofs.
\newblock In Shuchi Chawla, editor, {\em Proceedings of the 2020 {ACM-SIAM}
  Symposium on Discrete Algorithms, {SODA} 2020, Salt Lake City, UT, USA,
  January 5-8, 2020}, pages 1096--115. {SIAM}, 2020.

\bibitem[Sch80]{Schwartz80}
Jacob~T. Schwartz.
\newblock Fast probabilistic algorithms for verification of polynomial
  identities.
\newblock {\em J. {ACM}}, 27(4):701--717, 1980.

\bibitem[Zip79]{Zippel79}
Richard Zippel.
\newblock Probabilistic algorithms for sparse polynomials.
\newblock In Edward~W. Ng, editor, {\em Symbolic and Algebraic Computation,
  {EUROSAM} '79, An International Symposiumon Symbolic and Algebraic
  Computation, Marseille, France, June 1979, Proceedings}, volume~72 of {\em
  Lecture Notes in Computer Science}, pages 216--226. Springer, 1979.

\end{thebibliography}
	\clearpage
	\appendix
\begin{figure*}[!t]
	{\centering
		\Large{APPENDIX}
		\par}
\end{figure*}

\section{Missing Proofs of Section \ref{section:preliminaries}}\label{section:missing-proofs-prelims}
\begin{proof}[Proof of Lemma \ref{lemma:greater-than}]
	Given a path $P$ and integers $\alpha,\beta$, we describe a non-interactive protocol for GT. Let $i$ be the index of the most significant bit in which $\alpha$ and $\beta$ differ. Let $v$ be the node that receives $\alpha[i]$ and $\beta[i]$ in the encoding. For each node $u\in V$, the prover assigns the value $0$ if $u$ is to the left of $v$ in $P$, and $1$ if $u=v$.
	
	Given the label assignment, each node $u\in V$ operates as follows. If $u$ received the value $0$ from the prover, then it checks that $\alpha(u)=\beta(u)$ and that its left neighbor (if such exists) received the value $0$ as well. If $u$ received the value $1$ from the prover, then it checks that  $\alpha(u)=1,\beta(u)=0$ and that its left neighbor received the value $0$ from the prover. The correctness of the protocol follows from the fact that $\alpha>\beta$ if and only if $\alpha[i]>\beta[i]$ and $\alpha[i']=\beta[i']$ for all $i'>i$.  
\end{proof}

\begin{proof}[Proof of Lemma \ref{lemma:addition}]
	For every $i\in \{0,\dots, \ell\}$, define $\mathtt{carry}(i)$ recursively such that $\mathtt{carry}(0)=0$ and $\mathtt{carry}(i)=\mathds{1}(\alpha[i]+\beta[i]+\mathtt{carry}(i-1)\geq 2)$. That is, $\mathtt{carry}(i)$ represents the carry bit associated with the $i$-th least significant bit in a standard addition process for each $i\in [\ell]$. To implement the protocol, the prover simply assigns $\mathtt{carry}(i)$ to the $i$-th rightmost node for each $i\in [\ell]$. This assignment allows the nodes to verify that indeed $\alpha+\beta=\gamma$.
\end{proof}

\begin{proof}[Proof of Lemma \ref{lemma:equality}]
	Let $q$ be a prime number in the range $[\ell^{2},2\ell^{2}]$ which is known in advance to all nodes of $P$ and $P'$. For each node $v\in P$ (resp., $v\in P'$), let $i_{v}\in [\ell]$ denote $v$'s position in $P$ (resp., $P'$), i.e., $v$ is the $i_{v}$ leftmost node of $P$ (resp., $P'$). Let us denote $bit_{v}=\alpha(v)$ if $v\in P$ and $bit_{v}=\alpha'(v)$ if $v\in P'$.
	
	The protocol begins with the verifier at the leftmost node of $P$ drawing a value $r\in \{0,\dots , q-1\}$ u.a.r.\ and sending it to the prover. Then, for each node $v\in P$ (resp., $v\in P'$), the prover assigns the following values: $z_{1}(v)=r$; $z_{2}(v)=i_{v}$; $z_{3}(v)=\Phi_{q}(r;\alpha[1,\dots, i_{v}])$; and $z_{4}(v)= \Phi_{q}(r;\alpha)$. Notice that if $v\in P'$, then the assignment of the honest prover satisfies $z_{3}(v)=\Phi_{q}(r;\alpha'[1,\dots, i_{v}])=\Phi_{q}(r;\alpha[1,\dots, i_{v}])$ and $z_{4}(v)= \Phi_{q}(r;\alpha')= \Phi_{q}(r;\alpha)$.
	
	Let us now describe the verification process. First, the leftmost node $v\in P$ verifies that $z_{1}(v)=r$ and rejects otherwise. Now, consider some node $v\in P$ (resp., $v\in P'$) which is not the leftmost or rightmost node. Let $u_{1}$ and $u_{2}$ denote $v$'s left and right path-neighbors, respectively. The verifier at $v$ verifies that $z_{1}(v)=z_{1}(u_{1})=z_{1}(u_{2})$ and $z_{4}(v)=z_{4}(u_{1})=z_{4}(u_{2})$. In addition, the verifier checks that $z_{2}(v)=z_{2}(u_{1})+1=z_{2}(u_{2})-1$ and that $z_{3}(v)=(z_{3}(u_{1})+bit_{v}\cdot z_{1}(v)^{z_{2}(v)})\bmod q$. If $v$ is the leftmost node of $P$ or $P'$, then the verifier checks that $z_{2}(v)=1$ and $z_{3}(v)=bit_{v} \cdot z_{1}(v)^{z_{2}(v)} \bmod q$. If $v$ is the rightmost node of $P$ or $P'$, then the verifier checks that $z_{3}(v)=z_{4}(v)=(z_{3}(u)+bit_{v}\cdot z_{1}(v)^{z_{2}(v)})\bmod q$ where $u$ is the left path-neighbor of $v$. Finally, $u^{*}$ and $v^{*}$ verify that $z_{1}(u^{*})=z_{1}(v^{*})$ and $z_{4}(u^{*})=z_{4}(v^{*})$.
	
	The proof size of the protocol is $O(\log \ell)$ since each $z_{i}(v)$ value can be encoded using $O(\log \ell)$ bits. For correctness, we first note that the protocol has perfect completeness. Indeed, if $\alpha=\alpha'$, then $\Phi_{q}(r;\alpha)= \Phi_{q}(r;\alpha')$ and the label assignment produced by the honest prover satisfies all verification conditions. As for soundness, first notice that if all verification conditions above are satisfied, then $z_{4}(v)=\Phi_{q}(r;\alpha)$ for all $v\in P$ and $z_{4}(v)= \Phi_{q}(r;\alpha')$  for all $v\in P'$. This means that the condition $z_{4}(u^{*})=z_{4}(v^{*})$ can hold if and only if $\Phi_{q}(r;\alpha)= \Phi_{q}(r;\alpha')$. By Lemma \ref{lemma:poly-identity}, in the case that $\alpha\neq \alpha'$, this event occurs with probability at most $\ell/q<\ell/\ell^{2}=1/\ell$.
\end{proof}

\begin{proof}[proof of Lemma \ref{lemma:multiplication-to-equality}]
	The idea of the protocol is to follow the basic process for multiplication between two numbers. To that end, the prover divides $P$ into $\ell'$ clusters each of which consists of $2\ell'$ consecutive nodes in $P$. Let $P(i)$ denote the $i$-th cluster for each $i\in [\ell']$. For each $i\in [\ell]$, let $\beta_{R}[i]$ denote the $i$-th \emph{least significant} bit of $\beta$ and let $\beta_{R}[1,\dots, i]$ be the 	number represented by the $i$ least significant bits of $\beta$. 
	Let $\beta_{\text{po$2$}}(i)$ denote the number $\beta_{R}[i]\cdot 2^{i-1}$.
	
	For each $1<i\leq \ell'$, the prover assigns $P(i)$ with the following values: $z_{1}(i)=\alpha$, encoded such that it is shifted $i-1$ places to the left; $z_{2}(i)=\beta$; $z_{3}(i)=\gamma$; $z_{4}(i)=\beta_{R}[1,\dots, i]$; $z_{5}(i)=\beta_{R}[1,\dots, i-1]$; $z_{6}(i)=\beta_{\text{po$2$}}(i)$; $z_{7}(i)=\alpha\cdot\beta_{\text{po$2$}}(i)$; $z_{8}(i)=\alpha\cdot \beta_{R}[1,\dots, i-1]$; and $z_{9}(i)=\alpha\cdot \beta_{R}[1,\dots, i]$. In addition, the prover provides proof that $z_{9}(i)=z_{7}(i)+z_{8}(i)$ for each $i\in[\ell']$ based on the protocol of Lemma \ref{lemma:addition}. Given this assignment, it is straightforward for the verifier to check that $z_{4}(i),z_{5}(i),z_{6}(i)$ are consistent with each other for each $i\in [\ell']$. Moreover, due to the shifting in the encoding of $z_{1}$, the verifier can check that $z_{7}(i)=z_{1}(i)\cdot z_{6}(i)$ in a direct manner. The verifier in the last cluster $P(\ell')$ also checks that $z_{3}(\ell')=z_{9}(\ell')$.
	
	To verify the correctness of the assignment, the verifier seeks to check the following equalities for each $1<i\leq \ell'$: (1) $z_{j}(i-1)=z_{j}(i)$ for every $j\in\{1,2,3\}$; (2) $z_{5}(i)=z_{4}(i-1)$; (3) $z_{8}(i)=z_{9}(i-1)$. To verify these equalities, the equality protocol $\mathsf{EQ}_{2\ell'}$ is applied between every pair $P(i-1),P(i)$ in parallel.
\end{proof}

\section{The Necessity of Interaction Rounds}\label{section:interaction}
We note that in the $\iterated$ protocol of Section \ref{section:recursion}, we invoke the self-reductions one after the other, resulting in $\Theta(\log ^{*}n)$ rounds. A question that may arise is why can't the verifier generate all its randomness in a single round and then have the prover provide the entire transcript of the protocol in response to that randomness? The short answer to this question is that a cheating prover can take advantage of knowing the randomness in advance and engineer a false proof. 

To make things more concrete, let us look at an attempt to achieve this round-reduction for two invocations of the self-reduction (a larger number of invocation can only help the prover). Suppose that we have a no-instance, i.e., there exists an equality instance $(\alpha,\alpha')$ such that $\alpha\neq \alpha'$. Let $r_{1}\in \{0,\dots, q_{1}-1\},r_{2}\in \{0,\dots, q_{2}-1\}$ be the random choices of the verifier where $q_{1}=\poly(|\alpha|)$ and $q_{2}=\poly\log |\alpha|$ are the primes chosen in accordance with the self-reduction. To get the verifier to accept, the prover can employ the following strategy: look for $r'\in \{0,\dots, q_{1}-1\}$ such that: (1) $\Phi_{q_{1}}(r';\alpha')=\Phi_{q_{1}}(r_{1};\alpha)$, and (2) $\Phi_{q_{2}}(r_{2};r')=\Phi_{q_{1}}(r_{2};r_{1})$; if such $r'$ exists, encode it to the label associated with the proof of the first interaction for $\alpha'$. Notice that such label assignment would cause the verifier to accept. Furthermore, the number of $r'$ values that satisfy the first equation can be as large as the degree of $\Phi_{q_{1}}(;\alpha')$, i.e., $|\alpha'|$. The prover can go over all such $r'$ values and try to find one that satisfies the second equation. Since $|\alpha'|$ is exponentially larger than $q_{2}$, it is very likely that such $r'$ exists.  

\section{Missing Proofs of Section \ref{section:double-log}}\label{section:missing-proofs-double}
\begin{proof}[Proof of Lemma \ref{lemma:gp-soundness}]
	If the label assignment is dishonorable, then by definition there is an outer-block edge $e$ whose endpoints belong to different blocks $b,b'$. In this case, it follows from the construction that the verifier rejects unless $r_{b}=r_{b'}$ which occurs with probability $1/2^{\log \log n}=1/\log n$. So, assume that the label assignment given by the prover is honorable. This means that any edge that was labeled as inner-block is in fact an inner-block edge. Therefore, if there is a violating edge $e=(u,v)$ that was labeled as inner-block, then $\mathtt{index}(v)<\mathtt{index}(u)$ and the verifier rejects. 
\end{proof}

\section{Lifting the Simplifying Assumptions of Section \ref{section:simplified}}\label{section:lifting}
In this section, we explain how to design the protocol of Section \ref{section:simplified} without making the simplifying assumptions.

\paragraph{The case of $g>1$:} Extending the protocols to values $g>1$ is simple. Instead of writing bitstrings on segments of $c_{1}\log \ell$ nodes as described in Section \ref{section:simplified}, the prover instead writes each bitstring on a segment of $g\cdot c_{1}\log \ell$ nodes where the $i$-th bit of the string is assigned to the $((i-1)g+1)$-th leftmost node in the segment. This turns every instance of $(1,c_{1}\log \ell)$-equality in Section \ref{section:simplified} into an instance of $(g,c_{1}\log \ell)$-equality which is in accordance with Lemma \ref{lemma:self-reduction}.

\paragraph{Lifting the assumption of $r$'s encoding:} We now lift the assumption that the random variable $r$ is initially written on the $c_{1}\log \ell$ leftmost nodes of $P$ and $P'$. Notice that here we still assume that the prime number $q$ is encoded on these nodes (we explain how to lift this assumption below). Furthermore, assume that $q$ is encoded on $P$ such that all nodes that are not assigned a leading zero are \emph{marked}. Notice that this assumption can be obtained trivially by having the prover assign an additional bit to the labels of the nodes that are used in $q$'s representation. 

A variable $r\in\{0,\dots ,2^{\lfloor\log q\rfloor}-1\}$ is then sampled by having every marked node apart from the leftmost draw a random bit. Notice that $r$ is sampled only by nodes in $P$. Then, the value of $r$ is given to all clusters in $P'$ through the label assignment of the prover. To check that the prover assigned correct $r$ values to the clusters of $P'$, the equality protocol $\Pi$ is executed similarly to the final verification step presented in the protocol of Section \ref{section:simplified}. 

Observe that $r$ is sampled u.a.r.\ from the range $\{0,\dots ,2^{\lfloor\log q\rfloor}-1\}$ rather than the range $\{0,\dots ,q-1\}$ assumed in Section \ref{section:simplified}. We argue that given this sampling process, the soundness error can be bounded by $\soundness(\Pi)+1/\ell$. To see that, suppose that $\alpha\neq \alpha'$. The probability that $\Phi_{q}(r,\alpha)=\Phi_{q}(r,\alpha')$ is equal to the probability that $r$ is a root of the (non-zero) polynomial $\Phi_{q}(\cdot,\alpha)-\Phi_{q}(\cdot,\alpha')$. Since the polynomial is of degree at most $\ell$, it follows that this probability is bounded from above by $\ell/2^{\lfloor\log q\rfloor}<\ell/(q/2)<\ell/\ell^{2}=1/\ell$. The rest of the correctness and complexity analyses remain unaffected.

\paragraph{Lifting the assumption of $q$'s encoding:} The assumption on $q$'s encoding is lifted by having the prover encode $q$ to the leftmost nodes of $P$ and $P'$. Then, the prover seeks to prove the following: (1) $q$ is in the range $[2\ell^{2},3\ell^{2}]$; (2) $q$ is a prime number; and (3) $P$ and $P'$ received the same value $q$. The verification of (3) can be obtained by means of an execution of the equality protocol $\Pi$, similarly to the way $\Pi$ is used above to verify the correctness of the $r$ values assigned to $P'$. 

Recall that as part of the procedure described in Section \ref{section:simplified}, the prover assigns $q$ to every segment $S$. Let $Q(S)$ denote the nodes of $S$ that participate in the encoding of $q$ (i.e., not leading zeros). Furthermore, recall that the equality protocol $\Pi$ is used to check that $Q(S)$ encode the same value for each segment $S$. Thus, we assume w.l.o.g.\ that the size $|Q(S)|$ is identical for every segment $S$ and denote this size by $x$. To provide proof of (1), it is sufficient for the prover to show that either $x=\log (2\ell^{2})=2\log \ell+1$ or $x=\lceil \log (3\ell^{2})\rceil=2\log \ell+2$. We present the proof for the case $x=2\log \ell+1$; it is simple to derive the proof for the case $x=2\log \ell+2$.

For every segment $S$ of cluster $i\in [\ell]$, the prover assigns the values $z_{1}(i)=i-1$ and $z_{2}(i)=i-2$ to $Q(S)$ (for cluster $1$, no $z_{2}$ value is given). The values are encoded such that the bits of $z_{1}(i)$ and $z_{2}(i)$ are assigned only to nodes at even positions of $Q(S)$. The prover also provides a proof that $z_{1}(i)=z_{2}(i)+1$ for every segment of cluster $i$ based on Lemma \ref{lemma:addition}. 

For the verification, each pair of adjacent segments in cluster $i$ use $\Pi$ to check that they received the same assignments $z_{1}(i),z_{2}(i)$. In addition, the rightmost segment of cluster $i-1$ and the leftmost segment of cluster $i$ use $\Pi$ to check that $z_{1}(i-1)=z_{2}(i)$ for every $2\leq i\leq \ell$. Finally, the verifier at each segment of the last cluster $\ell$ checks that the leftmost node that participates in the encoding of $z_{1}(\ell)$ receives the bit $1$. This completes the proof for (1). Indeed, if $x<2\log \ell+1$, then the values $z_{1}(i)$ are encoded on fewer than $\log \ell$ nodes which means that they cannot encode all values in $\{0,\dots, \ell-1\}$. On the other hand, if $x>2\log \ell+1$, then an assignment of $z_{1}(\ell)=\ell-1$ would assign the leftmost participating node with the bit $0$. In either case, we get that any assignment of $z_{1}(i),z_{2}(i)$ values violates some verification conditions and therefore fails with probability at least $1-\soundness(\Pi)$.

We now describe the proof of (2). For each segment of cluster $i\in [\ell]$, the prover assigns the following values: $y_{1}(i)=2i-1$, i.e, $y_{1}(i)$ is the $i$-th odd number; $y_{2}(i)=y_{1}(i-1)=2(i-1)-1$; and $y_{3}(i)$ defined as the inverse of $2i-1$ in $\mathbb{F}_{q}$. The prover also provides proof that $y_{1}(i)=y_{2}(i)+2$ based on Lemma \ref{lemma:addition} for each segment of cluster $i$. Moreover, the prover proves to each cluster $i\in [\ell]$ that $y_{1}(i)\cdot y_{3}(i)\equiv 1 \bmod q$. This is done by reducing modular multiplication to equality tasks between adjacent segments based on Lemmas \ref{lemma:multiplication-mod-p} and \ref{lemma:multiplication-to-equality}. Then, $\Pi$ is used to solve each equality task between adjacent segments. Finally, similarly to before, $\Pi$ is used to check that all segments of cluster $i$ received the same assignment and that $y_{1}(i-1)=y_{2}(i)$ for every $2\leq i\leq \ell$. In addition, the verifier at every segment checks that its assigned value corresponding to $q$ is an odd number. This is done simply by checking that the rightmost node receives the bit $1$ in the $q$-assignment.

We argue that the described assignment provides a proof that the assigned value $q$ is a prime number. To see why, recall that for $q$ to be prime it is sufficient that $q$ is not divisible by any integer $1\leq a\leq \sqrt{q}$. In the described scheme, it is checked that $q$ is not divisible by any even number and that $q$ is not divisible by any odd number which is smaller than $2\ell$. To complete the argument, recall that $2\ell>\sqrt{3}\ell>\sqrt{q}$.

\end{document}